\begin{document}

%\tableofcontents
\newtheorem{defn}{Definition}
\newtheorem{lemma}{Lemma}
\newtheorem{theorem}[lemma]{Theorem}
\newtheorem{corollary}[lemma]{Corollary}
\newcommand{\bx}{\hfill \rule{2mm}{2.5mm}}
\def\nl{\medskip \\ \noindent}
\newcommand{\sat}{\small (2,2)\normalsize {\sc -e}\small 3\normalsize {\sc -sat}}

%% Title and Header Information
%\title{The Hospitals / Residents problem with Couples: Complexity and Integer Programming models}
%%
%\author{P\'eter Bir\'o$^1$ \thanks{Supported by the Hungarian Academy of Sciences under its Momentum Programme (LD-004/2010) and also by OTKA grant no.\ K108673.}  , David F. Manlove$^2$ \thanks{Supported by Engineering and Physical Sciences Research Council grant EP/K010042/1.}, Iain McBride$^2$ \thanks{Supported by a SICSA Prize PhD Studentship.}}
%
%

\title{\bf{The Hospitals / Residents problem with Couples: \\ Complexity and Integer Programming models}}
\author{P\'eter Bir\'o$^{1}$ \thanks{Supported by the Hungarian Academy of Sciences under its Momentum Programme (LD-004/2010) and also by OTKA grant no.\ K108673.} , David F.\ Manlove$^{2}$ \thanks{Supported by grant EP/K010042/1 from the Engineering and Physical Sciences Research Council.} , Iain McBride$^{2}$ \thanks{Supported by a SICSA Prize PhD Studentship}\\ \\ 
\small Institute of Economics, Centre for Economic and Regional Studies, \\ \small Hungarian Academy of Sciences, 1112 Buda\"orsi \'ut 45, Budapest, Hungary.  \\ \small Email {\tt biro.peter@krtk.mta.hu}.\\
\\
\small School of Computing Science, Sir Alwyn Williams Building, \\ \small University of Glasgow, Glasgow G12 8QQ, UK.  \\ \small Email {\tt david.manlove@glasgow.ac.uk, i.mcbride.1@research.gla.ac.uk}. \\
\\ }
%\small School of Computing Science, Sir Alwyn Williams Building, \\ \small University of Glasgow, Glasgow G12 8QQ, UK.  \\ \small Email {\tt i.mcbride.1@research.gla.ac.uk}.}
\date{ }

\maketitle

\begin{abstract}

The Hospitals / Residents problem with Couples ({\sc hrc}) is a generalisation of the classical Hospitals / Resident problem ({\sc hr}) that is important in practical applications because it models the case where couples submit joint preference lists over pairs of (typically geographically close) hospitals. In this report we present new NP-completeness results for the problem of deciding whether a stable matching exists, in highly restricted instances of {\sc hrc}. Further, we present an Integer Programming (IP) model for {\sc hrc} and extend it the case where preference lists can include ties. Also, we describe an empirical study of an IP model for {\sc hrc} and its extension to the case where preference lists can include ties.  This model was applied to randomly generated instances and also real-world instances arising from previous matching runs of the Scottish Foundation Allocation Scheme, used to allocate junior doctors to hospitals in Scotland.

\end{abstract}

\section{Introduction}

\label{section:introduction}

\subsection{The Hospitals / Residents problem.} 

The \emph{Hospitals / Residents problem} ({\sc hr}) is a many-to-one allocation problem. An instance of {\sc hr} consists of two groups of agents -- one containing \emph{hospitals} and one containing \emph{residents}. Every hospital expresses a linear preference over some subset of the residents, its \emph{preference list}. The residents in a hospital's preference list are its \emph{acceptable} partners. Further, every hospital has a \emph{capacity}, $c_j$, the maximum number of posts it has available to match with residents. Every resident expresses a linear preference over some subset of the hospitals, his \emph{acceptable} hospitals. 

The preferences expressed in this fashion are reciprocal: if a resident $r_i$ is acceptable to a hospital $h_j$, then $h_j$ is also acceptable to $r_i$, and vice versa. A many-to-one \emph{matching} between residents and hospitals is sought, which is a set of acceptable resident-hospital pairs such that each resident appears in at most one pair and each hospital $h_j$ at most $c_j$ pairs. If a resident $r_i$ appears in some pair of $M$, $r_i$ is said to be \emph{assigned} in $M$ and \emph{unassigned} otherwise. Any hospital assigned fewer residents than its capacity in some matching $M$ is \emph{under-subscribed} in $M$.

A matching is \emph{stable} if it admits no \emph{blocking pair}. Following the definition used in \cite{GS62}, a blocking pair consists of a mutually acceptable resident-hospital pair $(r, h)$ such that both of the following hold: (i) either $r$ is unassigned, or $r$ prefers $h$ to his assigned hospital; (ii) either $h$ is under-subscribed in the matching, or $h$ prefers $r$ to at least one of its assigned residents. Were such a pair to exist, they could form a private arrangement outside of the matching, undermining its integrity \cite{Rot84}.

It is known that every instance of {\sc hr} admits at least one stable matching and such a matching may be found in time linear in the size of the instance \cite{GS62}. Also, for an arbitrary {\sc hr} instance $I$, any resident that is assigned in one stable matching in $I$ is assigned in all stable matchings in $I$, moreover any hospital that is under-subscribed in some stable matching in $I$ is assigned exactly the same set of residents in every stable matching in $I$ \cite{GS85, Rot86, Rot84}. 

{\sc hr} can be viewed as an abstract model of the matching process involved in a centralised matching scheme such as the National Resident Matching Program (NRMP) \cite{ZZZ5} through which graduating medical students are assigned to hospital posts in the USA. A similar process was used until recently to match medical graduates to Foundation Programme places in Scotland, called the Scottish Foundation Allocation Scheme (SFAS) \cite{Irv98}. Analogous allocation schemes having a similar underlying problem model exist around the world, both in the medical sphere, e.g. in Canada \cite{ZZZ6}, Japan \cite{ZZZ10}, and beyond, e.g. in higher education allocation in Hungary \cite{Bir08}.

\subsubsection{The Hospitals / Residents problem with Couples}

Centralised matching schemes such as the NRMP and the SFAS have had to evolve to accommodate couples who wish to be allocated to (geographically) compatible hospitals. The requirement to take into account the joint preferences of couples has been in place in the NRMP context since 1983 and since 2009 in the case of SFAS. In schemes where the agents may be involved in couples, the underlying allocation problem can modelled by the so-called \emph{Hospitals / Residents problem with Couples} ({\sc hrc}).

As in the case of {\sc hr}, an instance of {\sc hrc} consists of a set of \emph{hospitals} $H$ and a set of \emph{residents} $R$. The residents in $R$ are partitioned into two sets, $S$ and $S^{ \prime}$. The set $S$ consists of \emph{single} residents and the set $S^{\prime }$ consists of those residents involved in \emph{couples}. There is a set $C = \{ (r_i, r_j): r_i, r_j \in S^{\prime} \}$ of \emph{couples} such that each resident in $S^{\prime}$ belongs to exactly one pair in $C$.

Each single resident $r_i \in S$ expresses a linear preference order over his acceptable hospitals. Each pair of residents $(r_i, r_j)\in C$ expresses a joint linear preference order over a subset $A$ of $H \times H$ where $(h_p, h_q)\in A$ represents the joint assignment of $r_i$ to $h_p$ and $r_j$ to $h_q$. The hospital pairs in $A$ represent those joint assignments that are \emph{acceptable} to $(r_i, r_j)$, all other joint assignments being \emph{unacceptable} to $(r_i, r_j)$.

Each hospital $h_j \in H$ expresses a linear preference order over those residents who find $h_j$ acceptable, either as a single resident or as part of a couple. As in the {\sc hr} case, each hospital $h_j \in H$ has a \emph{capacity}, $c_j$. 

A many-to-one \emph{matching} between residents and hospitals is sought, which is defined as for {\sc hr} with the additional restriction that each couple $(r_i, r_j)$ is either jointly unassigned, meaning that both $r_i$ and $r_j$ are unassigned, or jointly assigned to some pair $(h_k, h_l)$ that $(r_i, r_j)$ find acceptable.As in {\sc hr}, we seek a \emph{stable} matching, which guarantees that no resident and hospital, and no couple and pair of hospitals, have an incentive to deviate from their assignments and become assigned to each other.
 
Roth \cite{Rot84} considered stability in the {\sc hrc} context although did not define the concept explicitly. Whilst Gusfield and Irving \cite{GI89} defined stability in {\sc hrc}, their definition neglected to deal with the case that both members of a couple may wish to be assigned to the same hospital. Manlove and McDermid \cite{MM10} extended their definition to deal with this possibility. Henceforth, we refer to Manlove and McDermid's stability definition as \emph{MM-stability}. We  now define this concept formally.

%However, a variety of stability definitions do exist in the {\sc hrc} context \cite{BIS11, GI89, MM10}. Unless explicitly stated otherwise, the definition of stability applied in the work which follows is that described by McDermid and Manlove in \cite{MM10} (MM-stability), shown in Definition \ref{stability:MM}, which expands the earlier definition from Gusfield and Irving \cite{GI89} (GI-stability) by considering also those cases in which a resident pair $(r_i, r_j)$ may express a preference for a hospital pair $(h_k, h_k)$. 

\begin{defn}

A matching $M$ is MM-stable if none of the following holds:  
\begin{enumerate}
\item The matching is blocked by a hospital $h_j$ and a single resident $r_i$, as in the classical HR problem.
\item The matching is blocked by a couple $(r_i , r_j)$ and a hospital $h_k$ such that \emph{either}
\begin{enumerate}
\item[(a)] $(r_i , r_j)$ prefers $(h_k, M(r_j))$ to $(M(r_i), M(r_j))$, and $h_k$ is either under-subscribed in $M$ or prefers $r_i$ to some member of $M(h_k)\backslash \{r_j\}$ \emph{or}
\item[(b)] $(r_i , r_j)$ prefers $(M(r_i), h_k )$ to $(M(r_i), M(r_j))$, and $h_k$ is either under-subscribed in $M$ or prefers $r_j$ to some member of $M(h_k)\backslash \{r_i\}$
\end{enumerate}
\item The matching is blocked by a couple $(r_i, r_j)$ and (not necessarily distinct) hospitals $h_k\neq M(r_i)$, $h_l\neq M(r_j)$; that is, $(r_i, r_j)$ prefers the joint assignment $(h_k, h_l)$ to $(M(r_i), M(r_j))$, and \emph{either}
\begin{enumerate}
\item[(a)] $h_k\neq h_l$, and $h_k$ (respectively $h_l$) is either under-subscribed in $M$ or prefers $r_i$ (respectively $r_j$) to 
at least one of its assigned residents in $M$; \emph{or}
\item[(b)] $h_k=h_l$, and $h_k$ has at least two free posts in $M$, i.e., $c_k-|M(h_k)|\geq 2$; \emph{or}
\item[(c)] $h_k=h_l$, and $h_k$ has one free post in $M$, i.e., $c_k-|M(h_k)|=1$, and $h_k$ prefers at least one of
$r_i,r_j$ to some member of $M(h_k)$; \emph{or}
\item[(d)] $h_k=h_l$, $h_k$ is full in $M$, $h_k$ prefers $r_i$ to some $r_s\in M(h_k)$, and $h_k$ prefers $r_j$ to some
$r_t\in M(h_k)\backslash \{r_s\}$.
\end{enumerate}
\end{enumerate}

%\caption{The Manlove-McDermid Stability definition. \cite{MM10}}
\label{stability:MM}
\end{defn}

\color{black}

The majority of the results in this paper for {\sc hrc} are given in terms of MM-stability. A further stability definition due to Bir\'o et al \cite{BIS11} (henceforth \emph{BIS-stability}) can be applied in contexts where the hospitals rank the residents according to an agreed criterion, such as in the SFAS context where this criterion is represented by a score derived for each resident from their academic performance and their completed application. The rationale behind the BIS stability definition being that in a stable matching $M$, if a resident $r$ is not matched to a hospital $h$ then all of the residents who are matched to $h$ in $M$ are strictly preferable to $r$ under the applied criterion. We now define BIS-stability formally as follows.

\begin{defn}
A matching $M$ is BIS-stable if none of the following holds:  
\begin{enumerate}
\item The matching is blocked by a hospital $h_j$ and a single resident $r_i$, as in the classical HR problem.
\item The matching is blocked by a hospital $h_k$ and a resident $r_i$ who is coupled, say with $r_j$; that is \emph{either}
\begin{enumerate}
	\item[(a)] $(r_i,r_j)$ prefers $(h_k, M(r_j))$ to $(M(r_i), M(r_j))$ and \emph{either} 
			\begin{enumerate}
			\item[(i)] $h_k\neq M(r_j)$ and $h_k$ is either under-subscribed in $M$ or prefers $r_i$ to some member of $M(h_k)$ \emph{or}
			\item[(ii)] $h_k = M(r_j)$ and $h_k$ is either under-subscribed in $M$ or prefers both $r_i$ and $r_j$ to some member of $M(h_k) \setminus \{r_j\}$
			\end{enumerate}
	\item[(b)] $(r_i,r_j)$ prefers $(M(r_i), h_k)$ to $(M(r_i), M(r_j))$ and \emph{either} (i) or (ii) as above adapted to symmetric case
\end{enumerate}
\item The matching is blocked by a couple $(r_i, r_j)$ and (not necessarily distinct) hospitals $h_k\neq M(r_i)$ and $h_l\neq M(r_j)$; that is, $(r_i, r_j)$ prefers the joint assignment $(h_k, h_l)$ to $(M(r_i), M(r_j))$, and \emph{either}
\begin{enumerate}
\item[(a)] $h_k\neq h_l$, and $h_k$ (respectively $h_l$) is either under-subscribed in $M$ or prefers $r_i$ (respectively $r_j$) to at least one of its assignees in $M$; \emph{or}
\item[(b)] $h_k=h_l$, and $h_k$ has at least two free posts in $M$ \emph{or}
\item[(c)] $h_k=h_l$, and $h_k$ has one free post in $M$ and both $r_i$  and $r_j$ are preferred by $h_k$ to some member of $M(h_k)$ \emph{and} 
\item[(d)] $h_k=h_l$, $h_k$ is full in $M$ and \emph{either}
\begin{enumerate} 
\item[(i)] $h_k$ prefers each of $r_i$ and $r_j$ to some $r_p \in M(h_k)$ who is a member of a couple with some $r_q \in M(h_k)$,
\item[(ii)] the least preferred resident among $r_i$ and $r_j$ (according to $h_k$) is preferred by $h_k$ to two members of $M(h_k)$

\end{enumerate}
\end{enumerate}
\end{enumerate}

%\caption{The Bir\'o, Schlotter and Irving Stability Definition.}
\label{stability:BIS}
\end{defn}

It is notable that, for an arbitrary instance $I$ of {\sc hrc} in which hospitals may have a capacity greater than 1, an MM-stable matching need not be BIS-stable and vice versa. The instances described in Section \ref{section:stabilityComparison} demonstrate this unequivocally. In the restriction of {\sc hrc} in which each hospitals has a capacity of 1, BIS-stability and MM-stability are both equivalent to the stability definition from Gusfield and Irving \cite{GI89} since no couple $(r_i, r_j)$ may express a preference for a hospital pair $(h_k, h_k)$.   

The \emph{Hospitals / Residents Problem with Couples and Ties} ({\sc hrct}) is a generalisation of {\sc hrc} in which hospitals (respectively residents) may find some subsets of their acceptable residents (respectively hospitals) equally preferable. Residents (respectively hospitals) that are found equally preferable by a hospital (respectively resident) are \emph{tied} with each other in the preference list of that hospital (respectively resident). The stability definitions in  Definition \ref{stability:MM} and Definition \ref{stability:BIS} remain unchanged in the {\sc hrct} context.

\subsection{Existing algorithmic results for {\sc hrc}.} 

In contrast with {\sc hr}, an instance of {\sc hrc} need not admit a stable matching \cite{Rot84}. Also an instance of {\sc hrc} may admit stable matchings of differing sizes \cite{AC96}. Further, the problem of deciding whether a stable matching exists in an instance of {\sc hrc} is NP-complete, even in the restricted case where there are no single residents and all of the hospitals have only one available post \cite{Ron90, NH90}. 

In many practical applications of {\sc hrc} the residents' preference lists are short. Let $(\alpha, \beta)$-{\sc hrc} denote the restriction of {\sc hrc} in which each single resident's preference list contains at most $\alpha $ hospitals, each couple's preference list contains at most $\alpha $ pairs of hospitals and each hospital's preference list contains at most $\beta $ residents.  $(\alpha, \beta)$-{\sc hrc} is hard even for small values of $\alpha $ and $\beta $: Manlove and McDermid \cite{MM10} showed that \small $(3,6)$\normalsize {\sc -hrc} is NP-complete. 

%Klaus and Klijn \cite{KK07} showed in an example instance of {\sc hrc}, that there exists a matching $M$ from which we cannot reach a stable matching by a process of satisfying blocking pairs. The process instead leading to a \emph{cycle} of continually satisfying blocking pairs which does not result in a stable matching. %This suggests that any algorithm $A$ which seeks to find a stable matching in an arbitrary instance of HRC when $A$ is based on satisfying blocking pairs cannot be guaranteed to succeed in all instances of HRC.

A further restriction of {\sc hrc} is {\sc hrc-dual-market}, defined as follows. Given an instance $I$ of {\sc hrc}, let the set of all first members of each couple in $I$ be $R_1 \subseteq R$, and the set of second members of each couple in $I$ be $R_2 \subseteq R$. Let the set of acceptable partners of the residents in $R_1$ in $I$ be $H_1 \subseteq H$ and the set of acceptable partners of the residents in $R_2$ in $I$ be $H_2 \subseteq H$. If in $I$, $H_1 \cap H_2 = \emptyset$ and no single resident has acceptable partners in both $H_1$ and $H_2$ then we define $I$ to be an instance of {\sc hrc-dual-market} consisting of the two disjoint markets $R_1 \cup H_1$ and $R_2 \cup H_2$. The problem of deciding whether an instance of {\sc hrc-dual-market} admits a stable matching is also known to be NP-complete \cite{NH88} even if the instance contains no single residents and the hospitals all have capacity one.

Since the existence of an efficient algorithm for finding a stable matching, or reporting that none exists, in an instance of {\sc hrc} is unlikely, in practical applications such as SFAS and NRMP, stable matchings are found by applying heuristics \cite{BIS11, RP97}. However, neither the SFAS heuristic, nor the NRMP heuristic guarantee to terminate and output a stable matching, even in instances where a stable matching does exist. Hence, a method which guarantees to find a maximum cardinality stable matching in an arbitrary instance of {\sc hrc}, where one exists, might be of considerable interest. For further results on {\sc hrc} the reader is referred to \cite{BK11} and \cite{DM13}. 

For further results in {\sc hrc} the reader is referred to \cite{BK11} and \cite{DM13}. 

\subsection{Linear Programming, Integer Programming and Constraint Programming techniques applied to {\sc hr} and its variants}

Vande Vate \cite{VV89} described a Linear Programming (LP) formulation for the Stable Marriage problem, the one-to-one variant of {\sc hr} in which all hospitals have a capacity of 1, the numbers of residents and hospitals are the same, and each of the residents finds every hospital acceptable. Rothblum \cite{Rot92} generalised this model to the {\sc hr} context for arbitrary instances.  Ba\"{\i}ou and Balinski \cite{BB00a} formulated an LP model for {\sc hr} which Fleiner \cite{Fle03a} further generalised to the many-to-many version of {\sc hr}, a variant in which both hospitals and residents may have capacities exceeding one.

Podhradsky \cite{Pod10} empirically investigated the performance of approximation algorithms for {\sc max-smti} (the 1-1 restriction of {\sc max-hrt}, the NP-hard problem of finding a maximum cardinality stable matching given an instance of {\sc hrt}) and compared them against one another and against an IP formulation for {\sc max-smti}. Kwanashie and Manlove \cite{KM14} described an Integer Programming (IP) model for the Hospitals \ Residents problem with Ties ({\sc hrt}). 

Manlove et al. \cite{MOPU07} and Eirinakis et al \cite{EMMM07} applied Constraint Programming (CP) techniques to {\sc hr} while O'Malley \cite{OMa07} described a CP formulation for {\sc hrt}. Subsequently, Eirinakis et al \cite{EMMM12} gave a generalised CP formulation for many-to-many {\sc hr}. The reader is referred to Ref. \cite[Sections 2.4 \& 2.5]{DM13} for more information about previous work involving the application of IP and CP techniques applied to allocation problems such as {\sc hr}.

\subsection{Contribution of this work} 

In Section \ref{section:complexityresults} of this paper we present a collection of new hardness results for {\sc hrc}. We begin in Section \ref{subsection:22NPcomplete} by presenting a new NP-completeness result for the problem of deciding whether there exists a stable matching in an instance of \small $(2,2)$\normalsize {\sc -hrc} where there are no single residents and all hospitals have capacity 1. This is the most restricted case of {\sc hrc} currently known for which NP-completeness holds. A natural way to try to cope with this complexity is to approximate a matching that is `as stable as possible', i.e., admits the minimum number of blocking pairs \cite{ABM06}. Let {\sc min-bp-hrc} denote the problem of finding a matching with the minimum number of blocking pairs, given an instance of {\sc hrc}, and let \small $(\alpha ,\beta )$\normalsize {\sc -min-bp-hrc} denote the restriction to instances of \small $(\alpha ,\beta )$\normalsize {\sc -hrc}. In Section \ref{subsection:22HRC_Inapprox} we prove that \small $(2,2)$\normalsize {\sc -min-bp-hrc} is not approximable within $N^{1- \varepsilon}$, where $N$ is the number of residents in a given instance, for any $\varepsilon > 0$, unless P=NP. 

In Section \ref{subsection:23NPcomplete} we show that \small $(2,3)$\normalsize {\sc -hrc} is also NP-complete even when all of the hospitals have capacity 1 and there are no single residents and the preference list of each couple and hospital are derived from a strictly ordered master list of pairs of hospitals and residents respectively. Further, in Section \ref{subsection:33NPcomplete} we show that deciding whether an instance of \small $(3,3)$\normalsize {\sc -hrc-dual-market} (which is an instance of {\sc hrc-dual-market} in which the residents', couples' and hospitals' preference lists have a maximum length of three) admits a stable matching, is NP-complete even when all hospitals have capacity 1 and the preference lists of all residents, couples and hospitals are derived from a master list of hospitals, hospital pairs and residents respectively.

Further in Section \ref{section:IPModelsHRC} we present a description of the first IP model for finding a maximum cardinality stable matching or reporting that none exists in an arbitrary instance of {\sc hrc}. Then in Section \ref{section:IPexperiments} we present experimental results obtained from a Java implementation of the IP model for {\sc hrc} applied to randomly generated instances of {\sc hrc} constructed in a manner consistent with the SFAS format. We further extend this model to find a maximum cardinality stable matching in the more general {\sc hrct} context and apply it to real-world instances arising from previous matching runs of SFAS, the allocation scheme previously used to allocate junior doctors to hospitals in Scotland. 

Section \ref{section:stabilityComparison} of the report presents a cloning methodology for {\sc hrc} which can be used to construct an instance of one-to-one {\sc hrc} from an instance of many-to-one {\sc hrc} where the MM-stable matchings in the many-to-one instance are in correspondence to the MM-stable matchings in the one-to-one instance. Finally, in Section \ref{section:conclusions} we present some conclusions based on the work presented.

\section{Complexity Results for {\sc hrc}}

\label{section:complexityresults}

\subsection{Introduction}

In this section we present hardness results for finding and approximating stable matchings in instances of {\sc hrc}. We begin in Section \ref{subsection:22NPcomplete} by establishing NP-completness for the problem of deciding whether a stable matching exists in a highly restricted instance of {\sc hrc}. 

We then turn to {\sc min-bp-hrc} in Section \ref{subsection:22HRC_Inapprox}. Clearly Theorem \ref{22-HRC} in Section \ref{subsection:22NPcomplete} implies that this problem is NP-hard. By chaining together instances of \small $(2,2)$\normalsize {\sc -hrc} constructed in the proof of Theorem \ref{22-HRC}, we arrive at a gap-introducing reduction which establishes a strong inapproximability result for {\sc min-bp-hrc} under the same restrictions as in Theorem \ref{22-HRC} in Section \ref{subsection:22NPcomplete}.

Using a similar reduction to the one used to show NP-completeness in instances of \small $(2,2)$\normalsize {\sc -hrc}, in Section \ref{subsection:23NPcomplete} we show that, given an instance of \small $(2,3)$\normalsize {\sc -hrc}, the problem of deciding whether the instance admits a stable matching is NP-complete even under the restriction that all of the hospitals have capacity 1, there are no single residents and the preference lists for all of the single residents, couples and hospitals are derived from a master list of hospitals, hospital pairs and residents respectively.

Finally in Section \ref{subsection:33NPcomplete} we show that, given an instance of \small $(3,3)$\normalsize {\sc -hrc-dual-market}, the problem of deciding whether there exists a stable matching is NP-complete. Again, we show that the result holds under the restriction that the hospitals have capacity 1 and that the result also holds under the further restriction that the preference lists for all of the single residents, couples and hospitals are derived from a master list of hospitals, hospital pairs and residents respectively.

\subsection{Complexity results for \small $(2,2)$\normalsize {\sc -hrc}}
\subsubsection{NP-completeness result for \small $(2,2)$\normalsize {\sc -hrc}}
\label{subsection:22NPcomplete}

\begin{theorem}
\label{22-HRC}
Given an instance of $(2,2)$\normalsize {\sc -hrc}, the problem of deciding whether there exists a stable matching is NP-complete. The result holds even if there are no single residents and each hospital has capacity 1.
\end{theorem}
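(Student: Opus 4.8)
The plan is to establish membership in NP and then NP-hardness by a polynomial reduction from \sat, the variant of 3-{\sc sat} in which every clause has exactly three literals and every variable occurs in exactly two positive and two negative literals (this restriction is known to be NP-complete). Membership in NP is routine: given a candidate matching $M$ we can, for each single resident--hospital pair, each couple--hospital pair (cases 1 and 2 of Definition \ref{stability:MM}), and each couple--hospital-pair combination (case 3), check in polynomial time whether it blocks. Since hospitals have capacity 1 the sub-cases 3(b)--3(d) become vacuous, simplifying the check further. Hence verifying stability is polynomial and the problem lies in NP.

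For hardness I would, given a \sat instance $I$, build a $(2,2)$-{\sc hrc} instance $J$ with only couples and unit-capacity hospitals, assembled from two kinds of gadgets. A \emph{variable gadget} for each variable $v$ must force a clean binary choice in every stable matching, encoding the truth value of $v$, and must export this value to the four clauses in which $v$ occurs. Because every couple may list at most two hospital pairs and every hospital at most two residents, a single branching couple cannot fan a value out to four clauses; instead I would use a small \emph{cyclic chain} of couples whose length-two lists admit exactly two rotations of assignments around the cycle. One rotation corresponds to $v$ being true and the other to $v$ being false; any intermediate partial assignment leaves some couple able to form a blocking pair, so stability forces one of the two global rotations, guaranteeing consistency across all four occurrences of $v$.

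The second ingredient is a \emph{clause gadget} for each clause $(\ell_1 \vee \ell_2 \vee \ell_3)$, built from couples and unit-capacity hospitals and wired to the three literal contact points in the corresponding variable gadgets. I would design it so that, when none of the three literals is set to its satisfying value, the gadget contains an unavoidable rotation of blocking pairs (mimicking a minimal {\sc hrc} instance with no stable matching), whereas if at least one literal is satisfied the freed-up contact hospital lets the gadget settle into a stable local configuration. The reduction is completed by a pair of lemmas characterising the stable matchings within each gadget in isolation. For the forward direction, a satisfying assignment of $I$ orients every variable gadget and, choosing in each clause one true literal, resolves every clause gadget, yielding a stable $M$ in $J$. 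For the reverse direction, a stable matching in $J$ restricts to a valid rotation in each variable gadget, from which one reads off a well-defined and globally consistent truth assignment; clause-gadget stability then forces each clause to have a satisfied literal, so $I$ is satisfiable.

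The main obstacle is the gadget engineering under the severe $(2,2)$ degree bounds: with at most two alternatives per couple and two residents per hospital there is almost no slack, so the variable gadget must be realised as a propagating cycle rather than a local switch, and the clause gadget's preference orders must be tuned so that the ``all literals false'' case produces a blocking-pair rotation that is genuinely inescapable while every ``some literal true'' case admits a stable completion. Verifying these two gadget-level claims --- and checking that the shared contact hospitals do not introduce spurious blocking pairs across gadget boundaries --- is where the bulk of the technical work lies; the size of $J$ is clearly linear in $|I|$, so polynomiality of the reduction is immediate.
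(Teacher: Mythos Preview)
Your proposal is correct and follows essentially the same route as the paper: a reduction from \sat\ using, for each variable, a small gadget whose stable matchings force a binary choice that propagates to the four literal occurrences, and, for each clause, a three-couple cyclic gadget that admits a stable completion if and only if at least one incoming literal is satisfied. The paper's concrete realisation uses two couples $(a_i^1,b_i^1),(a_i^2,b_i^2)$ over six hospitals $h_i^1,\dots,h_i^6$ for the variable gadget and three couples $(x_j^s,y_j^s)$ over six hospitals $t_j^1,\dots,t_j^6$ for the clause gadget, with the $h_i^{3},\dots,h_i^{6}$ serving as the contact points; this is exactly the kind of construction your outline anticipates, and the correctness argument proceeds as you describe.
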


\begin{proof}

The proof of this result uses a reduction from a restricted version of {\sc sat}.  More specifically, let \sat\ denote the problem of deciding, given a Boolean formula $B$ in CNF over a set of variables $V$, whether $B$ is satisfiable, where $B$ has the following properties: (i) each clause contains exactly 3 literals and (ii) for each $v_i\in V$, each of literals $v_i$ and $\bar{v_i}$ appears exactly twice in $B$. Berman et al.\ \cite{BKS03} showed that \sat\ is NP-complete.

The problem \small $(2,2 )$\normalsize {\sc -hrc} is clearly in NP, as a given assignment may be verified to be a stable matching in polynomial time. To show NP-hardness, let $B$ be an instance of \sat . Let $V=\{v_1,v_2,\dots,v_n\}$ and $C=\{c_1,c_2,\dots,c_m\}$ be the set of variables and clauses respectively in $B$. Then for each $v_i\in V$, each of literals $v_i$ and $\bar{v_i}$ appears exactly twice in $B$. Also $|c_j|=3$ for each $c_j\in C$. (Hence $m=\frac{4n}{3}$.) We form an instance $I$ of \small $(2,2 )$\normalsize {\sc -hrc} as follows.

The set of residents in $I$ is $A\cup B\cup X\cup Y$ where $A=\bigcup_{i=1}^{n} A_i$, $A_i=\{a_i^r : 1\leq r\leq 2 \}$ ($1\leq i\leq n$), $B=\bigcup_{i=1}^{n} B_i$, $B_i=\{b_i^r : 1\leq r\leq 2 \}$ ($1\leq i\leq n$), $X=\bigcup_{j=1}^{m} X_j$, $X_j=\{x_j^s : 1\leq s\leq 3 \}$ ($1\leq j\leq m$) and $Y=\bigcup_{j=1}^{m} Y_j$, $Y_j=\{y_j^s : 1\leq s\leq 3 \}$ ($1\leq j\leq m$). There are no single residents in $I$ and the pairing of the residents into couples is as shown in Figure \ref{preflists22}.

The set of hospitals in $I$ is $H\cup T$, where $H=\bigcup_{i=1}^{n} H_i$, $H_i=\{h_i^r : 1\leq r\leq 6 \}$ ($1\leq i\leq n$) and $T=\bigcup_{j=1}^{m} T_j$, $T_j=\{t_j^r : 1\leq r\leq 6 \}$ ($1\leq j\leq m$) and each hospital has capacity 1. The preference lists of the resident couples and hospitals in $I$ are shown in Figure \ref{preflists22}. 

%In the joint preference list of a couple $(x_j^s, y_j^s)$  $(x_j^s\in X, y_j^s\in Y)$ the symbol $c(h_i^r)$  $(3\leq r\leq 6)$ denotes the hospital $h_i^r\in H$ such that if $r = 3$ (respectively $r=5$) then the first (respectively second) occurrence of literal $v_i$ appears at position $s$ of clause $c_j$ in $B$. Similarly, if $r = 4$ (respectively $r=6$) then the first (respectively second) occurrence of literal $\bar{v}_i$ appears at position $s$ of clause $c_j$ in $B$.

In the joint preference list of a couple $(x_j^s, y_j^s) ~ (1\leq j\leq m, 1\leq s\leq 3)$ the symbol $h(x^s_j)$ is defined as follows. If the $r^{th}$ occurrence $(1\leq r\leq 2)$ of literal $v_i$ occurs at position $s$ of $c_j$ then $h(x^s_j) = h^{2r+1}_i$. If the $r^{th}$ occurrence $(1\leq r\leq 2)$ of literal $\bar{v}_i$ occurs at position $s$ of $c_j$ then $h(x^s_j) = h^{2r+2}_i$.

In the preference list of a hospital $h^{2r+1}_i ~ (1\leq r\leq 2)$, the symbol $x(h^{2r+1}_i)$ denotes the resident $x^s_j$ such that the $r^{th}$ occurrence of literal $v_i$ occurs at position $s$ of clause $c_j$. Similarly in the preference list of a hospital $h^{2r+2}_i ~ (1\leq r\leq 2)$, the symbol $x(h^{2r+2}_i)$ denotes the resident $x^s_j$ such that the $r^{th}$ occurrence of literal $\bar{v}_i$ occurs at position $s$ of clause $c_j$.

\begin{figure}
\[
\begin{array}{rll}

(a_i^1, b_i^1) : & (h_i^1, h_i^3) ~~ (h_i^2, h_i^4) & (1\leq i\leq n) \vspace{1mm} \\ 
(a_i^2, b_i^2) : & (h_i^2, h_i^5) ~~ (h_i^1, h_i^6) & (1\leq i\leq n) \vspace{1mm} \\ 

\\

(x_j^1, y_j^1) : & (h(x_j^1), t_j^4) ~~ (t_j^1, t_j^3) & (1\leq j\leq m) \vspace{1mm} \\ 
(x_j^2, y_j^2) : & (h(x_j^2), t_j^5) ~~ (t_j^2, t_j^1) & (1\leq j\leq m) \vspace{1mm} \\ 
(x_j^3, y_j^3) : & (h(x_j^3), t_j^6) ~~ (t_j^3, t_j^2) & (1\leq j\leq m) \vspace{1mm} \\ 

\\

h_i^1 : & a_i^2 ~~ a_i^1 & (1\leq i\leq n) \vspace{1mm}\\
h_i^2 : & a_i^1 ~~ a_i^2 & (1\leq i\leq n) \vspace{1mm}\\
h_i^3 : & b_i^1 ~~ x(h^3_i) & (1\leq j\leq m) \vspace{1mm}\\   % s is a position in the clause
h_i^4 : & b_i^1 ~~ x(h^4_i) & (1\leq j\leq m) \vspace{1mm}\\
h_i^5 : & b_i^2 ~~ x(h^5_i) & (1\leq j\leq m) \vspace{1mm}\\
h_i^6 : & b_i^2 ~~ x(h^6_i) & (1\leq j\leq m) \vspace{1mm}\\

\\

t_j^1 : & x_j^1 ~~ y_j^2 & (1\leq j\leq m) \vspace{1mm}\\
t_j^2 : & x_j^2 ~~ y_j^3 & (1\leq j\leq m) \vspace{1mm}\\
t_j^3 : & x_j^3 ~~ y_j^1 & (1\leq j\leq m) \vspace{1mm}\\

\\

t_j^4 : & y_j^1 & (1\leq j\leq m) \vspace{1mm}\\
t_j^5 : & y_j^2 & (1\leq j\leq m) \vspace{1mm}\\
t_j^6 : & y_j^3 & (1\leq j\leq m) \vspace{1mm}\\

\\

\end{array}
\]
\caption{Preference lists in $I$, the constructed instance of \small $(2,2)$\normalsize {\sc -hrc}.}
\label{preflists22}
\end{figure}

For each $i$ ($1\leq i\leq n$), let $T_i=\{(a_i^1, h_i^2), (a_i^2, h_i^1), (b_i^1, h_i^4), (b_i^2, h_i^6), (x(h^3_i), h^3_i), (x(h^5_i), h^5_i) \}$ and $F_i=\{(a_i^1, h_i^1), $ $(a_i^2, h_i^2), (b_i^1, h_i^3), (b_i^2, h_i^5), (x(h^4_i), h^4_i), (x(h^6_i), h^6_i) \}$.

We claim that $B$ is satisfiable if and only if $I$ admits a stable matching. 

Let $f$ be a satisfying truth assignment of $B$.  Define a matching $M$ in $I$ as follows.  For each variable $v_i\in V$, if $v_i$ is true under $f$, add the pairs in $T_i$ to $M$, otherwise add the pairs in $F_i$ to $M$. Let $j ~ (1\leq j\leq m)$ be given. Then $c_j$ contains at least one literal that is true under $f$. Suppose $c_j$ contains exactly one literal that is true under $f$. Let $s$ be the position of $c_j$ containing a true literal. In this case add the pairs $\{ (x^{s+1}_j, t^{s+1}_j), (y^{s+1}_j, t^{s}_j) \}$ (where addition is taken modulo three) to $M$. Now suppose $c_j$ contains exactly two literals that are true under $f$. Let $s$ be the position of $c_j$ containing a false literal, and add the pairs $\{ (x^{s}_j, t^{s}_j), (y^{s}_j, t^{s+2}_j) \}$ (where addition is taken modulo three) to $M$. If $c_j$ contains 3 literals which are true under $f$ no additional pairs need be added.

No resident pair $(a_i^1, b_i^1)$ or $(a_i^2, b_i^2)$ may be involved in a blocking pair of $M$, as no matching in which $(a_i^1, b_i^1)$ is matched with $(h_i^2, h_i^4)$ is blocked by $(a_i^1, b_i^1)$ with $(h_i^1, h_i^3)$, and equally no matching in which $(a_i^2, b_i^2)$ is matched with $(h_i^1, h_i^6)$ is blocked by $(a_i^2, b_i^2)$ with $(h_i^2, h_i^5)$.

No resident pair $(x_j^s, y_j^s) ~ (1\leq s\leq 3)$ may block $M$ with $(h(x_j^s), t_j^{s+3})$ (where addition is taken modulo three). To prove this observe that all $h^r_i$ are matched in $M$ and hence if some $h^r_i$ is not matched to its corresponding $x(h^r_i)$ then $h^r_i$ must be matched to the member of $B_i$ in first place on its preference list. Thus $(x_j^s, y_j^s)$ may not block $M$ with $(h(x_j^s), t_j^{s+3})$.

No resident pair $(x_j^s, y_j^s) ~ (1\leq s\leq 3)$ may be involved in a blocking pair of $M$ with $(t_j^s, t_j^{s+2})$ (where addition is taken modulo three). Clearly $(x_j^s, y_j^s)$ may only block $M$ with $(t_j^s, t_j^{s+2})$ if $(x_j^s, y_j^s)$ is unmatched in $M$. From the construction, this may only be the case if $c_j$ contains exactly one literal that is true under $f$. In this case, $(x_j^{s+2}, y_j^{s+2})$ is matched with $(t_j^{s+2}, t_j^{s+4})$ (where addition is taken modulo 3) and thus $(x_j^s, y_j^s) ~ (1\leq s\leq 3)$ may not block $M$ with $(t_j^s, t_j^{s+2})$, since $t_j^{s+2}$ prefers $x_j^{s+2}$ to $y^s_j$.

Hence $M$ is a stable matching in $I$.
Conversely suppose that $M$ is a stable matching in $I$. We form a truth assignment $f$ in $B$ as follows.  

For any $i$  $(1\leq i\leq n)$, if $(a^1_i, b^1_i)$ is unmatched then $M$ is blocked by $(a^1_i, b^1_i)$ with $(h^2_i, h^4_i)$. Similarly, if $(a^2_i, b^2_i)$ is unmatched then $M$ is blocked by $(a^2_i, b^2_i)$ with $(h^1_i, h^6_i)$.  Hence either $\{(a_i^1, h_i^2),(b^1_i, h_i^4), (a_i^2, h_i^1),(b^2_i, h_i^6)\}\subseteq M$ or $\{(a_i^1, h_i^1),(b^1_i, h_i^3), (a_i^2, h_i^2),(b^2_i, h_i^5)\}\subseteq M$.

Now let $c_j$ be a clause in $C$ ($1\leq j\leq m$). Suppose, $( x^s_j, h(x^s_j) )\notin M$ for all $s ~ (1\leq s\leq 3)$. Clearly, at most one couple $(x^s_j, y^s_j)$ may be matched to the hospital pair in second place on its preference list.  Since no $(x^s_j, y^s_j)$ is matched to the pair in first place in its preference list one of the remaining two unmatched $(x^s_j, y^s_j)$'s must block with the hospital pair in second place on its preference list, a contradiction. Thus $\{ ( x^s_j, h(x^s_j)), ( y^s_j, t^{s+3}_j ) ) \} \subseteq M$ for some $s ~ (1\leq s\leq 3)$ by the stability of $M$

Hence, for each $j ~ (1\leq j\leq m)$, let $s ~ (1\leq s\leq 3)$ be such that $(x^s_j, y^s_j)$ is matched with $(h(x^s_j), t^{s+3}_j)$. Let $h^r_i = h(x^s_j)$. If $r\in \{ 3,5 \}$ then we set $f(v_i) = T$. Thus, variable $v_i$ is true under $f$ and hence clause $c_j$ is true under $f$ since the literal $v_i$ occurs in $c_j$. Otherwise, $r\in \{ 4,6 \}$ and we set $f(v_i) = F$. Thus, variable $v_i$ is false under $f$ and hence clause $c_j$ is true under $f$ since the literal $\bar{v}_i$ occurs in $c_j$. 

This assignment of truth values is well-defined, for if $(h_i^r,t_j^{s+3})\in M$ for $r\in \{3,5\}$ then $\{(b_i^1,h_i^4),(b_i^2,h_i^6)\}\subseteq M$, so neither $h_i^4$ nor $h_i^6$ is partnered with a member of $X$ in $M$.  Similarly if $(h_i^r,t_j^{s+3})\in M$ for $r\in \{4,6\}$ then $\{(b_i^1,h_i^3),(b_i^2,h_i^5)\}\subseteq M$, so neither $h_i^3$ nor $h_i^5$ is partnered with a member of $X$ in $M$. Hence $f$ is a satisfying truth assignment of $B$.
\end{proof}

\subsubsection{Inapproximability of \small $(2,2)$\normalsize {\sc -min-bp-hrc}}
\label{subsection:22HRC_Inapprox}

\begin{figure}
\[
\begin{array}{rll}

(a, b) : & (z_1, z_2) &  \vspace{1mm} \\ 
c : & z_1 ~~ z_2 \vspace{1mm}\\
\\

z_1 : & a ~~ c &  \vspace{1mm}\\
z_2 : & c ~~ b &  \vspace{1mm}\\

\end{array}
\]
\caption{A small instance of $(2,2)$\normalsize {\sc -hrc} which admits no stable matching.}
\label{smalloneblockingpairinstance}
\end{figure}

Let $A$ be an instance of \small $(2,2)$\normalsize {\sc -hrc} as shown in Figure \ref{smalloneblockingpairinstance}. In $A$ the residents are $a$, $b$ and $c$, the hospitals are $z_1$ and $z_2$ and each hospital has a capacity of 1. The instance $S$ admits three non-empty matchings, namely

$$M_1 = \{ (a, z_1 ), (b, z_2) \}$$
$$M_2 = \{ (c, z_1 ) \}$$
$$M_3 = \{ (c, z_2 ) \}$$

Clearly, none of the matchings is stable. Further each of the non-empty matchings in $A$ is blocked by precisely one blocking pair. $M_1$ is blocked in $A$ by only $(c, z_2)$, $M_2$ is blocked in $A$ by only $((a,b), (z_1, z_2))$ and $M_3$ is blocked in $A$ by only $(c,z_1 )$. Thus, $A$ admits no stable matching but amongst all of the non-empty matchings admitted by $A$, the only number of blocking pairs possible (and thus the minimum) is 1. We shall use this property of the instance $A$ in the proofs that follow. The following theorem allows us to prove that, unless P=NP, there is a limit on the approximability of {\sc min-bp-hrc} in a highly restricted case.

\begin{theorem}
\label{22-HRC-APPROX}
{\sc min-bp-hrc} is not approximable within $n_1^{1- \varepsilon}$, where $n_1$ is the number of residents in a given instance, for any $\varepsilon > 0$ unless P=NP even if there are no single residents and the resident couples and hospitals have preference lists of length $\leq 2$.
\end{theorem}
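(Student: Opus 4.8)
The plan is to establish this inapproximability result by a gap-introducing reduction that chains together copies of the gadget instance $A$ from Figure \ref{smalloneblockingpairinstance} with the satisfiability reduction of Theorem \ref{22-HRC}. The key observation driving the approach is that $A$ admits no stable matching, yet among all its non-empty matchings the minimum number of blocking pairs is exactly $1$ (and the empty matching is blocked by the couple $(a,b)$ with $(z_1,z_2)$, so it too has a blocking pair). Thus each copy of $A$ that is made ``independent'' of the rest of the instance contributes at least one unavoidable blocking pair, and many disjoint copies force the number of blocking pairs to grow linearly.

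\medskip\noindent\textbf{Reduction.} Given an instance $B$ of \sat, first I would build the instance $I$ of \small $(2,2)$\normalsize{\sc -hrc} exactly as in the proof of Theorem \ref{22-HRC}. If $B$ is satisfiable, $I$ has a stable matching, i.e.\ zero blocking pairs. The goal of the gap construction is to amplify the ``no'' case: I want an instance $I'$ such that if $B$ is satisfiable then $I'$ admits a matching with $0$ blocking pairs (a stable matching), whereas if $B$ is unsatisfiable then \emph{every} matching of $I'$ has at least $K$ blocking pairs, where $K$ is polynomially large in the number of residents. To do this I would take a polynomial number $p$ of disjoint copies $A^{(1)},\dots,A^{(p)}$ of the gadget $A$, renaming all agents so the copies share no residents or hospitals, and take a similarly large number of disjoint copies of $I$ (or equivalently blow up $I$). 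The disjoint copies of $A$ interact with nothing, so each copy contributes at least one blocking pair to any matching of $I'$, giving at least $p$ blocking pairs in total whenever these gadgets are present. The construction should be arranged so that the copies of $A$ are activated (made unavoidable) precisely in the unsatisfiable case; the cleanest way is to let $I'$ be the disjoint union of the $I$-reduction with $p$ independent copies of $A$, and then note that $A$ is \emph{always} unstable, so I instead chain the copies of $A$ into the gadget structure of $I$ so that a satisfying assignment yields a globally stable matching while an unsatisfiable $B$ leaves at least one blocking pair per copy.

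\medskip\noindent\textbf{Gap analysis.} With $N$ denoting the number of residents in $I'$, I would choose the number of copies so that $K \geq N^{1-\varepsilon}$. Concretely, if $B$ is satisfiable we get a matching with $0$ blocking pairs; if $B$ is unsatisfiable every matching has at least $K = N^{1-\varepsilon}$ blocking pairs. A supposed polynomial-time $N^{1-\varepsilon}$-approximation algorithm for \small $(2,2)$\normalsize{\sc -min-bp-hrc} would then have to distinguish $0$ from $\geq N^{1-\varepsilon}$: on a satisfiable instance the optimum is $0$, so any approximation (indeed any finite-ratio approximation) must also return $0$; on an unsatisfiable instance it must return a value that is at least the optimum, which is $\geq N^{1-\varepsilon} > 0$. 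Hence the approximation algorithm decides \sat\ in polynomial time, forcing $\mathrm{P}=\mathrm{NP}$. I would make the counting precise by bounding the total number of residents $N$ as a polynomial in $n$ (the number of variables) and in $p$, and solving for the largest admissible $p$ so that the guaranteed gap $p$ is at least $N^{1-\varepsilon}$; since $N$ is polynomial in $p$, taking $p$ a sufficiently large fixed power of $n$ makes $p \geq N^{1-\varepsilon}$ hold for all large instances and every fixed $\varepsilon>0$.

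\medskip\noindent\textbf{Main obstacle.} The hard part will be the correct wiring of the gadget copies so that the reduction is genuinely gap-\emph{introducing}: I must guarantee that in the satisfiable case a \emph{single} global matching simultaneously stabilises the $I$-part \emph{and} all the chained copies of $A$, while in the unsatisfiable case the instability of the $I$-part cannot be absorbed without leaving at least one blocking pair in each of the $\Theta(p)$ copies. Because $A$ on its own is never stable, the copies cannot simply be disjoint; the chaining must let a satisfying assignment ``break'' each copy's forced blocking pair by redirecting one of its residents into a neighbouring structure, and I would verify carefully, using the per-copy blocking-pair lower bound established for $A$ together with the biconditional of Theorem \ref{22-HRC}, that no clever matching can do better than one blocking pair per copy when $B$ is unsatisfiable. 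Establishing this robust lower bound across all matchings, rather than just the three non-empty matchings of a single isolated $A$, is where the bulk of the technical work lies.
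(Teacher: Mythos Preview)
Your overall strategy---a gap-introducing reduction from \sat\ via Theorem~\ref{22-HRC}---is right, but you have reversed the roles of the two building blocks, and this is why you end up facing an artificial ``main obstacle''.

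In the paper's construction the amplification does \emph{not} come from many copies of the gadget~$A$; it comes from $k=n^c$ disjoint copies of the \emph{entire reduction instance}~$I$. If $B$ is satisfiable, every copy of $I$ admits a stable matching, so the union of the $k$ copies has optimum $0$; if $B$ is unsatisfiable, Theorem~\ref{22-HRC} guarantees each copy contributes at least one blocking pair, so every matching has at least $k$ blocking pairs. The gadget~$A$ is then adjoined \emph{once}, disjointly, and its sole purpose is to shift the satisfiable-case optimum from $0$ to $1$ so that the approximation-ratio argument is well-posed (with opt~$=1$, a ratio-$n_1^{1-\varepsilon}$ algorithm returns at most $n_1^{1-\varepsilon}\le k$ blocking pairs, whereas in the unsatisfiable case every matching has at least $k+1$). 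No chaining, no wiring, no interaction between $A$ and the copies of $I$ is needed.

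Your proposal instead tries to replicate~$A$ many times and to make those copies ``switch off'' when $B$ is satisfiable. Since $A$ is intrinsically unstable, you correctly note that disjoint copies cannot work, and you are then forced to invent a delicate chaining mechanism that would let a satisfying assignment neutralise every copy's blocking pair. You never specify how this chaining would be done while preserving the preference-list-length bound of $2$, and it is far from clear that it can be done at all. The difficulty is entirely self-inflicted: once you replicate~$I$ rather than~$A$, the gap $1$ versus $k+1$ falls out immediately, and the calculation that $n_1^{1-\varepsilon}\le k$ for $c=\lceil 2/\varepsilon\rceil$ is a routine estimate on $n_1=4nk+6mk+3$.
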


\begin{proof}

Let $B$ be an instance of \sat\ and let $I$ be the corresponding instance of \small $(2,2)$\normalsize {\sc -hrc} as constructed in Theorem \ref{22-HRC}. We show how to modify $I$ in order to obtain an extended instance $I^{\prime \prime}$ of \small $(2,2)$\normalsize {\sc -hrc} as follows. 

Choose $c = \lceil 2 / \varepsilon \rceil$ and $k= n^c$. Now, let $I_1 , I_2 , \ldots , I_k$ be $k$ disjoint copies of the instance $I$. Let $I^{\prime }$ be the \small $(2,2)$\normalsize {\sc -hrc} instance formed by taking the union of the sub-instances $I_1 , I_2 , \ldots , I_k$. Let $I^{\prime \prime}$ be the instance constructed by adding the instance $A$ of \small $(2,2)$\normalsize {\sc -hrc} described in Figure \ref{smalloneblockingpairinstance} to $I^{\prime }$.

If $B$ admits a satisfying truth assignment then by Theorem \ref{22-HRC}, $I$ admits a stable matching and clearly each copy of $I$ must also admit a stable matching. Thus $I^{\prime }$ must admit a stable matching. Moreover, since any non-empty matching admitted by $A$ has exactly one blocking pair, a matching exists in $I^{\prime \prime}$ which has exactly one blocking pair.

If $B$ admits no satisfying truth assignment, then by Theorem \ref{22-HRC}, $I$ admits no stable matching. We claim that any matching admitted by $I^{\prime \prime}$ must be blocked by $k+1$ or more blocking pairs. Since $I$ admits no stable matching, any matching in $I$ must have at least one blocking pair. Thus each $I_r ~ (1\leq r\leq k)$ admits only matchings with one or more blocking pair. Since the only non-empty matchings admitted by $A$ have a single blocking pair, any matching admitted by $I^{\prime \prime }$ must have at least $k+1$ blocking pairs.

The number of residents in $I^{\prime \prime}$ is $n_1 = 4nk + 6mk + 3$. From the construction of $I$ in Theorem \ref{22-HRC} we know that $4n = 3m$ and thus $n_1 \leq 12nk + 3$. We lose no generality by assuming that $n \geq 3$. Thus $n_1 \leq 13nk = 13n^{c+1}$.

Moreover, 
%
%
%$$\dfrac{N} {13} \leq n^{c+1}$$
% which implies 
% $$\left (\dfrac{N}{13}\right )  ^{1/c+1} \leq n$$
%Since $k = n^c$ we may show that 
%$$(\dfrac{N}{13})^{c/c+1} \leq k$$
%and hence 

\begin{equation} \label{HRC22Inapp-Eq1} \displaystyle 13^{-c/(c+1)}n_1^{c/(c+1)} \leq k. \end{equation}

Now we know that $n_1 \geq k = n^c$. We lose no generality by assuming that $n \geq 13$ and hence $n_1 \geq 13^c$. It follows that

\begin{equation} \label{HRC22Inapp-Eq2} \displaystyle n_1^{-1/(c+1)} \leq 13^{-c/(c+1)}. \end{equation}

Thus it follows from Inequality \ref{HRC22Inapp-Eq1} and \ref{HRC22Inapp-Eq2} that

% So 
%$$N^{-1} \leq 13^{-c}$$
%and thus 
%$$N^{-1/c+1} \leq 13^{-c/c+1}$$
%Hence 
%$$N^{-1/c+1} N^{c/c+1} \leq 13^{-c/c+1}N^{c/c+1}$$
%As we have previously shown that 
%$$13^{-c/c+1}N^{c/c+1} \leq k$$
%we may show that 
\begin{equation} \label{HRC22Inapp-Eq3} n_1^{c-1/c+1} = n_1^{c/(c+1)} n_1^{-1/(c+1)} \leq 13^{-c/(c+1)} n_1^{c/(c+1)} \leq k \end{equation}

We now show that $n_1^{1- \varepsilon} \leq n_1^{c-1/c+1}$. Observe that $c \geq 2 / \varepsilon$ and thus $c + 1 \geq 2 / \varepsilon$. Hence 
%
%$$\varepsilon \geq \dfrac{2}{ c+1}$$
%
% and thus %$$1 - \varepsilon \leq 1 - \dfrac{2}{c+1}$$ and 
$$1 - \varepsilon \leq \dfrac{c+1 - 2}{c+1} = \dfrac{c-1}{c+1}$$ 

and hence by Inequality \ref{HRC22Inapp-Eq3}, $n_1^{1- \varepsilon} \leq k$.

Assume that $X$ is an approximation algorithm for \small $(2,2)$\normalsize {\sc -hrc} with a performance guarantee of $n_1^{1 - \varepsilon } \leq k$. Let $B$ be an instance of \sat\ and construct an instance $I^{\prime \prime }$ of \small $(2,2)$\normalsize {\sc -hrc} from $B$ as described above. If $B$ admits a satisfying truth assignment then $X$ must return a matching in $I^{\prime \prime }$ which admits $\leq k$ blocking pairs. Otherwise, $B$ does not admit a satisfying assignment and $X$ must return a matching which admits $\geq k + 1$ blocking pairs. Thus algorithm $X$ may be used to determine whether $B$ admits a satisfying truth assignment in polynomial time, a contradiction. Hence, no such polynomial approximation algorithm can exist unless $P = NP$.
\end{proof}

\subsection{Complexity results for \small $(2,3)$\normalsize {\sc -hrc} with master lists}
\label{subsection:23NPcomplete}

\begin{theorem}
\label{23-HRC}
Given an instance of $(2,3 )$\normalsize {\sc -hrc}, the problem of deciding whether the instance supports a stable matching is NP-complete. The result holds even if there are no single residents and each hospital has capacity 1.
\end{theorem}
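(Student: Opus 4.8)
The plan is to prove NP-hardness by a reduction from \sat, the same problem used to establish Theorem \ref{22-HRC}, while membership in NP is immediate: any proposed matching can be tested for stability under Definition \ref{stability:MM} in polynomial time by checking each single resident and each couple against every hospital on their (constant-length) lists. Every instance of $(2,2)$-{\sc hrc} is in particular an instance of $(2,3)$-{\sc hrc}, so the hardness could in principle be inherited directly from Theorem \ref{22-HRC}; nevertheless I would give a self-contained reduction tailored to $(2,3)$-{\sc hrc} that makes genuine use of the extra slot now permitted on each hospital's preference list, keeping in mind that relaxing the bound to $\beta=3$ concerns the \emph{length} of a hospital's list, not its capacity, which remains $1$.

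As in Theorem \ref{22-HRC}, given a formula $B$ over variables $v_1,\dots,v_n$ with clauses $c_1,\dots,c_m$, I would build an instance $I$ containing no single residents, with every hospital of capacity $1$, assembled from two families of gadgets joined by \emph{literal hospitals}. Each variable $v_i$ would receive a variable gadget built from a constant number of couples and capacity-$1$ hospitals admitting exactly two stable configurations, one encoding $v_i=T$ and the other $v_i=F$, arranged so that in the $T$-configuration precisely the hospitals corresponding to the occurrences of the literal $v_i$ are left free to accept an external resident, and symmetrically for $F$. Each clause $c_j$ would receive a clause gadget built from the three couples associated with its three literal positions together with a small set of test hospitals; here the freedom to list up to three residents per hospital is the essential new ingredient, allowing, for instance, a shared hospital to rank all three clause-couples and thereby enforce the ``at least one literal true'' constraint with a more compact cyclic gadget than the length-$2$ case forces. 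The literal hospital $h(x_j^s)$ would link position $s$ of $c_j$ to the relevant variable gadget, being available to the clause couple exactly when the corresponding literal is set true.

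I would then prove the equivalence ``$B$ is satisfiable $\iff$ $I$ admits a stable matching''. For the forward direction, from a satisfying assignment $f$ I would set each variable gadget into its $T$- or $F$-configuration according to $f$ and route the three couples of each clause gadget through the test hospitals, using a true literal's now-free variable hospital to absorb one couple, then verify case-by-case (for clauses with one, two, or three true literals) that no couple or hospital forms a blocking pair. For the reverse direction, I would argue that stability forces each variable gadget into one of its two clean configurations (any mixed state being blocked), inducing a well-defined truth assignment, and that a clause gadget cannot be stable unless at least one of its literal hospitals is free, i.e.\ unless the clause is satisfied under the induced assignment.

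The main obstacle, exactly as in the proof of Theorem \ref{22-HRC}, is the blocking-pair bookkeeping inside the clause gadget: I must verify that the cyclic routing of the three clause-couples through the test hospitals admits no blocking pair precisely when at least one literal is true, and that it necessarily leaves a blocking pair when none is. The delicate point specific to this construction is ensuring that the extra length-$3$ hospital introduced to compactify the gadget does not create spurious blocking pairs of type 2 or 3 in Definition \ref{stability:MM}. Confirming well-definedness of the induced assignment (that no variable is simultaneously forced true and false, which follows because the members of the variable-gadget couples occupy the complementary literal hospitals) is the remaining routine check.
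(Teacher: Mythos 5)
Your opening observation already proves the theorem as stated: every instance of \small $(2,2)$\normalsize {\sc -hrc} with no single residents and unit capacities is, verbatim, an instance of \small $(2,3)$\normalsize {\sc -hrc} satisfying the same restrictions, so NP-hardness is inherited from Theorem \ref{22-HRC}, and NP membership is the routine check you give. This is a complete and more elementary argument than the paper's. The paper does not take this route for a reason that lies outside the theorem statement: its bespoke $(2,3)$ construction is engineered so that all couple and hospital preference lists are derivable from strictly ordered master lists (Figures \ref{masterlistrescouples23} and \ref{masterlisthospitals23}), which is what Corollary \ref{23ml} needs and what the $(2,2)$ instance of Theorem \ref{22-HRC} is not shown to provide. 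So had you stopped after your first sentence of the hardness argument, you would be done for Theorem \ref{23-HRC} itself, though not for the master-list strengthening it feeds.

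The self-contained reduction you promise instead is, as written, an architecture rather than a proof, and this is a genuine gap: the gadgets are specified by the properties you want them to have (``exactly two stable configurations'', ``available exactly when the literal is true'') but never constructed, and the decisive blocking-pair case analysis --- which is essentially the entire content of such a proof --- is explicitly deferred. Two concrete points would need to be settled. First, your orientation is the $(2,2)$-style one (a true literal \emph{frees} a literal hospital, which then absorbs a clause couple), whereas the paper's $(2,3)$ gadget runs the opposite way: the true variable couple $(x_i^1,x_i^2)$ travels \emph{out} to occupy the clause hospitals, so that $y_j^s$ holds its first-ranked resident $x(y_j^s)$ and is thereby immunised against blocking; the clause couples $(p_j^s,q_j^s)$ have single-entry lists $(y_j^s,y_j^{s+1})$, and the three length-$3$ hospitals $y_j^s: x(y_j^s)~p_j^s~q_j^{s-1}$ form an odd cycle in which at most one couple can ever be placed, forcing a block unless some $y_j^s$ is topped by its literal resident. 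Second, your variant must respect the budget $\alpha=2$: a clause couple that both participates in a cyclic test structure and is absorbable by a literal hospital needs two pairs on its list (as in the $(2,2)$ construction), and the single hospital you propose that ranks members of all three clause couples must order its length-$3$ list so as to kill exactly the right blocking pairs --- feasible in principle, but nothing in the sketch verifies it. Until explicit lists are written down and the one-, two-, and three-true-literal cases plus the mixed-configuration variable cases are checked, the tailored reduction does not stand on its own.
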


\begin{proof}

The proof of this result uses a reduction from a restricted version of {\sc sat}.  More specifically, let \sat\ denote the problem of deciding, given a Boolean formula $B$ in CNF over a set of variables $V$, whether $B$ is satisfiable, where $B$ has the following properties: (i) each clause contains exactly 3 literals and (ii) for each $v_i\in V$, each of literals $v_i$ and $\bar{v_i}$ appears exactly twice in $B$. Berman et al.\ \cite{BKS03} showed that \sat\ is NP-complete.

The problem \small $(2,3 )$\normalsize {\sc -hrc} is clearly in NP, as a given assignment may be verified to be a stable matching in polynomial time. 

To show NP-hardness, let $B$ be an instance of \sat.  Let $V=\{v_1,v_2,\dots,v_n\}$ and $C=\{c_1,c_2,\dots,c_m\}$ be the set of variables and clauses respectively in $B$. Then for each $v_i\in V$, each of literals $v_i$ and $\bar{v_i}$ appears exactly twice in $B$. Also $|c_j|=3$ for each $c_j\in C$. (Hence $m=\frac{4n}{3}$.) We form an instance $I$ of \small $(2,3 )$\normalsize {\sc -hrc} as follows.

The set of residents in $I$ is $X\cup P\cup Q$, where $X=\bigcup_{i=1}^{n} X_i$, $X_i=\{x_i^r, \bar{x}_i^r : 1\leq r\leq 2 \}$ ($1\leq i\leq n$), $P=\bigcup_{j=1}^m P_j$, $P_j=\{p_j^s : 1\leq s\leq 3\}$ ($1\leq j\leq m$), $Q=\bigcup_{j=1}^m Q_j$, $Q_j=\{q_j^s : 1\leq s\leq 3\}$ ($1\leq j\leq m$). 

There are no single residents in $I$ and the pairing of the residents into couples is as shown in Figure \ref{preflists23}.

The set of hospitals in $I$ is $H\cup Y$, where $H=\bigcup_{i=1}^{n} H_i$, $H_i=\{h_i^r : 1\leq r\leq 2 \}$ ($1\leq i\leq n$) and $Y=\{y_j^s : 1\leq s\leq 3\}$  $(1\leq j\leq m)$ and each hospital has capacity 1. The preference lists of the resident couples and hospitals in $I$ are shown in Figure \ref{preflists23}. 

In the joint preference list of a couple $(x_i^1, x_i^2)$  $(x_i^1\in X, x_i^2\in X)$ the symbol $y(x_i^r)$  $(1\leq r\leq 2)$ denotes the hospital $y_j^s\in Y$ such that the $r$th occurrence of literal $v_i$ appears at position $s$ of clause $c_j$ in $B$.

Similarly in the joint preference list of a couple $(\bar{x}_i^1, \bar{x}_i^2)$  $(\bar{x}_i^1\in X, \bar{x}_i^2\in X)$ the symbol $y(\bar{x}_i^r)$  $(1\leq r\leq 2)$ denotes the hospital $y_j^s\in Y$ such that the $r$th occurrence of literal $\bar{v}_i$ appears at position $s$ of clause $c_j$ in $B$.

Also in the preference list of a hospital $y_j^s\in Y$, if literal $v_i$ (respectively $\bar{v}_i$) appears at position $s$ of clause $c_j\in C$, the symbol $x(y_j^s)$ denotes the resident $x_i^1$ or $x_i^2$ (respectively  $\bar{x}_i^1$ or $\bar{x}_i^2$) according as this is the first or second occurrence of the literal in $B$.

\begin{figure}
\[
\begin{array}{rll}

(x_i^1, x_i^2) : & (h_i^1, h_i^2) ~~ (y(x_i^1), y(x_i^2)) & (1\leq i\leq n) \vspace{1mm} \\ 
(\bar{x}_i^1, \bar{x}_i^2) : & (h_i^1, h_i^2) ~~ (y(\bar{x}_i^1), y(\bar{x}_i^2)) & (1\leq i\leq n) \vspace{1mm} \\ 

(p_j^1, q_j^1) : & (y_j^1, y_j^2) & (1\leq j\leq m) \vspace{1mm} \\ 
(p_j^2, q_j^2) : & (y_j^2, y_j^3) & (1\leq j\leq m) \vspace{1mm} \\ 
(p_j^3, q_j^3) : & (y_j^3, y_j^1) & (1\leq j\leq m) \vspace{1mm} \\

\\

h_i^1 : & x_i^1 ~~ \bar{x}_i^1 & (1\leq i\leq n) \vspace{1mm}\\
h_i^2 : & \bar{x}_i^2 ~~ x_i^2 & (1\leq i\leq n) \vspace{1mm}\\

y_j^1 : & x(y^1_j) ~~ p_j^1 ~~ q_j^3 & (1\leq j\leq m) \vspace{1mm}\\
y_j^2 : & x(y^2_j) ~~ p_j^2 ~~ q_j^1 & (1\leq j\leq m) \vspace{1mm}\\
y_j^3 : & x(y^3_j) ~~ p_j^3 ~~ q_j^2 & (1\leq j\leq m) \vspace{1mm}\\

\end{array}
\]
\caption{Preference lists in $I$, the constructed instance of $(2,3)$\normalsize {\sc -hrc}.}
\label{preflists23}
\end{figure}

For each $i$ ($1\leq i\leq n$), let $T_i=\{(x_i^1, y(x_i^1)), (x_i^2, y(x_i^2)), (\bar{x}_i^1, h_i^1),(\bar{x}_i^2, h_i^2)\}$ and 
$F_i=\{(x_i^1, h_i^1), (x_i^2, h_i^2),(\bar{x}_i^1, y(\bar{x}_i^1),(\bar{x}_i^2, y(\bar{x}_i^2) \}$.

We claim that $B$ is satisfiable if and only if $I$ admits a stable matching.

Let $f$ be a satisfying truth assignment of $B$.  Define a matching $M$ in $I$ as follows.  For each variable $v_i\in V$, if $v_i$ is true under $f$, add the pairs in $T_i$ to $M$, otherwise add the pairs in $F_i$ to $M$.

Let $j$  $(1\leq j\leq m)$ be given. Then $c_j$ contains at least one literal that is true under $f$. Suppose this literal occurs at position $s$ of $c_j$  $(1\leq s \leq 3)$, then $(x(y^s_j), y^s_j)\in M$. If no other literal in $c_j$ is true then add the pairs $\{ (p^{s+1}_j, y^{s+1}_j), (q^{s+1}_j, y^{s+2}_j)\}$ to $M$ (where addition is taken modulo 3).

No resident pair $(x_i^1, x_i^2)$ or $(\bar{x}_i^1, \bar{x}_i^2)$ may block $M$, as no matching in which $(x_i^1, x_i^2)$ is matched with $(h_i^1, h_i^2)$ is blocked by $(\bar{x}_i^1, \bar{x}_i^2)$ with $(h_i^1, h_i^2)$, and equally no matching in which $(\bar{x}_i^1, \bar{x}_i^2)$ is matched with $(h_i^1, h_i^2)$ is blocked by $(x_i^1, x_i^2)$ with $(h_i^1, h_i^2)$.

No resident pair $(p_j^s, q_j^s)$ may block $M$ as, if $(p_j^s, q_j^s)$ is not matched to $(y_j^s, y_j^{s+1})$ (where addition is taken modulo 3), then at least one of $y^s_j$ or $y^{s+1}_j$ is matched to its first choice and therefore $(p_j^s, q_j^s)$ may not block $M$ with $(y_j^s, y_j^{s+1})$.

Hence $M$ is a stable matching in $I$.

Conversely suppose that $M$ is a stable matching in $I$. We form a truth assignment $f$ in $B$ as follows.

For any $i$  $(1\leq i\leq n)$, if $h^1_i$ and $h^2_i$ are unmatched then $M$ is blocked by ($x^1_i, x^2_i)$ with $(h^1_i, h^2_i)$. Hence either $\{(x_i^1, h_i^1),(x^2_i, h_i^2)\}\subseteq M$ or $\{(\bar{x}_i^1, h_i^1),(\bar{x}^2_i, h_i^2)\}\subseteq M$. Suppose $(x_i^1, x^2_i)$ are unmatched. Then $(x_i^1, x^2_i)$ blocks with $(y(x_i^1), y(x^2_i))$. Similarly $(\bar{x}_i^1, \bar{x}^2_i)$ must be matched. Hence $M\cap (X_i\times (H\cup Y)) = T_i$ or $M\cap (X_i\times (H\cup Y)) = F_i$. In the former case set $f(x_i ) = T_i$ and in the latter case set $f(x_i ) = F_i $.

Now let $c_j$ be a clause in $C$ ($1\leq j\leq m$). Suppose $(x(y^s_j), y^s_j)\notin M$ for all $s$  $(1\leq s\leq 3)$. If $(p^1_j, q^1_j)$ is matched to $(y^1_j, y^2_j)$ then $(p^2_j, q^2_j)$ blocks with $(y^2_j, y^3_j)$. If $(p^2_j, q^2_j)$ is matched to $(y^2_j, y^3_j)$ then $(p^3_j, q^3_j)$ blocks $M$ with $(y^3_j, y^1_j)$. If $(p^3_j, q^3_j)$ is matched to $(y^3_j, y^1_j)$ then $(p^1_j, q^1_j)$ blocks $M$ with $(y^1_j, y^2_j)$. If $\{ (p^s_j, y^s_j), (q^s , y_j , ^{s+1} ) \} \not \subseteq M$ for $1\leq s\ leq 3$ (where addition is taken modulo 3) then $(p^1 _j, q^1 _j)$ blocks $M$ with $(y^1_j , y^2_j )$. Therefore $(x(y^s_j), y^s_j)\in M$ for some $s$  $(1\leq s\leq 3)$ by the stability of $M$.

If $x(y^s_j)=x^r_i$ then $(x^r_i, y^s_j)\in T_i$ and therefore $v_i$ is true under $f$ and hence $c_j$ is true under $f$. Otherwise $x(y^s_j)=\bar{x}^r_i$, so $(\bar{x}^r_i, y^s_j)\in F_i$ and $\bar{v}_i$ is false under $f$ and hence $c_j$ is true under $f$.

Hence $f$ is a satisfying truth assignment of $B$.

\end{proof}

\begin{corollary}
\label{23ml}
Given an instance of \small $(2,3)$\normalsize {\sc -hrc}, the problem of deciding whether a stable matching exists is NP-complete. The result holds even in the case where there are no single residents, each hospital has capacity 1 and the preference list of each couple and hospital are derived from a strictly ordered master list of pairs of hospitals and residents respectively.
\end{corollary}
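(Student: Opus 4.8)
The plan is to show that \emph{no new reduction is required}: the instance $I$ constructed in the proof of Theorem~\ref{23-HRC} already satisfies the master-list restriction. So it suffices to exhibit a single master list of residents from which every hospital's list is derived, and a single master list of hospital pairs from which every couple's list is derived, and then to invoke the correctness argument of Theorem~\ref{23-HRC} verbatim. Since the instance is literally unchanged, the equivalence ``$B$ is satisfiable iff $I$ admits a stable matching'' carries over with no modification, and NP-membership is immediate as before; the entire content of the corollary is the verification that suitable master lists exist.

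For the hospitals I would partition the residents into three blocks, listed in the order: the literal-residents $\{x_i^r,\bar x_i^r\}$, then $\{p_j^s\}$, then $\{q_j^s\}$. Within the literal block I order the residents so that, for each $i$, $x_i^1$ precedes $\bar x_i^1$ and $\bar x_i^2$ precedes $x_i^2$; the two constraints arising from $h_i^1$ and $h_i^2$ involve the disjoint pairs $\{x_i^1,\bar x_i^1\}$ and $\{x_i^2,\bar x_i^2\}$, so they can be satisfied simultaneously. I would then check that every hospital list is a sublist of this master list: each $h_i^1$ and $h_i^2$ list lies wholly inside the literal block and respects the chosen intra-block order, while each $y_j^s$ list reads one literal-resident $x(y_j^s)$, then $p_j^s$, then a $q$-resident, which is exactly top-block, then middle-block, then bottom-block.

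For the couples I would use the master list of hospital pairs that places every pair $(h_i^1,h_i^2)$ before every ``$y$-pair'', i.e.\ every pair of the form $(y(x_i^1),y(x_i^2))$, $(y(\bar x_i^1),y(\bar x_i^2))$, or $(y_j^s,y_j^{s+1})$. Each length-two couple ranks $(h_i^1,h_i^2)$ above its $y$-pair, which is consistent with this order, and each length-one couple has a singleton list that is trivially a sublist.

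The only real content, and the step I expect to be the crux, is verifying \emph{global} consistency across gadgets, so that the two orders above genuinely exist. I would handle this with a short acyclicity argument. The ``is-preferred-to'' relation induced by the union of all hospital lists has edges only of the forms $x_i^1\!\to\!\bar x_i^1$ and $\bar x_i^2\!\to\!x_i^2$ (within the literal block, and for each $i$ these are two disjoint edges on four distinct residents, hence acyclic), $x(y_j^s)\!\to\!p_j^s$ (literal block to $p$-block), and $p_j^s\!\to\!q$ ($p$-block to $q$-block). Since every $p$- and $q$-resident occurs in exactly one list and every literal-resident occurs in exactly one $h$-list and one $y$-list, no two residents ever appear together in two different lists; combined with the block stratification this shows the relation is a DAG, whose linear extension is the required resident master list. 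The analogous relation on hospital pairs has all of its edges directed from $h$-pairs to $y$-pairs, with the $h$-pairs never appearing as targets, so it too is acyclic and extends to the pair master list, completing the proof.
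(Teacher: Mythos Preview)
Your proposal is correct and takes essentially the same approach as the paper: both observe that the instance $I$ built in Theorem~\ref{23-HRC} already satisfies the master-list restriction, so nothing further is needed beyond exhibiting suitable master orders and invoking Theorem~\ref{23-HRC}. The paper writes the master lists out explicitly (literal-residents ordered $x_i^1,\bar x_i^1,\bar x_i^2,x_i^2$ for each $i$, then all $p_j^s$, then all $q_j^s$; and $h$-pairs before $y$-pairs), whereas you argue more abstractly that the induced precedence relation is a DAG and hence linearisable---a presentational difference only, with the same block stratification underneath.
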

\begin{proof}

Consider the instance $I$ of \small $(2,3)$\normalsize {\sc -hrc} as constructed in the proof of Theorem \ref{23-HRC}, and let $n$ and $m$ be defined as in the proof of Theorem \ref{23-HRC}. The master lists shown in Figures \ref{masterlistrescouples23} and \ref{masterlisthospitals23} indicate that the preference list of each resident couple and hospital may be derived from a master list of hospital pairs and residents respectively. Since there are no single residents in $I$, no preferences are expressed for individual hospitals in $I$.

As we have previously shown that \sat\ may be reduced to the instance $I$ in polynomial time, the corollary has been proven. 

\begin{figure}
\[
\begin{array}{cll}
L^1_i :
(h_{i}^{1 }, h_{i}^{2}) ~

(y(x_{i}^{1}), y(x_{i}^{2})) ~

(y(\bar{x}_{i}^{1}), y(\bar{x}_{i}^{2}) ) &

&

\vspace{1mm}

(1\leq i\leq n) \\
\\

L^2_i :

(y^1_j , y^2_j) ~
(y^2_j , y^3_j) ~
(y^3_j , y^1_j) &

&

\vspace{1mm}

(1\leq j\leq m)
\\
\\

\mbox{Master List} : L^1_1 ~ L^1_2\dots L^1_{n} ~ L^2_1 ~ L^2_2\dots L^2_{m}

\end{array}
\]
\caption{Master list of preferences for resident couples in $(2,3)$\normalsize {\sc -hrc} instance $I$.}
\label{masterlistrescouples23}
\end{figure}

\begin{figure}
\[
\begin{array}{cll}

L^3_i : 
x_{i}^1 ~
\bar{x}_{i}^1 ~

\bar{x}_{i}^2 ~
x_{i}^2 &

&

\vspace{1mm}

(1\leq i\leq n) \\

\\

L^4_j : p_j^1 ~ p_j^2 ~ p_j^3 &

&

\vspace{1mm}

(1\leq j\leq m)\\ \\

L^5_i : q_j^1 ~ q_j^2 ~ q_j^3 &

&

\vspace{1mm}

(1\leq j\leq m)\\

\\
\mbox{Master List} : L^3_1 ~ L^3_2\dots L^3_n ~ L^4_1 ~ L^4_2\dots L^4_{m} ~  L^5_1 ~ L^5_2\dots L^5_{m} 
\end{array}
\]
\caption{Master list of preferences for hospitals in $(2,3)$\normalsize {\sc -hrc} instance $I$.}
\label{masterlisthospitals23}
\end{figure}
\end{proof}

\subsection{Complexity results for \small $(3,3)$\normalsize {\sc -hrc-dual-market} with master lists}
\label{subsection:33NPcomplete}

\begin{theorem}
\label{33-COM-HRC}
Given an instance of \small $(3,3)$\normalsize {\sc -hrc}, the problem of deciding whether the instance supports a complete stable matching is NP-complete. The result holds even if all hospitals have capacity 1.
\end{theorem}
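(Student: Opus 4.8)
The plan is to establish NP-hardness by a reduction from \sat, mirroring the constructions of Theorems \ref{22-HRC} and \ref{23-HRC} but with the target property being the existence of a \emph{complete} stable matching, i.e.\ one that assigns every resident. Membership in NP is immediate, since a proposed matching can be checked to be both complete and stable in time polynomial in the size of the instance. For the reduction, given an instance $B$ of \sat\ with variables $v_1,\dots,v_n$ and clauses $c_1,\dots,c_m$, I would build a $(3,3)$-{\sc hrc} instance $I$ from a \emph{variable gadget} for each $v_i$ and a \emph{clause gadget} for each $c_j$, re-using the literal-linking conventions of Theorem \ref{23-HRC} in which each literal occurrence is tied to one clause hospital.

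The variable gadget would again rest on a couple (together with its literal hospitals) whose only two stability-consistent configurations $T_i$ and $F_i$ correspond to $v_i$ being true or false. As in the converse direction of Theorem \ref{23-HRC}, stability forces each such gadget into exactly one of these configurations, so every stable matching induces a well-defined truth assignment $f$ and leaves no resident of the variable gadget unmatched. The whole burden of detecting satisfaction is therefore shifted onto completeness of the clause gadgets.

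The new ingredient is the clause gadget, where I would exploit the extra slot afforded by lists of length up to $3$. Over the three clause hospitals $y_j^1,y_j^2,y_j^3$ I would place a short cycle of internal couples (in the style of the $(p_j^s,q_j^s)$ couples of Theorem \ref{23-HRC}) together with one distinguished \emph{witness} resident whose acceptable assignments are arranged so that it can be matched, without creating any blocking pair, if and only if at least one clause hospital is occupied by its top-ranked literal resident, that is, if and only if some literal of $c_j$ is true under $f$. When all three literals are false, every stable matching of the gadget is forced to leave the witness unassigned, so the overall matching cannot be complete; when a literal is true, the internal couples can be routed around the cycle starting from the satisfied position so that the witness and all internal residents are matched. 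I expect the main obstacle to be engineering this gadget within the $(3,3)$ length bound: each clause hospital must simultaneously rank its literal resident, the internal cycle residents, and the witness, while the cyclic preferences of the internal couples must block precisely the all-false configurations and never block a satisfied one.

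With the gadgets in hand I would prove that $B$ is satisfiable if and only if $I$ admits a complete stable matching. For the forward direction, from a satisfying $f$ I place each variable gadget in $T_i$ or $F_i$, route each clause cycle from a true-literal position, match the corresponding witness, and then verify, exactly as in Theorem \ref{23-HRC}, that no couple or hospital blocks and that every resident is assigned. For the converse, any complete stable matching yields a truth assignment via the variable gadgets, and completeness forces, in each clause gadget, at least one clause hospital to carry its literal resident, so every clause has a true literal and $f$ satisfies $B$. Since $I$ has $O(n+m)$ residents and hospitals, all with preference lists of length at most $3$ and all hospitals of capacity $1$, the reduction is polynomial, which completes the proof.
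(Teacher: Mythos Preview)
Your high-level plan is sound and matches the paper's: reduce from \sat, build variable gadgets with exactly two stable configurations $T_i/F_i$, and build clause gadgets whose \emph{completeness} (rather than mere stability) encodes satisfaction via a witness resident. Membership in NP and the overall bi-implication structure are as you describe.

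However, the proposal defers precisely the step that carries all the weight, and the starting point you suggest has two concrete obstructions. First, in the $(2,3)$ construction each clause hospital $y_j^s$ already has a length-$3$ list $x(y_j^s),\,p_j^s,\,q_j^{s-1}$; inserting a witness resident there immediately violates the $(3,3)$ bound, so the gadget cannot be obtained by ``adding a witness'' to the $(2,3)$ clause cycle. Second, the $(p_j^s,q_j^s)$ cycle in Theorem~\ref{23-HRC} is designed so that when no literal is true the gadget admits \emph{no} stable matching at all; here you need the opposite behaviour---a stable but \emph{incomplete} configuration when all literals are false---so the cycle must be re-engineered, not reused.

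The paper resolves both issues with a substantially different construction (adapted from Irving et al.). The variable gadget uses four couples $(x_{4i+r},k_{4i+r})$, $0\le r\le 3$, arranged in a $4$-cycle over hospitals $y_{4i+r},l_{4i+r}$, so that in either $T_i$ or $F_i$ every variable-resident is matched inside the gadget and the link to the clause side is only through potential blocking via the middle entry $c(x_{4i+r})$. The clause gadget introduces five auxiliary hospitals $z_j^1,\dots,z_j^5$, three couples $(p_j^s,p_j^{s+3})$ with lists $(z_j^1,z_j^2),(c_j^s,z_j^{s+2})$, and \emph{two} single witnesses $q_j$ (over $c_j^1,c_j^2,c_j^3$) and $t_j$ (over $z_j^3,z_j^4,z_j^5$). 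This extra layer of $z$-hospitals is what allows every preference list to stay within length $3$ while simultaneously (i) giving $q_j$ a seat at exactly one $c_j^s$ in any complete stable matching, and (ii) ensuring that when all three literals are false, the $p$-couples cannot be absorbed without leaving $q_j$ or $t_j$ unmatched. Your sketch does not supply any analogue of the $z$-hospitals or the second witness, and without them it is not clear how to meet the length bound and the incompleteness requirement at once.

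In short: right strategy, but the gadget you gesture at does not fit, and the paper's construction is not a minor tweak of Theorem~\ref{23-HRC}---it is a different and more elaborate design that you would still need to produce.
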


\begin{proof}
The problem is clearly in NP, as a given assignment may be verified to be a complete, stable matching in polynomial time. 

We use a reduction from a restricted version of {\sc sat}.  More specifically, let \sat\ denote the problem of deciding, given a Boolean formula $B$ in CNF over a set of variables $V$, whether $B$ is satisfiable, where $B$ has the following properties: (i) each clause contains exactly 3 literals and (ii) for each $v_i\in V$, each of literals $v_i$ and $\bar{v_i}$ appears exactly twice in $B$. Berman et al.\ \cite{BKS03} showed that \sat\ is NP-complete.

Let $B$ be an instance of \sat. We construct an instance $I$ of \small $(3,3)$\normalsize {\sc -hrc} using a similar reduction to that employed by Irving et al. \cite{IMO06}. Let $V=\{v_0,v_1,\dots,v_{n-1}\}$ and $C=\{c_1,c_2,\dots,c_m\}$ be the set of variables and clauses respectively in $B$.  Then for each $v_i\in V$, each of literals $v_i$ and $\bar{v_i}$ appears exactly twice in $B$. Also $|c_j|=3$ for each $c_j\in C$. (Hence $m=\frac{4n}{3}$.) All hospitals in $I$ have capacity 1.

The set of residents in $I$ is $X\cup K\cup P\cup Q\cup T$, where $X=\cup_{i=0}^{n-1} X_i$, $X_i=\{x_{4i+r} : 0\leq r\leq 3\}$ ($0\leq i\leq n-1$), $K=\cup_{i=0}^{n-1} K_i$, $K_i=\{k_{4i+r} : 0\leq r\leq 3\}$ ($0\leq i\leq n-1$), $P=\cup_{j=1}^m P_j$, $P_j=\{p_j^r : 1\leq r\leq 6\}$ ($1\leq j\leq m$), $Q=\{q_j : c_j\in C\}$ and $T=\{t_j : c_j\in C\}$. The residents in $Q\cup T$ are single and the residents in $X\cup K\cup P$ are involved in couples.

The set of hospitals in $I$ is $Y\cup L\cup C'\cup Z$, where $Y=\cup_{i=0}^{n-1} Y_i$, $Y_i=\{y_{4i+r} : 0\leq r\leq 3\}$ ($0\leq i\leq n-1$), $L=\cup_{i=0}^{n-1} L_i$, $L_i=\{l_{4i+r} : 0\leq r\leq 3\}$ ($0\leq i\leq n-1$), $C'=\{c_j^s : c_j\in C\wedge 1\leq s\leq 3\}$ and $Z=\{z_j^r : 1\leq j\leq m\wedge 1\leq r\leq 5\}$.

In the joint preference list of a couple $(x_{4i+r}, k_{4i+r}) (x_{4i+r}\in X, k_{4i+r}\in K)$, if $r\in \{0,1\}$, the symbol $c(x_{4i+r})$ 
denotes the hospital $c_j^s\in C'$ such that the $(r+1)$th occurrence of literal $v_i$ appears at position $s$ of $c_j$.  Similarly if $r\in \{2,3\}$ then the symbol $c(x_{4i+r})$ denotes the hospital $c_j^s\in C'$ such that the $(r-1)$th occurrence of literal $\bar{v_i}$ appears at position $s$ of $c_j$. Also in the preference list of a hospital $c_j^s\in C'$, if literal $v_i$ appears at position $s$ of clause $c_j\in C$, the symbol $x(c_j^s)$ denotes the resident $x_{4i+r-1}$ where $r=1,2$ according as this is the first or second occurrence of literal $v_i$ in $B$.  Otherwise
if literal $\bar{v_i}$ appears at position $s$ of clause $c_j\in C$, the symbol $x(c_j^s)$ denotes the resident $x_{4i+r+1}$ where $r=1,2$ according as this is the first or second occurrence of literal $\bar{v_i}$ in $B$.  

The preference lists of the residents and hospitals in $I$ are shown in Figure \ref{preflists33} and diagramatically in Figures \ref{image:clausegadget} and \ref{image:variablegadget}. Clearly each preference list is of length at most 3.

\begin{figure}
\[
\begin{array}{rll}
(x_{4i}, k_{4i}) : & (y_{4i}, l_{4i}) ~~ (c(x_{4i}), l_{4i+1}) ~~ (y_{4i+1}, l_{4i+1}) & (0\leq i\leq n-1)\\
(x_{4i+1}, k_{4i+1}) : & (y_{4i+1}, l_{4i+1}) ~~ (c(x_{4i+1}), l_{4i+2}) ~~ (y_{4i+2}, l_{4i+2}) & (0\leq i\leq n-1)\\
(x_{4i+2}, k_{4i+2}) : & (y_{4i+3}, l_{4i+3}) ~~ (c(x_{4i+2}), l_{4i+2}) ~~ (y_{4i+2}, l_{4i+2}) & (0\leq i\leq n-1)\\
(x_{4i+3}, k_{4i+3}) : & (y_{4i}, l_{4i}) ~~ (c(x_{4i+3}), l_{4i+3}) ~~ (y_{4i+3}, l_{4i+3}) & (0\leq i\leq n-1)\\
(p_j^1, p_j^4) : & (z^1_j, z^2_j)  ~~ (c_j^1, z^3_j) & (1\leq j\leq m) \vspace{1mm}\\
(p_j^2, p_j^5) : & (z^1_j, z^2_j)  ~~ (c_j^2, z^4_j) & (1\leq j\leq m) \vspace{1mm}\\
(p_j^3, p_j^6) : & (z^1_j, z^2_j)  ~~ (c_j^3, z^5_j) & (1\leq j\leq m) \vspace{1mm}\\ 
q_j : & c_j^1 ~~ c_j^2 ~~ c_j^3 & (1\leq j\leq m)\\ 
t_j : & z^3_j ~~ z^4_j ~~ z^5_j & (1\leq j\leq m) \vspace{1mm} \\ \\

y_{4i} : & x_{4i} ~~ x_{4i+3} & (0\leq i\leq n-1)\\
y_{4i+1} : & x_{4i+1} ~~ x_{4i} & (0\leq i\leq n-1)\\
y_{4i+2} : & x_{4i+1} ~~ x_{4i+2} & (0\leq i\leq n-1)\\
y_{4i+3} : & x_{4i+2} ~~ x_{4i+3} & (0\leq i\leq n-1)\\
l_{4i} : & k_{4i+3} ~~ k_{4i} & (0\leq i\leq n-1)\\
l_{4i+1} : & k_{4i} ~~ k_{4i+1} & (0\leq i\leq n-1)\\
l_{4i+2} : & k_{4i+2} ~~ k_{4i+1} & (0\leq i\leq n-1)\\
l_{4i+3} : & k_{4i+3} ~~ k_{4i+2} & (0\leq i\leq n-1)\\
z^1_j : & p_j^1 ~~ p_j^2 ~~ p_j^3 & (1\leq j\leq m) \vspace{1mm}\\
z^2_j : & p_j^6 ~~ p_j^5 ~~ p_j^4 & (1\leq j\leq m) \vspace{1mm}\\
z^3_j : & p_j^4 ~~ t_j & (1\leq j\leq m) \vspace{1mm}\\
z^4_j : & p_j^5 ~~ t_j & (1\leq j\leq m) \vspace{1mm}\\
z^5_j : & p_j^6 ~~ t_j & (1\leq j\leq m) \vspace{1mm} \\
c_j^s : & p_j^s ~~ x(c_j^s) ~~ q_j & (1\leq j\leq m\wedge 1\leq s\leq 3) \vspace{1mm}\\

\end{array}
\]
\caption{Preference lists in $I$, the constructed instance of \small $(3,3)$\normalsize {\sc -hrc}.}
\label{preflists33}
\end{figure}

\begin{figure}
\[
\includegraphics[width = \linewidth]{./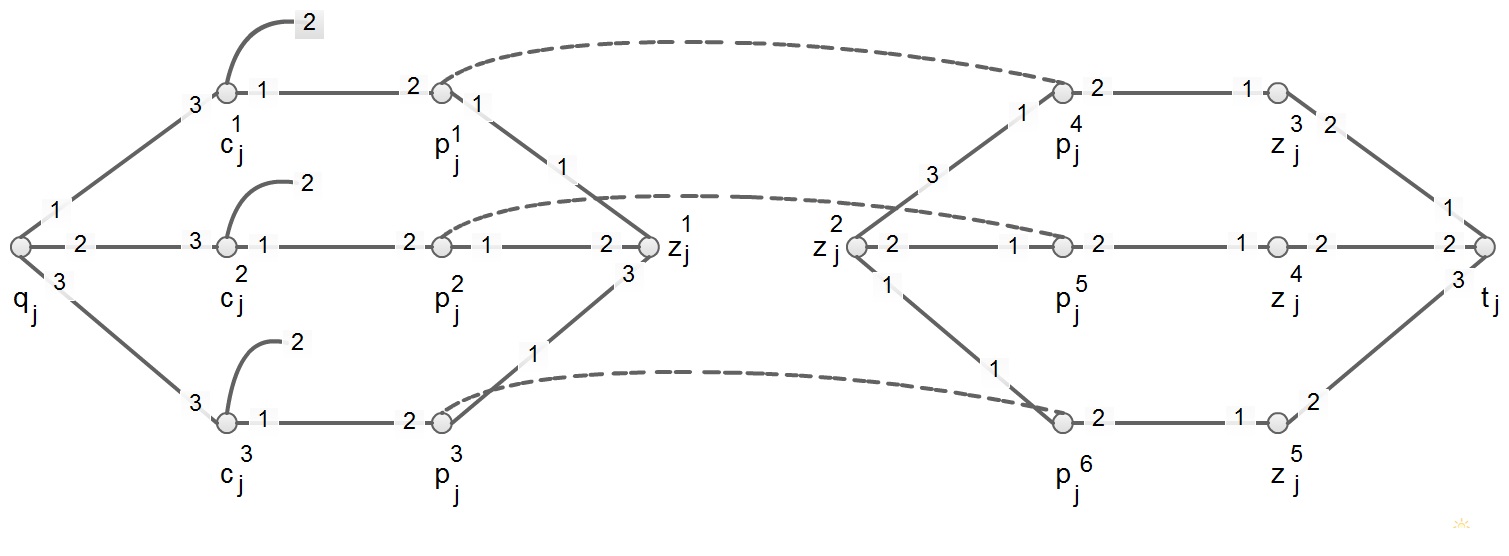}
\]
\caption{Diagram of subset of agents in $I$, the constructed instance of \small $(3,3)$\normalsize {\sc -hrc}.}
\label{image:clausegadget}
\end{figure}

\begin{figure}
\[
\includegraphics[scale = 0.375]{./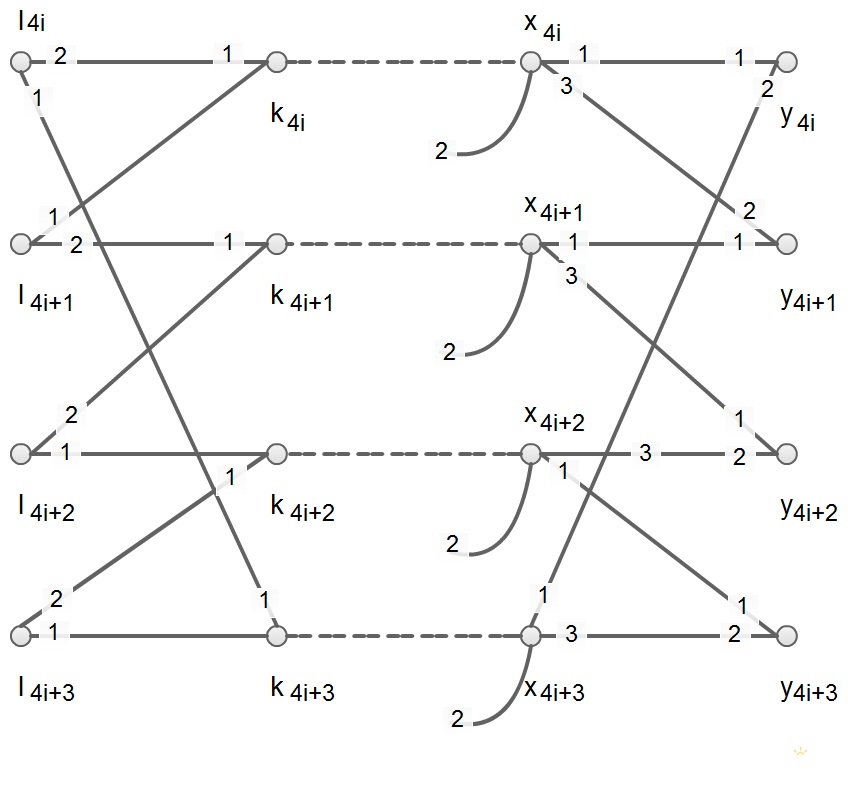}
\]
\caption{Diagram of subset of agents in $I$, the constructed instance of \small $(3,3)$\normalsize {\sc -hrc}.}
\label{image:variablegadget}
\end{figure}

For each $i$ ($0\leq i\leq n-1$), let $T_i=\{(x_{4i+r},y_{4i+r}) : 0\leq r\leq 3\} \cup \{(k_{4i+r},l_{4i+r}) : 0\leq r\leq 3\}$ and 
$F_i=\{(x_{4i+r},y_{4i+r+1})\} : 0\leq r\leq 3\}\cup \{(k_{4i+r},l_{4i+r+1})\} : 0\leq r\leq 3\}$, where addition is taken modulo $4$.

We claim that $B$ is satisfiable if and only if $I$ admits a complete stable matching. 

Firstly suppose that $B$ is satisfiable and let $f$ be a satisfying truth assignment of $B$.  Define a complete matching $M$ in $I$ as follows.  For each variable $v_i\in V$, if $v_i$ is true under $f$, add the pairs in $T_i$ to $M$, otherwise add the pairs in $F_i$ to $M$.

Let $j (1\leq j\leq m)$ be given. Then $c_j$ contains at least one literal that is true under $f$. Suppose this literal occurs at position $s$ of $c_j$  $(1\leq s \leq 3)$. Then add the pairs ($q_j, c^s_j$), ($p^s_j, z^1_j$) and ($p_j^{s+3}, z^2_j$) to $M$. For each $b (1\leq b\leq 3, b \neq s)$ add the pairs ($p^b_j,c^b_j$) and ($p_j^{b+3}, z_j^{b+2}$) to $M$. Finally add the pair ($t_j, z_j^{s+2}$) to $M$.

No resident in $Q$ may form a blocking pair (since he can only potentially prefer a hospital in $C$, which ranks him last) nor can a resident in $T$ (since he can only potentially prefer a hospital $z_j (3\leq j\leq 5)$
who ranks him last).

Suppose that ($x_{4i+r}, k_{4i+r}$) blocks $M$ with $(c(x_{4i+r}), l_{4i+a})$, where $0\leq i\leq n-1$, $0\leq r\leq 3$ and $1\leq a\leq 3$. Then ($x_{4i+r}, k_{4i+r}$) are jointly matched with their third choice pair.

\noindent\emph{Case (i)}: $r\in \{0,1\}$. Then $f(v_i) = F$ because $(x_{4i+r}, y_{4i+r+1})\in M$ and $(k_{4i+r}, l_{4i+r+1})\in M$. Let $c^s_j = c(x_{4i+r}) (1\leq s\leq 3 and 1\leq j\leq m)$. As $v_i$ does not make $c_j$ true then $(p_j^s, c_j^s)\in M$. This means that $c_j^s$ has its first choice and cannot form part of a blocking pair,
a contradiction.

\noindent\emph{Case (ii)}: $r\in \{2,3\}$ Then $f(v_i) = T$ because $(x_{4i+r}, y_{4i+r})\in M$ and $(k_{4i+r}, l_{4i+r})\in M$. Let $c^s_j = c(x_{4i+r}) (1\leq s\leq 3 and 1\leq j\leq m)$. As $\bar{v}_i$ does not make $c_j$ true then $(p_j^s, c_j^s)\in M$. This means that $c_j^s$ has its first choice and cannot form part of a blocking pair,
a contradiction.

Now suppose $(p^s_j, p_j^{s+3})$ blocks $M$. Then $(p^s_j, c^s_j)\in M$ and $(p_j^{s+3}, z_j^{s+2})\in M$, and $(p_j^s, p_j^{s+3})$ jointly prefer $(z^1_j, z^2_j)$ to their partners. At most one of $\{z^1_j, z^2_j\}$  can prefer the relevant member of $\{p^s_j, p^{s+3}_j\}$ to their partner, whilst the other would prefer their current partner in $M$ and thus could not form part of a blocking pair, a contradiction. Hence, $M$ is a complete stable matching in $I$.

Conversely suppose that $M$ is a complete stable matching in $I$. We form a truth assignment $f$ in $B$ as follows.  For each $i$ ($0\leq i\leq n-1$), 
$M\cap ( (X_i\times Y_i)\cup (K_i\times L_i) )$ is a perfect matching of  ($X_i\cup Y_i) \cup (K_i\cup L_i$). If $M\cap ( (X_i\times Y_i)\cup (K_i\times L_i))=T_i$, set $v_i$ to be true under $f$.  Otherwise $M\cap ( (X_i\times Y_i)\cup (K_i\times L_i))=F_i$, in which case we set $v_i$ to be false under $f$.

Now let $c_j$ be a clause in $C$ ($1\leq j\leq m$).  There exists some $s$ ($1\leq s\leq 3$) such that $(q_j,c_j^s)\in M$.  Let $x_{4i+r}=x(c_j^s)$, for some $i$ ($0\leq i\leq n-1$) and $r$ ($0\leq r\leq 3$).  If $r\in \{0,1\}$ then $(x_{4i+r},y_{4i+r})\in M$ (or equivalently $(k_{4i+r}, l_{4i+r})\in M$) by the stability 
of $M$.  Thus variable $v_i$ is true under $f$, and hence clause $c_j$ is true under $f$, since literal $v_i$ occurs in $c_j$.  If $r\in \{2,3\}$ then $(x_{4i+r},y_{4i+r+1})\in M$ (or equivalently $(k_{4i+r}, l_{4i+r+1})\in M$) (where addition is taken modulo $4$) by the stability of $M$.  Thus variable $v_i$ is false under $f$, and hence clause $c_j$ is true under $f$, since literal $\bar{v_i}$ occurs in $c_j$.  Hence $f$ is a satisfying truth assignment of $B$.

Thus, the claim holds and $B$ is satisfiable if and only if $I$ admits a complete, stable matching.
\end{proof}

%
%
%\begin{theorem}
%\label{33-HRC}
%The decision problem of whether a stable matching exists in an instance of $(3,3)$\normalsize {\sc -hrc} is NP-complete. The result holds even if each hospital has capacity 1 and the preference list of each single resident, couple and hospital are derived from a strictly ordered master list of hospitals, pairs of hospitals and residents respectively.
%\end{theorem}
%

Let $I$ be the instance of \small $(3,3)$\normalsize {\sc -com-hrc} as constructed in the proof of Theorem \ref{33-COM-HRC} from an instance of \sat. We add additional residents and hospitals to $I$ to obtain a new instance $I^{\prime }$ of \small $(3,3)$\normalsize {\sc -com-hrc} as follows. For every $y_{4i+r} \in Y$ add further residents $U = \cup_ {i=0}^{n-1} U_i, U_i = \{ u^s _{4i+r} : 1\leq s \leq 5 \wedge 0\leq r \leq 3 \}$ and further hospitals $H = \cup_ {i=0}^{n-1} H_i, H_i = \{ h^s _{4i+r} : 1\leq s \leq 4 \wedge 0\leq r \leq 3 \}$. The preference lists of the agents so added for a single $y_{4i+r} \in Y$ are shown in Figure \ref{preflists:addedgadget} and diagrammatically in Figure \ref{image:addedgadget} where $\Phi _{4i+r}$ represents those preferences expressed by $y_{4i+r}$ in $I^{\prime } \setminus ( U \cup H)$.

\begin{figure}
\[
\begin{array}{rll}
(u^1_{4i+r}, u^2_{4i+r}) : & (h^1 _{4i+r}, h^2 _{4i+r}) ~~ ~~ & (0\leq i\leq n-1, 0\leq r\leq 3)\\
(u^3_{4i+r}, u^4_{4i+r}) : & (h^1 _{4i+r}, h^4 _{4i+r}) ~~ (h^3 _{4i+r}, h^2 _{4i+r})~~ & (0\leq i\leq n-1, 0\leq r\leq 3)\\
u^5 _{4i+r} : & y_{4i+r} ~~ h^1 _{4i+r} ~~ & (0\leq i\leq n-1, 0\leq r\leq 3) \vspace{1mm} \\ \\

y_{4i+r} : & \Phi _{4i+r} ~~ u^5 _{4i+r} & (0\leq i\leq n-1, 0\leq r \leq 3) \vspace{1mm} \\
h^1_{4i+r} : & u^5 _{4i+r} ~~ u^1 _{4i+r} ~~ u^3 _{4i+r} & (0\leq i\leq n-1, 0\leq r \leq 3) \vspace{1mm}\\
h^2_{4i+r} : & u^4 _{4i+r} ~~ u^2 _{4i+r} & (0\leq i\leq n-1, 0\leq r \leq 3) \vspace{1mm}\\
h^3_{4i+r} : & u^3 _{4i+r} & (0\leq i\leq n-1, 0\leq r \leq 3) \vspace{1mm}\\
h^4_{4i+r} : & u^4 _{4i+r} & (0\leq i\leq n-1, 0\leq r \leq 3) \vspace{1mm}\\

\end{array}
\]
\caption{Added agents in $I^{\prime }$, the constructed instance of \small $(3,3)$\normalsize {\sc -hrc}.}
\label{preflists:addedgadget}
\end{figure}

\begin{figure}
\[
\includegraphics[scale = 0.35]{./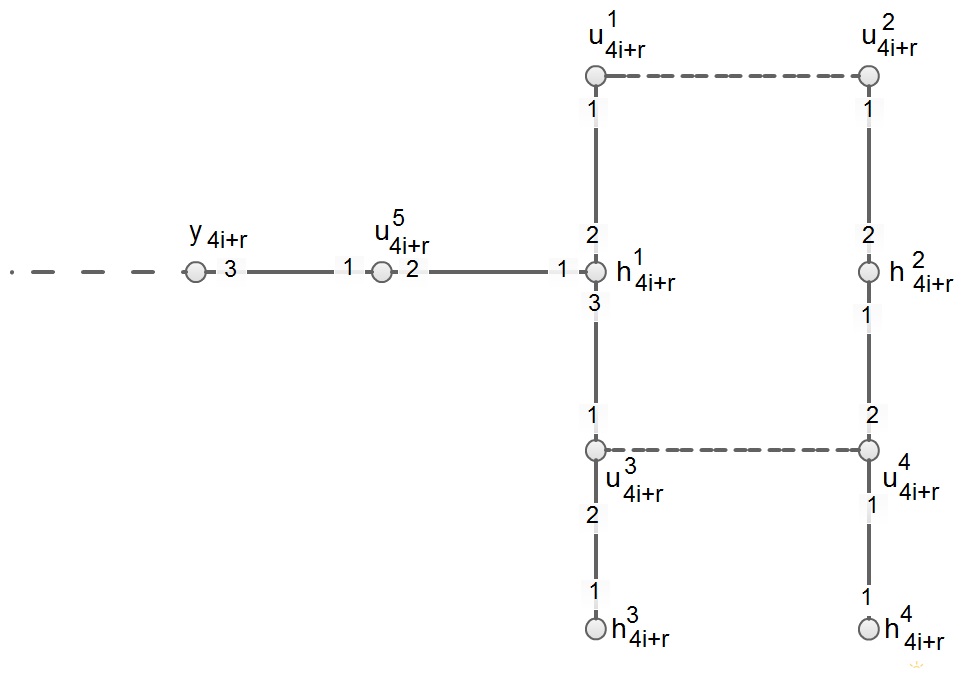}
\]
\caption{Diagram of added agents in $I^{\prime }$, the constructed instance of \small $(3,3)$\normalsize {\sc -hrc}.}
\label{image:addedgadget}
\end{figure}

\begin{lemma}
\label{y4i+rmustbematched}
In any stable matching $M$ in $I^{\prime }$, for every $y_{4i+r} \in Y$, $M(y_{4i+r}) \in X$.  
\end{lemma}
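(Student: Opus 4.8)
The plan is to argue by contradiction, exploiting the fact that the only acceptable partners of $y_{4i+r}$ in $I'$ are the members of $X$ occurring in $\Phi_{4i+r}$ together with the single resident $u^5_{4i+r}$ (see Figure \ref{preflists:addedgadget}). Consequently, if $M(y_{4i+r})\notin X$ then either $y_{4i+r}$ is unmatched or $M(y_{4i+r})=u^5_{4i+r}$, so it suffices to rule out these two possibilities. Throughout I would fix $i,r$ and suppress the subscript $4i+r$, writing $y,u^1,\dots,u^5,h^1,\dots,h^4$, and argue entirely within the isolated gadget.

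First I would dispose of the case that $y$ is unmatched. Since $u^5$ ranks $y$ in first place, $u^5$ strictly prefers $y$ to $h^1$ and to being unmatched; as an unmatched $y$ is under-subscribed, the single resident $u^5$ and the hospital $y$ form a blocking pair of the classical type (clause~1 of Definition \ref{stability:MM}), contradicting stability. Hence $y$ must be matched, and the whole difficulty reduces to excluding $M(y)=u^5$.

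The main obstacle is exactly this remaining case, which I would settle by a finite case analysis of how the couples $(u^1,u^2)$ and $(u^3,u^4)$ can be assigned once $u^5$ is unavailable to $h^1$. Note that $(u^1,u^2)$ has the unique acceptable joint assignment $(h^1,h^2)$, that $(u^3,u^4)$ may be assigned only to $(h^1,h^4)$ or $(h^3,h^2)$, and that $h^3,h^4$ accept only $u^3,u^4$ respectively. Since the assignment $(h^1,h^2)$ for $(u^1,u^2)$ conflicts (over $h^1$, respectively $h^2$) with each option for $(u^3,u^4)$, at most one couple can be assigned, leaving exactly four configurations. In each I would exhibit a blocking pair: if $(u^1,u^2)\mapsto(h^1,h^2)$ and $(u^3,u^4)$ is unassigned, then $(u^3,u^4)$ blocks with $(h^3,h^2)$, since $h^3$ is under-subscribed and $h^2$ prefers $u^4$ to its assignee $u^2$; if $(u^3,u^4)\mapsto(h^1,h^4)$ (forcing $(u^1,u^2)$ unassigned), then $(u^1,u^2)$ blocks with $(h^1,h^2)$, since $h^1$ prefers $u^1$ to its assignee $u^3$ and $h^2$ is under-subscribed; if $(u^3,u^4)\mapsto(h^3,h^2)$ (again forcing $(u^1,u^2)$ unassigned), then $(u^3,u^4)$ blocks with its strictly preferred pair $(h^1,h^4)$, both hospitals being under-subscribed; and if both couples are unassigned, then $(u^1,u^2)$ blocks with $(h^1,h^2)$, both hospitals being under-subscribed.

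In every configuration the stability of $M$ is contradicted, so $M(y)=u^5$ is impossible and therefore $M(y)\in X$. The delicate bookkeeping will be checking the hospital-side condition of each blocking pair against the precise orders on $h^1,\dots,h^4$ (notably $h^1:u^5\ u^1\ u^3$ and $h^2:u^4\ u^2$); I would emphasise that in all four cases the two hospitals of the blocking couple are distinct, so only clause~3(a) of Definition \ref{stability:MM} is ever invoked, which keeps the verification routine.
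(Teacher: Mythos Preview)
Your proposal is correct and follows essentially the same approach as the paper: both argue by contradiction, first eliminate the unmatched case via the blocking pair $(u^5,y)$, and then handle $M(y)=u^5$ by a finite case analysis over the possible assignments of the two couples $(u^1,u^2)$ and $(u^3,u^4)$, exhibiting exactly the same four blocking pairs. The only cosmetic difference is that the paper organises the case split according to whether $h^1$ is matched, whereas you organise it directly by which couple (if any) is assigned; the resulting subcases and blocking pairs coincide.
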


\begin{proof}
Suppose not. Then $y_{4i+r}$ is either unmatched in $M$ or $M(y_{4i+r}) = u^5_{4i+r}$. If $y_{4i+r}$ is unmatched then $( u^5 _{4i+r}, y_{4i+r})$ blocks $M$ in $I^{\prime }$, a contradiction. Hence,  $M(y_{4i+r}) = u^5_{4i+r}$ and thus $(u^5_{4i+r}, h^1_{4i+r}) \notin M$. 

Assume that $h^1_{4i+r}$ is unmatched in $M$. Then $(u^1_{4i+r}, u^2_{4i+r})$ is unmatched in $M$ and $M((u^3_{4i+r}, u^4_{4i+r})) \neq (h^1_{4i+r}, h^4_{4i+r})$. Thus, either $(u^3_{4i+r}, u^4_{4i+r})$ is unmatched in $M$ or $M ((u^3_{4i+r}, u^4_{4i+r}) ) = (h^3_{4i+r}, h^2_{4i+r})$. If $(u^3_{4i+r}, u^4_{4i+r})$ is unmatched in $M$ then $(u^1_{4i+r}, u^2_{4i+r})$ blocks $M$ with $(h^1_{4i+r}, h^2_{4i+r})$ in $I^{\prime }$, a contradiction. Hence $(u^3_{4i+r}, u^4_{4i+r})$ must be matched with $(h^3_{4i+r}, $ $ h^2_{4i+r})$ in $M$. However, now $(u^3_{4i+r}, u^4_{4i+r})$ blocks $M$ with $(h^1_{4i+r}, h^4_{4i+r})$ in $I^{\prime}$, a contradiction. 

Now, assume $h^1_{4i+r}$ is matched in $M$. Further assume that $h^1_{4i+r}$ is matched in $M$ through the joint matching of $(u^1_{4i+r}, u^2_{4i+r})$ with $(h^1 _{4i+r},$ $ h^2 _{4i+r})$. However, in this case, $M$ is blocked by $(u^3_{4i+r}, u^4_{4i+r})$ with $(h^3 _{4i+r}, h^2 _{4i+r})$ in $I^{\prime }$, a contradiction. Thus, $h^1_{4i+r}$ must be matched in $M$ through the joint matching of $(u^3_{4i+r}, u^4_{4i+r})$ with $(h^1 _{4i+r},$ $ h^4 _{4i+r})$. However, in this case, $M$ is blocked by $(u^1_{4i+r}, u^2_{4i+r})$ with $(h^1 _{4i+r}, h^2 _{4i+r})$ in $I^{\prime }$, a contradiction.

%Hence, in any stable matching $M$ in $I^{\prime }$, $M(u^5_{4i+r}) = h^1_{4i+r}$ (further $M((u^3_{4i+r}, u^4_{4i+r})) =  (h^3 _{4i+r},$ $ h^2 _{4i+r})$). Thus, $M(y_{4i+r}) \in X$ in any stable matching $M$ in $I^{\prime }$.

\end{proof}

We now show through the following two Lemmata that if $M^{\prime }$ is a stable matching in $I^{\prime }$ and $$M = M^{\prime } \setminus \{ ( u^p_{4i+r}, h^q_{4i+r} ) : 0\leq i\leq n-1, 0\leq r\leq 3, 1\leq p\leq 5, 1\leq q\leq 5 \}$$ then $M$ is a complete stable matching in $I$, the reduced \small $(3,3)$\normalsize {\sc -hrc} instance obtained by removing all of the agents in $U\cup H$ from $I^{\prime }$.

\begin{lemma}
\label{Z-OR-PUNIONT}
No hospital in $Z\cup C$ may be unmatched and no resident in $P\cup T\cup Q$ may be unmatched in any stable matching $M$ in $I^{\prime }$.
\end{lemma}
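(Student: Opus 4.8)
The plan is to argue entirely within a single clause gadget --- the agents $P_j\cup\{q_j,t_j\}$, the hospitals $\{z_j^1,\dots,z_j^5\}$, and the clause-hospitals $c_j^1,c_j^2,c_j^3$ of $C'$ --- showing that in any stable $M$ in $I'$ every one of these agents is matched, and then to quantify over $j$. Throughout I will only ever exhibit blocking pairs of two kinds: a single resident with a hospital (clause 1 of MM-stability) or a couple with two \emph{distinct} hospitals (clause 3(a)); the same-hospital sub-cases never arise because the two hospitals I use are always distinct.

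First I would establish the ``spine'' of each gadget: $z_j^1$ and $z_j^2$ are matched, to a unique couple. The key structural fact is that $p_j^{s+3}$ is acceptable to $z_j^2$ only as the second component of couple $(p_j^s,p_j^{s+3})$'s first choice $(z_j^1,z_j^2)$, and symmetrically for $p_j^s$ and $z_j^1$; hence $z_j^1$ is matched if and only if $z_j^2$ is matched, and when matched they hold the same couple. So if $z_j^1$ were unmatched then $z_j^2$ would be unmatched too, and $(p_j^1,p_j^4)$ would block with the distinct under-subscribed pair $(z_j^1,z_j^2)$, a contradiction. This pins down a unique index $a$ with $\{(p_j^a,z_j^1),(p_j^{a+3},z_j^2)\}\subseteq M$.

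Next I would force the remaining couples to their second choices. A couple $(p_j^s,p_j^{s+3})$ with $s\neq a$ cannot sit at its first choice, and if it were unmatched it would block with $(c_j^s,z_j^{s+2})$: these hospitals are distinct, each ranks the relevant member of the couple first, and that member is unmatched, so each hospital is under-subscribed or prefers it. Thus $p_j^s\mapsto c_j^s$ and $p_j^{s+3}\mapsto z_j^{s+2}$ for each $s\neq a$, which matches all of $P_j$, the two ``off'' clause-hospitals, and their two $z$-hospitals. The pair $t_j,z_j^{a+2}$ is handled the same way: $z_j^{a+2}$ accepts only $p_j^{a+3}$ (unavailable, at $z_j^2$) and $t_j$, while $t_j$ can reach only $z_j^{a+2}$ (the other $z_j^{s+2}$ hold their top choice $p_j^{s+3}$); were both free, the single resident $t_j$ would block with $z_j^{a+2}$, so they are matched to each other and every $z$-hospital is now matched.

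The main obstacle is the final pair $c_j^a$ and $q_j$, since $c_j^a$ could \emph{a priori} be matched to $x(c_j^a)$, leaving $q_j$ stably unmatched and defeating the claim. Here I would invoke Lemma \ref{y4i+rmustbematched}: it guarantees every $y_{4i+r}$ is matched to a member of $X$, and since each hospital of $Y_i$ accepts only $X_i$ and $|X_i|=|Y_i|=4$, the restriction of $M$ to $X_i\cup Y_i$ is a perfect matching, so every $x$-resident is matched to a $y$-hospital and therefore $(x(c_j^a),c_j^a)\notin M$. Consequently $c_j^a$ can be matched only to $p_j^a$ (unavailable, at $z_j^1$) or to $q_j$; and if $c_j^a$ were unmatched then $q_j$ --- whose only acceptable hospitals are $c_j^1,c_j^2,c_j^3$, the other two holding their preferred $p_j^s$ --- would be unmatched too, so $(q_j,c_j^a)$ would block. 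Hence $c_j^a$ is matched to $q_j$, completing the gadget. Quantifying over all $j$ then shows that no hospital in $Z\cup C'$ and no resident in $P\cup T\cup Q$ is unmatched.
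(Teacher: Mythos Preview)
Your proof is correct and follows essentially the same approach as the paper's: both argue gadget-by-gadget, first forcing $z_j^1,z_j^2$ to be matched via a type~3(a) block with $(p_j^1,p_j^4)$, then forcing each couple $(p_j^s,p_j^{s+3})$ to be matched via $(c_j^s,z_j^{s+2})$, handling $t_j$ and the $z_j^{s+2}$'s similarly, and finally invoking Lemma~\ref{y4i+rmustbematched} plus a counting argument on $X_i,Y_i$ to rule out $x(c_j^a)$ occupying $c_j^a$ so that $q_j$ and $c_j^a$ must be matched to each other. Your version is somewhat cleaner in that you identify the distinguished index $a$ early and use it to structure the remaining steps, whereas the paper treats each subset more independently; one small imprecision is the phrase ``each hospital of $Y_i$ accepts only $X_i$'' (in $I'$ they also accept $u^5_{4i+r}$), but since you have already applied Lemma~\ref{y4i+rmustbematched} to force $M(y_{4i+r})\in X$, the counting argument goes through unchanged.
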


\begin{proof}
Assume $z^1_j$ is unmatched in $M$ for some $j ~ (1\leq j \leq m)$. Thus $(p_j^s, z^2_j) \notin M$  $(4\leq s\leq 6)$ as $z^2_j$ must also be unmatched. Hence, $(z^1_j, z^2_j)$ are jointly unmatched and find $(p^1_j, p^4_j)$ acceptable. Further, $(p^1_j, p^4_j)$ jointly prefers $(z^1_j, z^2_j)$ to any other pair. Hence $(z^1_j, z^2_j)$ blocks $M$ with $(p^1_j, p^4_j)$, a contradiction. Thus, $z^1_j$ must be matched in any stable matching admitted by $I^{\prime }$. By a similar argument, the same holds for $z^2_j$.

Assume $t_j$  is unmatched in $M$ for some $j ~(1\leq j \leq m)$. If some $z^s_j ~ (3\leq s \leq 5)$ is unmatched, then $(t^s_j , z^s_j)$ blocks $M$ in $I^{\prime }$, a contradiction. Thus, $\{(p^4_j, z^3_j), (p^5_j, z^4_j), (p^6_j, z^5_j) \}\subseteq M$. Then $z^2_j$ is unmatched, a contradiction. Thus, $t_j$ must be matched in any matching admitted by $I^{\prime }$.

%Assume $t_j$  is unmatched in $M$ for some $j ~(1\leq j \leq m)$. Thus, either some $z^s_j$  $(3\leq s\leq 5)$ is unmatched or $\{(p^4_j, z^3_j), (p^5_j, z^4_j), (p^6_j, z^5_j) \}\subseteq M$. Assume some $z^s_j$  $(3\leq s\leq 5)$ is unmatched. Hence $z^s_j$  $(3\leq s\leq 5)$ forms a blocking pair with $t_j$, a contradiction. Thus, $\{(p^4_j, z^3_j), (p^5_j, z^4_j), (p^6_j, z^5_j) \}\subseteq M$. However, this implies that $(z^1_j, z^2_j)$ blocks $M$ in $I^{\prime }$  with $(p^1_j, p^4_j)$, a contradiction.  Thus, $t_j$ must be matched in any matching admitted by $I^{\prime }$.

Assume some resident $p^s_j $ is unmatched in $M$ for some $j ~ (1\leq j \leq m)$ and $s ~ (1\leq s\leq 3)$. Then $(p^s_j, p^{s+3}_j)$ is unmatched. Hence, $(p^s_j, p^{s+3}_j)$ blocks $M$ in $I^{\prime }$ with $(c^s_j, z^{s+2}_j)$, a contradiction. Thus, all $p^s_j ~ (1\leq s\leq 6)$ must be matched in any stable matching admitted by $I^{\prime }$.

Assume some $z^s_j$  $(3\leq s\leq 5)$ is unmatched in $M$. Thus either $t_j$ is unmatched in $M$ or $p^{s+1}_j$ is unmatched in $M$. As shown previously, $t_j$ cannot be unmatched so $(t_j, z^b_j) \in M$ for some $b\in \{3,4,5\} \setminus \{s\}$. Also, as shown previously $p_j ^{s+1}$ cannot be unmatched so $(p_j ^{s+1} , z^2_j)\in M$, but then $p^{b+1} _j$ is unmatched in $M$, a contradiction. Thus, all $z^s_j ~ (3\leq s\leq 5)$ must be matched in any stable matching admitted by $I^{\prime }$

Observe that no $c^s_j ~ (1\leq s\leq 3, 1\leq j\leq m)$ can be matched to $x(c^s_j )$, for otherwise $M(y_{4i+r}) \notin X$ for some $y_{4i+r} \in Y$ a contradiction to Lemma \ref{y4i+rmustbematched}. Since $z^1_j$ must be matched to some $p^s_j ~ (1\leq s\leq 3)$ and since no resident in $P$ may be unmatched, then for $s^{\prime } \in \{1,2,3 \}$ such that $s^{\prime } \neq s$, $c^{s^{\prime }} _j$ must be matched with the corresponding $p^{s^{\prime }} _j$.  Thus, $(q_j, c^s_j) \in M$ for otherwise $(q_j , c^s_j)$ blocks $M$ in $I^{\prime }$. Thus all residents in $Q$ and hospitals in $C$ must be matched in any stable matching admitted by $I^{\prime }$.
\end{proof}

\begin{lemma}
\label{XUNIONKORYUNIONL}
No hospital in $L\cup Y$ may be unmatched and no resident in $K\cup X$ may be unmatched in any stable matching $M$ in $I^{\prime }$.
\end{lemma}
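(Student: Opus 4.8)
The plan is to leverage Lemma \ref{y4i+rmustbematched} together with two elementary counting arguments, one for the $Y$--$X$ side of each variable gadget and one for the $K$--$L$ side, and to take care of $Y$ and $X$ first, then deduce $K$ and finally $L$.

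First I would observe that, by Lemma \ref{y4i+rmustbematched}, every hospital $y_{4i+r} \in Y$ is matched in $M$, and since the preference list of $y_{4i+r}$ (Figure \ref{preflists33}) contains only residents of $X_i$, we have $M(y_{4i+r}) \in X_i$. For a fixed $i$, the four hospitals of $Y_i$ thus receive four \emph{distinct} residents of $X_i$ (distinct because each resident is matched to at most one hospital and each hospital has capacity $1$); as $|X_i| = |Y_i| = 4$, this forces $M$ restricted to $X_i \times Y_i$ to be a perfect matching between $X_i$ and $Y_i$. In particular no hospital in $Y$ and no resident in $X$ is unmatched, and, crucially, no $x_{4i+r}$ is diverted out of the gadget to a clause hospital $c(x_{4i+r})$.

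Next I would invoke the couple constraint: since each $x_{4i+r}$ is matched and is coupled with $k_{4i+r}$, the couple $(x_{4i+r}, k_{4i+r})$ is jointly assigned, so each $k_{4i+r}$ is matched as well; hence no resident in $K$ is unmatched. Reading off the joint preference lists in Figure \ref{preflists33}, in every pair acceptable to $(x_{4i+r}, k_{4i+r})$ the $k$-component is an $l$-hospital belonging to $L_i$, so each $k_{4i+r}$ is assigned within $L_i$. The four residents of $K_i$ therefore occupy four distinct hospitals, all lying in $L_i$; since $|K_i| = |L_i| = 4$ this exhausts $L_i$, so no hospital in $L$ is unmatched.

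The main obstacle, and the only place where any genuine content is required, is the first step: one must rule out the possibility that some $x_{4i+r}$ escapes its gadget by being matched to its clause hospital $c(x_{4i+r})$ rather than to a $y$-hospital. The counting argument resolves this cleanly, because Lemma \ref{y4i+rmustbematched} pins all four hospitals of $Y_i$ inside $X_i$, and with only four residents available this leaves no room for any $x_{4i+r}$ to be matched elsewhere. Once this is established, the conclusions for $K$ and $L$ follow purely from the joint-assignment rule for couples together with the second counting argument, with no further case analysis needed.
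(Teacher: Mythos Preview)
Your proposal is correct and follows essentially the same approach as the paper's own proof. The paper's argument is terser---it simply states ``$M(y_{4i+r}) \in X$ for all $y_{4i+r} \in Y$, hence $M(x_{4i+r}) \in Y$ for all $x_{4i+r} \in X$'' and then invokes the couple constraint---but the ``hence'' hides exactly the cardinality argument you spell out (four $Y_i$-hospitals each grabbing a distinct $X_i$-resident forces a perfect matching on the $4$--$4$ bipartite gadget), and your explicit treatment of why no $x_{4i+r}$ can escape to its clause hospital $c(x_{4i+r})$ is a welcome clarification of a step the paper leaves implicit.
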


\begin{proof} 
By Lemma \ref{y4i+rmustbematched}, $M(y_{4i+r}) \in X$ for all $y_{4i+r} \in Y$. Hence $M(x_{4i+r}) \in Y$ for all $x_{4i+r} \in X$. Since $(x_{4i+r} , k_{4i+r})$ are a couple for all $x_{4i+r} \in X$, it follows that $M(k_{4i+r}) \in L$ for all $k_{4i+r} \in K$ and $M(l_{4i+r}) \in K$ for all $l_{4i+r} \in L$.
\end{proof}

The proof of the previous three Lemmas allows us to now state the following more general theorem.

\begin{theorem}
\label{33HRCisNPcomplete}
Given an instance, $I^{\prime }$ of \small $(3,3)$\normalsize {\sc -hrc}, the problem of deciding whether $I^{\prime }$ admits a stable matching is NP-complete.
\end{theorem}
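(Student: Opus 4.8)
The plan is to prove NP-completeness of deciding whether an arbitrary instance $I^{\prime}$ of $(3,3)${\sc -hrc} admits a stable matching by reusing the reduction already constructed. Membership in NP is immediate, since any proposed matching can be checked for stability in polynomial time. For NP-hardness I would take the instance $I^{\prime}$ built from an arbitrary \sat\ formula $B$ and establish the biconditional ``$B$ is satisfiable $\iff$ $I^{\prime}$ admits a stable matching'' by routing through the complete-stability result already available. The crucial bridge is the claim that $I^{\prime}$ admits a stable matching if and only if the core instance $I$ admits a \emph{complete} stable matching; chaining this with Theorem \ref{33-COM-HRC}, which asserts that $I$ has a complete stable matching exactly when $B$ is satisfiable, then yields the theorem.

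For the direction from $I^{\prime}$ to $I$, suppose $M^{\prime}$ is any stable matching in $I^{\prime}$ and set $M = M^{\prime} \setminus \{(u^p_{4i+r}, h^q_{4i+r}) : 0\le i\le n-1,\ 0\le r\le 3,\ 1\le p\le 5,\ 1\le q\le 5\}$, the restriction of $M^{\prime}$ to the agents of $I$. Here the three preceding lemmas do essentially all of the work. Lemma \ref{y4i+rmustbematched} forces $M(y_{4i+r}) \in X$ for every $y_{4i+r} \in Y$, so no hospital of $Y$ is consumed by a gadget resident; Lemma \ref{Z-OR-PUNIONT} rules out any unmatched agent among $Z \cup C^{\prime}$ and among $P \cup T \cup Q$; and Lemma \ref{XUNIONKORYUNIONL} rules out any unmatched agent among $L \cup Y$ and among $K \cup X$. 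Together these cover every resident and every hospital of $I$, so $M$ is a \emph{complete} matching in $I$. Stability of $M$ in $I$ follows because any pair blocking $M$ in $I$ consists solely of agents of $I$, whose partners under $M$ agree with their partners under $M^{\prime}$ (using $M^{\prime}(y_{4i+r})\in X$ from Lemma \ref{y4i+rmustbematched}), and whose mutual preferences are identical in $I$ and $I^{\prime}$; such a pair would then block $M^{\prime}$, contradicting its stability. Thus $M$ is a complete stable matching in $I$, and Theorem \ref{33-COM-HRC} gives that $B$ is satisfiable.

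For the converse, suppose $B$ is satisfiable. By Theorem \ref{33-COM-HRC}, $I$ admits a complete stable matching $M$, in which every $y_{4i+r}$ is necessarily matched to a member of $X$. I would extend $M$ to a matching $M^{\prime}$ of $I^{\prime}$ by adjoining, for each index $4i+r$, the gadget assignment $\{(u^5_{4i+r}, h^1_{4i+r}),\, (u^3_{4i+r}, h^3_{4i+r}),\, (u^4_{4i+r}, h^2_{4i+r})\}$ while leaving $(u^1_{4i+r}, u^2_{4i+r})$ jointly unassigned. This is precisely the configuration the gadget permits once $y_{4i+r}$ is already occupied by an $X$-resident: $u^5_{4i+r}$ takes its second choice $h^1_{4i+r}$, so $h^1_{4i+r}$ holds its most preferred resident; the couple $(u^3_{4i+r}, u^4_{4i+r})$ takes its second choice $(h^3_{4i+r}, h^2_{4i+r})$; and the omitted couple $(u^1_{4i+r}, u^2_{4i+r})$ cannot block because both $h^1_{4i+r}$ and $h^2_{4i+r}$ prefer their assigned partners.

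The step I expect to be the main obstacle is the verification of stability of $M^{\prime}$ in this converse direction. Blocking pairs internal to the core $I$ are excluded because $M$ is stable in $I$ and the preferences of $I$-agents over $I$-agents are unchanged in $I^{\prime}$. The genuinely new work is confined to pairs touching the added agents, and the only pair linking the two parts is $(u^5_{4i+r}, y_{4i+r})$: here $y_{4i+r}$ strictly prefers its $X$-partner under $M$ to $u^5_{4i+r}$, since $u^5_{4i+r}$ is last on $y_{4i+r}$'s list in $I^{\prime}$ and $M$ matches $y_{4i+r}$ within $\Phi_{4i+r}$, so this pair does not block. The remaining checks are the purely local ones inside each gadget, which follow directly from the preference lists in Figure \ref{preflists:addedgadget}, exactly as in the case analysis underlying Lemma \ref{y4i+rmustbematched}. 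Since both directions hold, $B$ is satisfiable if and only if $I^{\prime}$ admits a stable matching, which completes the proof.
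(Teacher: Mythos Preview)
Your proposal is correct and follows essentially the same approach as the paper: it establishes the biconditional by extending a complete stable matching $M$ of $I$ to $M' = M \cup \{(u^5_{4i+r}, h^1_{4i+r}),(u^3_{4i+r}, h^3_{4i+r}),(u^4_{4i+r}, h^2_{4i+r})\}$ in one direction, and in the other direction restricts a stable matching $M'$ of $I'$ to the agents of $I$, invoking Lemmas \ref{y4i+rmustbematched}--\ref{XUNIONKORYUNIONL} for completeness and then Theorem \ref{33-COM-HRC}. If anything, your justification of why stability of $M$ in $I$ is inherited from stability of $M'$ in $I'$ is spelled out more carefully than in the paper.
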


\begin{proof}
Let $B$ be an instance of  \sat. Construct an instance $I$ of \small $(3,3)$\normalsize {\sc -hrc} as described in the proof of Theorem \ref{33-COM-HRC} and as illustrated in Figure \ref{preflists33} and extend this instance as described above and as illustrated in Figure \ref{preflists:addedgadget} to obtain the instance $I^{\prime }$ of \small $(3,3)$\normalsize {\sc -hrc}.

Let $f$ be a satisfying truth assignment of $B$. Define a matching $M$ in $I$ as in the proof of Theorem \ref{33-COM-HRC}. Define a matching $M^{\prime }$ in $I^{\prime }$ as follows:

$$M^{\prime } = M \cup \{ (u^5_{4i+r}, h^1_{4i+r}),(u^3_{4i+r}, h^3_{4i+r}), (u^4_{4i+r}, h^2_{4i+r}) : 0\leq i \leq n-1, 0\leq r\leq 3 \} $$

As shown previously no agent in $X\cup K\cup L\cup P\cup Q\cup T\cup Z\cup C$ can block $M^{\prime }$ in $I$. By Lemma \ref{y4i+rmustbematched}, $M^{\prime }(y_{4i+r}) \in X$ for any stable matching $M^{\prime }$ in $I^{\prime }$. Thus, it must be the case that $M^{\prime }(u^5_{4i+r}) = h^1_{4i+r}$ otherwise $(u^5_{4i+r}, h^1_{4i+r})$ would block $M^{\prime }$. Further $M^{\prime }((u^3_{4i+r}, u^4_{4i+r})) =  (h^3 _{4i+r},$ $ h^2 _{4i+r})$ or $M^{\prime }$ must admit a blocking pair amongst the agents in the subinstance $S$. Thus, no agent in $Y\cup U\cup H$ can block $M^{\prime }$ in $I$. Thus $M^{\prime }$ is a stable matching in $I$.

\medskip

Conversely, suppose that $M^{\prime }$ is a stable matching in $I$. By Lemma \ref{y4i+rmustbematched}, every $y_{4i+r}$ is matched in $M^{\prime }$ to a resident in $X$. By Lemmas \ref{Z-OR-PUNIONT} and \ref{XUNIONKORYUNIONL} every agent in $K\cup X\cup P\cup T\cup Q\cup Z\cup C$ is matched in $M^{\prime }$.

Now let $$M = M^{\prime } \setminus \{ ( u^p_{4i+r}, h^q_{4i+r} ) : 0\leq i\leq n-1, 0\leq r\leq 3, 1\leq p\leq 5, 1\leq q\leq 5 \}$$

Then $M$ is a complete stable matching in $I$, the reduced \small $(3,3)$\normalsize {\sc -hrc} instance obtained by removing all of the agents in $U\cup H$ from $I^{\prime }$. By the proof of Theorem \ref{33-COM-HRC} we can obtain a satisfying truth assignment for $B$ from $M$.
\end{proof}

%For an instance $I$ of {\sc hrc} consisting of a set of residents $R$ and a set of hospitals $H$, let the set of all first members of each couple in $I$ be $R_1 \subseteq R$, and the set of second members of each couple in $I$ be $R_2 \subseteq R$. Let the set of acceptable partners of the residents in $R_1$ in $I$ be $H_1 \subseteq H$ and the set of acceptable partners of the residents in $R_2$ in $I$ be $H_2 \subseteq H$. If in $I$, $H_1 \cap H_2 = \emptyset$ and no single resident has acceptable partners in both $H_1$ and $H_2$ then we define $I$ to be an instance of {\sc hrc-dual-market} consisting of the two disjoint markets $R_1 \cup H_1$ and $R_2 \cup H_2$. 

Recall that an instance of \small $(3,3)$\normalsize {\sc -hrc-dual-market} is an instance of {\sc hrc-dual-market} in which no resident, couple or hospital has a preference list of length greater than 3. We now show that the instance described in Theorem \ref{33HRCisNPcomplete} represents a dual market and thus we are able to show that deciding whether a stbale matching exists in an instance of \small $(3,3)$\normalsize {\sc -hrc-dual-market} is also NP-complete.

\begin{corollary}
\label{33comdmml}
Given an instance of \small $(3,3)$\normalsize {\sc -hrc-dual-market}, the problem of deciding whether a stable matching exists is NP-complete. The result holds even if each hospital has capacity 1 and the preference list of each single resident, couple and hospital is derived from a strictly ordered master list of hospitals, pairs of hospitals and residents respectively.
\end{corollary}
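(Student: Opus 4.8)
The plan is to reduce from the same instance $I'$ of \small $(3,3)$\normalsize {\sc -hrc} constructed in Theorem \ref{33HRCisNPcomplete}, and to verify two things: first, that $I'$ already has the structure of an \small $(3,3)$\normalsize {\sc -hrc-dual-market} instance, and second, that its preference lists can be derived from the required master lists. Since Theorem \ref{33HRCisNPcomplete} has already established that deciding whether $I'$ admits a stable matching is NP-complete (with capacity-1 hospitals and \sat\ reducing to $I'$ in polynomial time), all that remains is to observe that the \emph{same} instance satisfies these two additional structural restrictions, so no new reduction is needed.

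First I would verify the dual-market condition. Following the definition of {\sc hrc-dual-market} given in the introduction, I would let $R_1$ be the set of first members of the couples and $R_2$ the set of second members, and show that the hospitals acceptable to $R_1$ (namely $H_1$) are disjoint from those acceptable to $R_2$ ($H_2$), and that no single resident has acceptable hospitals in both $H_1$ and $H_2$. Reading off the preference lists in Figures \ref{preflists33} and \ref{preflists:addedgadget}, the first members $x_{4i+r}$, $p_j^s$ ($1\leq s\leq 3$), $u^1,u^3,u^5$ are acceptable only to hospitals in $Y$, $\{z_j^1, c_j^s\}$, $\{h^1,h^3\}$ respectively, while the second members $k_{4i+r}$, $p_j^{s+3}$, $u^2,u^4$ are acceptable only to $L$, $\{z_j^2, z_j^{s+2}\}$, $\{h^2,h^4\}$ respectively; these two hospital sets are disjoint. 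The single residents $q_j$ (acceptable to $C'\subseteq H_1$) and $t_j$ (acceptable to $\{z_j^3,z_j^4,z_j^5\}\subseteq H_2$) each lie wholly within one market. I would then confirm the maximum preference-list length is $3$, which is immediate from the figures and already noted in the proof of Theorem \ref{33-COM-HRC}.

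Second I would construct the master lists. For the hospitals' master list of residents, the couples' master list of hospital pairs, and the single residents' preferences over individual hospitals, I would exhibit per-gadget sub-lists (one block $L^t_i$ or $L^t_j$ for each variable index $i$ and clause index $j$, plus blocks for the added gadget of Figure \ref{preflists:addedgadget}) and concatenate them into a single global master list, exactly as was done in Corollary \ref{23ml} via Figures \ref{masterlistrescouples23} and \ref{masterlisthospitals23}. The key check is that within each hospital's (or couple's) actual preference list the relative order of its acceptable partners agrees with their relative order in the master list; this requires choosing the intra-block orderings compatibly and ordering the blocks so that no two lists impose conflicting constraints on shared entries. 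I would conclude by noting that since $I'$ is constructed from $B$ in polynomial time and already realizes all these restrictions, the NP-completeness of Theorem \ref{33HRCisNPcomplete} carries over, proving the corollary.

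The main obstacle I expect is the master-list verification rather than the dual-market check, which is essentially a bookkeeping inspection. Because many hospitals share residents in their lists (for instance the $x$-residents appear in both the $y$-hospital lists and the $c_j^s$ lists, and the $p$-residents appear across $z$- and $c$-hospitals), one must ensure that a single total order on all residents simultaneously respects every hospital's local ordering. If the construction of Theorem \ref{33-COM-HRC} was not designed with a consistent master order in mind, some preference entries may need to be relabeled or the reduction slightly adjusted so that the cross-gadget orderings are globally acyclic; demonstrating this acyclicity (equivalently, that the union of all the local orderings extends to a linear order) is the delicate step, and I would address it by carefully segregating the residents into blocks indexed by gadget so that inter-block comparisons never conflict.
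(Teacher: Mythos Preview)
Your proposal is correct and follows essentially the same approach as the paper: you reuse the instance $I'$ from Theorem \ref{33HRCisNPcomplete}, exhibit the partition of residents and hospitals into two disjoint markets (matching the paper's $R_1,R_2,A_1,A_2$), and then construct master lists by concatenating per-gadget blocks, exactly as the paper does in Figures \ref{masterlistrescouples}, \ref{masterlistresidents} and \ref{masterlisthospitals}. Your caution about the master-list step is well placed, but in fact the construction was designed so that the local orderings are globally consistent and no adjustment to the reduction is needed; the paper simply exhibits the master lists and refers back to Theorem \ref{33HRCisNPcomplete}.
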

\begin{proof}

Let $I^{\prime }$ be the instance of \small $(3,3)$\normalsize {\sc -hrc} constructed in the proof of Theorem \ref{33HRCisNPcomplete}. The residents in $I^{\prime }$ can be partitioned into two disjoint sets, $R_1 = X\cup P_1\cup Q\cup U_1$ where $P_1 = \{ p^s_j : 1\leq s\leq 3 \}$ and $U_1 = \{ u^s_{4i+r} : s \in \{1, 3, 5\} \}$  and $R_2 = K \cup P_2\cup T\cup U_2$ where $P_2 =  \{ p^s_j : 4\leq s\leq 6 \}$ and $U_2 = \{ u^s_{4i+r} : s \in \{2, 4\} \}$. Further, the hospitals in $I^{\prime }$ may also be partitioned into two disjoint sets,  $A_1 = Y\cup Z_1\cup C\cup H_1$ where $Z_1 = \{ z^1_j : 1\leq j\leq m \}$ and $H_1 = \{ h^s_{4i+r} : 0\leq i\leq n-1, 0 \leq r\leq 3, s \in \{ 1,3 \} \}$  and $A_2 = L\cup Z_2 \cup H_2$ where $Z_2 = \{ z^s_j : 2\leq s\leq 5 \wedge 1\leq j \leq m \}$ and $H_2 = \{ h^s_{4i+r} : 0\leq i\leq n-1, 0 \leq r\leq 3, s \in \{ 2,4 \} \}$.

A resident $r\in R_i$ finds acceptable only those hospitals in $A_i$ and a hospital $h\in A_i$ finds acceptable only those residents in $R_i$. From this construction it can be seen that the instance $I^{\prime }$ represents a dual market consisting of two disjoint markets, ($R_1\cup A_1$) and ($R_2\cup A_2$).

The master lists shown in Figures \ref{masterlistrescouples},  \ref{masterlistresidents} and \ref{masterlisthospitals} indicate that the preference list of each single resident, couple and hospital may be derived from a master list of hospital pairs, residents and hospitals respectively. Thus, the result follows immediately from Theorem \ref{33HRCisNPcomplete}.
\end{proof}
\begin{landscape}
\begin{figure}
\[
\begin{array}{cll}
L^1_i :
(y_{4i}, l_{4i}), 
(c(x_{4i}), l_{4i+1}),

(y_{4i+1}, l_{4i+1}), 
(c(x_{4i+1}), l_{4i+2}), 
(c(x_{4i+3}), l_{4i+3}), 
(y_{4i+3}, l_{4i+3}),
(c(x_{4i+2}), l_{4i+2}),
(y_{4i+2}, l_{4i+2}) & (0 \leq i \leq n-1)\\
\\
L^2_{i,r} :
(h^1_{4i+r}, h^2_{4i+r}),
(h^1_{4i+r}, h^4_{4i+r}),
(h^3_{4i+r}, h^2_{4i+r}) ~~~~~ (0 \leq i \leq n-1, 0 \leq r \leq 3)

~~~~~

L^3_j : 
(z^1_j, z^2_j),  
(c^1_j, z^3_j),
(c^2_j, z^4_j),
(c^3_j, z^5_j) & (1 \leq j \leq m)
\\
\\
\mbox{Master List} : L^1_0 ~ L^1_1\dots L^1_{n-1}~ L^2_{0,0} ~ L^2_{0,1} \ldots L^2_{0,3} ~ L^2_{1,0} ~ L^2_{1,1} \ldots L^2_{(n-1),3} ~  L^3_1  L^3_2\dots L^3_m

\end{array}
\]
\caption{Master list of preferences for resident couples in \small $(3,3)$\normalsize {\sc -com-hrc} instance $I$.}
\label{masterlistrescouples}
\end{figure}

\begin{figure}
\[
\begin{array}{cll}

L^4_j :
c^1_j,
c^2_j,
c^3_j,
z^3_j,
z^4_j,
z^5_j

~~~~~ (1 \leq j \leq m)

~~~~~

L^5_{i,r} :

y_{4i+r},
h^1_{4i+r} & (0 \leq i \leq n-1, 0 \leq r \leq 3)
\\
\\
\mbox{Master List} : L^4_1 L^4_2\dots L^4_m L^5_{0,0} ~ L^5_{0,1} \ldots L^5_{0,3} ~ L^5_{1,0} ~ L^5_{1,1} \ldots L^5_{(n-1),3}
\end{array}
\]
\caption{Master list of preferences for single residents in \small $(3,3)$\normalsize {\sc -com-hrc} instance $I$.}
\label{masterlistresidents}
\end{figure}

\begin{figure}
\[
\begin{array}{cll}

L^6_i : 
x_{4i+1},
x_{4i},
x_{4i+2},
x_{4i+3} ~~~~~ (0 \leq i \leq n-1)

~~~~~

L^7_i : 
k_{4i+3},
k_{4i},
k_{4i+2},
k_{4i+1} ~~~~~ (0 \leq i \leq n-1)\\

\\
L^8_j : 
p^1_j,
p^2_j,
p^3_j 

~~~~~ (0 \leq j \leq m)
~~~~~

L^9_j :
p^6_j,
p^5_j,
p^4_j 

~~~~~ (0 \leq j \leq m)

\\
\\

L^{10}_j : q_1, q_2\dots q_j ~~~~~ (0 \leq j \leq m)

~~~~~~

L^{11}_j : t_1, t_2\dots t_j  ~~~~~ (0 \leq j \leq m)\\

\\ 
L^{12}_{i,r} :
u^5_{4i+r},
u^1_{4i+r},
u^3_{4i+r},
u^4_{4i+r},
u^2_{4i+r} ~~~~~ (0 \leq i \leq n-1, 0 \leq r \leq 3)
\\
\\
\mbox{Master List} : L^8_1 ~ L^8_2 \dots L^8_m  L^6_0  L^6_1\dots L^6_{n-1}  L^7_0 L^7_1\dots L^7_{n-1}  L^9_1 ~ L^9_2\dots L^9_m ~ L^{10}_1\ldots L^{10}_m ~  L^{11}_1\ldots L^{11}_m ~ L^{12}_{0,0} ~ L^{12}_{0,1} \ldots L^{12}_{0,3} ~L^{12}_{1,0} ~ L^{12}_{1,1} \ldots L^{12}_{1,3} ~ L^{12}_{2,0} \ldots L^{12}_{(n-1),3}
\end{array}
\]
\caption{Master list of preferences for hospitals in \small $(3,3)$\normalsize {\sc -com-hrc} instance $I$.}
\label{masterlisthospitals}
\end{figure}
\end{landscape}

\section{Integer programming models for Hospitals / Residents problem variants}

\label{section:IPModels}

In this section we present a range of IP models for {\sc hr}, {\sc hrc} and {\sc hrct}. We begin by giving an IP model for {\sc hr} in Section \ref{section:IPModelsHR}. This model is then extended in Section \ref{section:IPModelsHRC} to the {\sc hrc} context. We further provide a worked example in Section \ref{section:IPModelsExample} with a view to demonstrating how the IP model for {\sc hrc} may be constructed from a small example instance. We then extend the IP model for {\sc hrc} to the more general {\sc hrct} context in Section \ref{section:IPModelsHRCT}. Detailed proofs of the correctness of both the {\sc hr}, {\sc hrc} and {\sc hrct} models are also presented in the relevant sections. 
%The text in bold before the definition of a constraint shows the section of the MM-stability definition with which the constraint corresponds. Hence, a constraint preceded by '\textbf{Stability 1}' is intended to prevent blocking pairs described by part 1 of the MM-stability definition shown in Definition \ref{stability:MM} in Section \ref{section:introduction}.

\def\myline{
 
    %\vspace{-0.9em}
    \line(0,1){3.5}
    \hspace{-0.71em}
     
     \line(1,0){50}
 			\line(0,1){3.5}

}

\def\Xline{
 
    %\vspace{-0.9em}
    \line(0,1){3.5}
    \hspace{-0.71em}
     \line(1,0){17.5}
     
     ~X
  
     \line(1,0){17.5}
 			\line(0,1){3.5}
}

\def\X2line{
 
    %\vspace{-0.9em}
    \line(0,1){3.5}
    \hspace{-0.71em}
     \line(1,0){9.5}
     
     \hspace{-0.3em}
     X
     
    \hspace{-0.55em}
     \line(1,0){12.5}

     \hspace{-0.30em}
     X
  
  	\hspace{-0.51em}
     \line(1,0){8.5}
 			\line(0,1){3.5}
}

\subsection{An IP formulation for the {\sc hr}}
\label{section:IPModelsHR}

\subsubsection{Variables in the IP model for {\sc hr}}
\label{section:IPModelsHRVariables}

The IP model is designed around a series of linear inequalities that establish the absence of blocking pairs. The variables are defined for each resident and for each element on his/her preference list (with the possibility of being unmatched).

Let $I$ be an instance of {\sc hr} with residents $R = \{r_1, r_2,\dots, r_{n_1}\}$ and hospitals $H= \{h_1, h_2,\dots$ $, h_{n_2}\}$ where each resident $r_i\in R$, has a preference list of length $l(r_i)$ consisting of individual hospitals $h_j\in H$. Each hospital $h_j\in H$ has a preference list of individual residents $r_i\in R$ of length $l(h_j)$. Further, each hospital $h_j$ has a capacity $c_j$ representing the number of available posts it has to match with residents.

Let  $J$ be the following Integer Programming (IP) formulation of $I$. In $J$, for each $i ~ (1\leq i\leq n_1)$ and $p ~(1\leq p\leq l(r_i))$, define a variable $x_{i,p}$  such that

\[ x_{i,p} = \left\{ \begin{array}{ll}
         1 & \mbox{if $r_i$ is assigned to their $p^{th}$ choice hospital}\\
        0 & \mbox{otherwise}\end{array} \right. \] 

For $p=l(r_i)+1$ define a variable $x_{i,p}$ whose intuitive meaning is that resident $r_i$ is unassigned. Therefore we also have 

\[ x_{i,l(r_i)+1} = \left\{ \begin{array}{ll}
         1 & \mbox{if $r_i$ is unassigned}\\
        0 & \mbox{otherwise}\end{array} \right. \]

Let $X = \{ x_{i,p} : 1\leq i \leq n_1 \wedge 1\leq p \leq l(r_i) + 1 \}$. Let $pref(r_i, p)$ denote the hospital at position $p$ of $r_i$'s preference list where $1\leq i\leq n_1$ and $1\leq p\leq l(r_i)$. Further for an acceptable resident-hospital pair $(r_i, h_j)$, let $rank(h_j, r_i) =q$ denote the rank which hospital $h_j$ assigns resident $r_i$, where $1\leq j\leq n_2$, $1\leq i\leq n_1$ and $1\leq q \leq l(h_j)$.

%Let the preference list of $h_j$ for $(1 \leq j \leq n_2)$ be ordered in non-decreasing order of rank, such that if $rank(h_j, r_u) < rank(h_j, r_v)$ then $r_u$ precedes $r_v$ in $h_j$'s preference list. If $rank(h_j, r_u) = rank(h_j, r_v)$ then $r_u$ and $r_v$ may be arbitrarily ordered with respect to each other in the preference list of $h_j$.

\subsubsection{Constraints in the IP model for {\sc hr}}
\label{section:IPModelsHRConstraints}

The following constraint simply confirms that each variable $x_{i,p}$ must be binary valued for all $i ~(1\leq i\leq n_1)$ and $p ~ (1\leq p\leq l(r_i)+1)$: 

\begin{equation} \label{constraint:HR2_1} \displaystyle x_{i,p} \in \{0,1\} \end{equation}

As each resident $r_i\in R$ is either assigned to a single hospital or is unassigned,  we introduce the following constraint for all $i ~(1\leq i\leq n_1)$:

\begin{equation} \label{constraint:HR2_2} \displaystyle \sum\limits_{p=1}^{l(r_i)+1} x_{i,p} = 1 \end{equation}

Since a hospital $h_j$ may be assigned to at most $c_j$ residents, $x_{i, p} = 1$ where $pref(r_i,p) = h_j$ for at most $c_j$ residents. We thus obtain the following constraint for all $j ~ (1\leq j\leq n_2)$:

\begin{equation} \label{constraint:HR2_3} \displaystyle \sum\limits_{i=1}^{n_1} \sum\limits_{p=1}^{l(r_i)}  \{x_{i,p} \in X :  pref(r_i, p)=h_j \} \leq c_j \end{equation}

In a stable matching $M$ in $I$, if a single resident $r_i\in R$ has a worse partner than some hospital $h_j\in H$ where $pref(r_i, p)=h_j$ and $rank(h_j, r_i) =q$ then $h_j$ must be fully subscribed with better partners than $r_i$. Therefore, either $\sum\limits_{p^{\prime}=p+1}^{l(r_i)+1} x_{i,p^{\prime}}= 0$ or $h_j$ is fully subscribed with better partners than $r_i$ and $\sum\limits_{q^{\prime}=1}^{q-1} \{x_{i^{\prime },p^{\prime \prime}} :   rank(h_j,r_{i^{\prime }}) = q^{\prime } \wedge pref(r_{i^{\prime }}, p^{\prime \prime}) = h_j) \} = c_j$.

Thus, for each $i ~ (1\leq i\leq n_1)$ and $p ~ (1\leq p\leq l(r_i))$ we obtain the following constraint where $pref(r_i, p) = h_j$ and $rank(h_j, r_i)=q$:

\begin{equation} \label{constraint:HR2_4} \displaystyle c_j \sum\limits_{p^{\prime}=p+1}^{l(r_i)+1} x_{i,p^{\prime}} \leq \sum\limits_{q^{\prime}=1}^{q-1} \{x_{i^{\prime },p^{\prime \prime}} : rank(h_j, r_{i^{\prime }}) = q^{\prime } \wedge  pref(r_{i^{\prime }, p^{\prime \prime} })= h_j) \} \end{equation}

Objective Function - A maximum cardinalilty matching $M$ in $I$ is a stable matching in which the largest number of residents is matched amongst all of the stable matchings admitted by $I$. To maximise the size of the stable matching found we apply the following objective function:

\begin{equation} \label{constraint:HR2_5} \displaystyle \max \sum\limits_{i=1}^{n} \sum\limits_{p=1}^{l(r_i)} x_{i,p} \end{equation}

\subsubsection{Proof of correctness the IP model for HR}

\begin{theorem}\label{IPHR2}
Given an instance $I$ of {\sc hr}, let $J$ be the corresponding IP model as defined in Section \ref{section:IPModelsHRVariables} and Section {section:IPModelsHRConstraints}. A stable matching in $I$ is exactly equivalent to a feasible solution to $J$.
\end{theorem}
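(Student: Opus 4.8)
The plan is to establish the claimed equivalence by proving the two directions separately: every stable matching in $I$ yields a feasible solution to $J$, and every feasible solution to $J$ encodes a stable matching in $I$. The natural correspondence is the obvious one: given a matching $M$, set $x_{i,p}=1$ exactly when $r_i$ is assigned to the hospital at position $p$ of its preference list (or $p=l(r_i)+1$ when $r_i$ is unassigned), and all other $x_{i,p}=0$; conversely, given a feasible solution, let $M$ contain $(r_i, \mathit{pref}(r_i,p))$ whenever $x_{i,p}=1$ for $p\leq l(r_i)$. The first thing I would do is verify this map is well-defined in both directions.

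For the forward direction, I would take an arbitrary stable matching $M$ and verify that its associated $\{x_{i,p}\}$ satisfies each of the four constraints. Constraint~(\ref{constraint:HR2_1}) is immediate from the $0/1$ definition. Constraint~(\ref{constraint:HR2_2}) holds because each resident is assigned to exactly one hospital or is unassigned, so exactly one of the $l(r_i)+1$ variables is $1$. Constraint~(\ref{constraint:HR2_3}) is the capacity constraint: the inner double sum counts the residents assigned to $h_j$ in $M$, which is at most $c_j$ since $M$ is a matching. The substantive step is constraint~(\ref{constraint:HR2_4}): here I would argue by contradiction. Fix $i,p$ with $\mathit{pref}(r_i,p)=h_j$ and $\mathit{rank}(h_j,r_i)=q$, and suppose the left-hand side is positive, meaning $r_i$ is either unassigned or assigned to a hospital strictly worse than $h_j$. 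Then the left-hand side equals $c_j$, and stability of $M$ forces $h_j$ to be full and to prefer all $c_j$ of its assignees to $r_i$; that is exactly the condition that the right-hand sum equals $c_j$, so the inequality holds with equality.

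For the reverse direction, I would take a feasible solution, form $M$ as above, and check $M$ is a matching and admits no blocking pair. Constraint~(\ref{constraint:HR2_2}) guarantees each resident appears in at most one pair, and constraint~(\ref{constraint:HR2_3}) guarantees each hospital is assigned at most $c_j$ residents, so $M$ is a genuine matching. For stability, I would suppose for contradiction that $(r_i,h_j)$ is a blocking pair and derive a violation of constraint~(\ref{constraint:HR2_4}). A blocking pair means $r_i$ prefers $h_j$ to its assignment (so $r_i$ is unassigned or assigned to a position worse than $p$, whence the left-hand multiplier $\sum_{p'=p+1}^{l(r_i)+1} x_{i,p'}=1$) and $h_j$ is under-subscribed or prefers $r_i$ to one of its assignees (so the right-hand sum of residents whom $h_j$ ranks above $r_i$ and are assigned to $h_j$ is strictly less than $c_j$). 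These two facts make the left-hand side $c_j$ while the right-hand side is $<c_j$, contradicting feasibility.

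The main obstacle, and the place where care is required, is the precise bookkeeping in constraint~(\ref{constraint:HR2_4}): one must be confident that the indexed sum $\sum_{q'=1}^{q-1}\{x_{i',p''}: \mathit{rank}(h_j,r_{i'})=q' \wedge \mathit{pref}(r_{i'},p'')=h_j\}$ really counts exactly those residents assigned to $h_j$ whom $h_j$ ranks strictly above $r_i$, and that the left multiplier $\sum_{p'=p+1}^{l(r_i)+1} x_{i,p'}$ is a $0/1$ indicator of ``$r_i$ does no better than $h_j$'' thanks to constraint~(\ref{constraint:HR2_2}). Once these two readings are pinned down, both directions reduce to the standard \textsc{hr} stability argument, and I would also remark that the objective function~(\ref{constraint:HR2_5}) plays no role in the equivalence itself---it merely selects a maximum-cardinality representative among the feasible solutions, which is legitimate because, as noted in the introduction via \cite{GS85, Rot86, Rot84}, all stable matchings in an \textsc{hr} instance match the same set of residents, so in fact every feasible solution already has the same objective value.
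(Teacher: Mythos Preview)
Your proposal is correct and follows essentially the same two-direction argument as the paper: encode a stable matching as a $0/1$ assignment and verify constraints (\ref{constraint:HR2_1})--(\ref{constraint:HR2_4}) one by one, then decode a feasible solution as a matching and show that a putative blocking pair would violate constraint~(\ref{constraint:HR2_4}). Your additional closing remarks---that the objective function is irrelevant to the equivalence, and that by the Rural Hospitals Theorem all feasible solutions in fact attain the same objective value---are correct observations that the paper's proof does not make, but the core argument is the same; one small wording slip: the indicator $\sum_{p'=p+1}^{l(r_i)+1} x_{i,p'}$ signals ``$r_i$ is unassigned or strictly prefers $h_j$ to $M(r_i)$'', not merely ``does no better than $h_j$''.
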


\begin{proof}

Consider a stable matching $M$ in $I$. From $M$ we form an assignment of values to the variables \textbf{x} as follows. 

Initially set $x_{i,p}=0$ for all $i ~ (1\leq i\leq n_1)$ and $p ~ (1\leq p\leq l(r_i)+1)$. Then for each $(r_i, h_j)\in M$ set $x_{i,p}=1$, where $h_j =pref(r_i, p)$. If $r_i$ is unassigned then set $x_{i, l(r_i)+1}=1$. As each resident has a single partner or is unassigned (but not both), for a given $i ~ (1\leq i\leq n_1)$, for exactly one value of $p$ in the range $1\leq p\leq r(i)+1$, $x_{i,p}=1$, and for each other value of $p$ in the same range, $x_{i,p}=0$, and therefore Constraint \ref{constraint:HR2_2} holds in the assignment derived from $M$. Since each hospital is assigned in $M$ to at most $c_j$ acceptable residents, Constraint \ref{constraint:HR2_3} also holds in the assignment derived from $M$.

Let $(r_i, h_j)$ be an acceptable pair not in $M$. For $(r_i, h_j)$ to block, $r_i$ must have a partner worse than rank $p$ while simultaneously $h_j$ is either under-subscribed or has a partner worse than rank $q$, where $pref(r_i, p) = h_j$ and $rank(h_j, r_i)=q$. If $r_i$ does not have a partner worse than $h_j$ then $c_j \sum\limits_{p^{\prime}=p+1}^{l(r_i)+1} x_{i,p^{\prime}}=0$ and otherwise $c_j \sum\limits_{p^{\prime}=p+1}^{l(r_i)+1} x_{i,p^{\prime}}=c_j$. If $h_j$ has $c_j$ partners better than $r_i$ then $\sum\limits_{q^{\prime}=1}^{q-1} \{x_{i^{\prime },p^{\prime \prime}} :  rank(h_j, r_{i^{\prime }}) = q^{\prime } \wedge  pref(r_{i^{\prime }, p^{\prime \prime} })= h_j) \} = c_j$ otherwise $\sum\limits_{q^{\prime}=1}^{q-1} \{x_{i^{\prime },p^{\prime \prime}} :  rank(h_j, r_{i^{\prime }}) = q^{\prime } \wedge  pref(r_{i^{\prime }, p^{\prime \prime} })= h_j) \} < c_j$. 

Suppose $c_j \sum\limits_{p^{\prime}=p+1}^{l(r_i)+1} x_{i,p^{\prime}} =c_j$. Then $r_i$ is unassigned or has a worse partner than $h_j$ in $M$. Thus, by the stability of $M$, $h_j$ is full and all of $h_j$'s partners are better than $r_i$. Hence $\sum\limits_{q^{\prime}=1}^{q-1} \{x_{i^{\prime },p^{\prime \prime}} :rank(h_j, r_{i^{\prime }}) = q^{\prime } \wedge  pref(r_{i^{\prime }}, p^{\prime \prime} )= h_j) \} = c_j$ and Constraint \ref{constraint:HR2_4} is satisfied by the assignment derived from $M$.

Now, suppose $c_j \sum\limits_{p^{\prime}=p+1}^{l(r_i)+1} x_{i,p^{\prime}} =0 $. Then $r_i$ has a better partner than $h_j$ in $M$. Since $\sum\limits_{q^{\prime}=1}^{q-1} \{x_{i^{\prime },p^{\prime \prime}} :  rank(h_j, r_{i^{\prime }}) = q^{\prime } \wedge  pref(r_{i^{\prime }}, p^{\prime \prime} )= h_j) \} \geq 0$, Constraint \ref{constraint:HR2_4} is satisfied by the assignment derived from $M$.

As all of the constraints in $J$ hold for an assignment derived from a stable matching $M$, a stable matching $M$ in $I$ represents a feasible solution to $J$.

Conversely, consider a feasible  solution, $\langle \textbf{x}, \textbf{y} \rangle$,  to $J$. From such a solution we form a set of pairs, $M$, as follows.

Initially let $M = \emptyset$. For each $i$   $(1\leq i\leq n_1)$ and $p$  $(1\leq p\leq l(r_i))$ if $x_{i,p}=1$ then add $(r_i, h_j)$ to $M$ where $h_j= pref(r_i, p)$. As $\langle \textbf{x} \rangle$ satisfies constraints \ref{constraint:HR2_1}, \ref{constraint:HR2_2} and \ref{constraint:HR2_3}, each resident in $M$ must have exactly one partner or be unassigned (but not both) and each hospital $h_j$ in $M$ must have at most $c_j$ partners. Therefore the set of pairs $M$ created from the solution $\langle \textbf{x} \rangle$ to $J$, is a matching in $I$.

We now show that $M$ is stable. Assume $(r_i, h_j)$ blocks $M$. Let $pref(r_i,p) = h_j$ and $rank(h_j, r_i) =q$. Therefore, $r_i$ is unassigned or, has a worse partner than $h_j$ and $h_j$ is under-subscribed or prefers $r_i$ to some member of $M(h_j)$. However, this implies that in $J$,  $c_j \sum\limits_{p^{\prime}=p+1}^{l(r_i)+1} x_{i,p^{\prime}} = c_j$ and $\sum\limits_{q^{\prime}=1}^{q-1} \{x_{i^{\prime },p^{\prime \prime}} : rank(h_j, r_{i^{\prime }}) = q^{\prime } \wedge  pref(r_{i^{\prime }, p^{\prime \prime} })= h_j) \} < c_j$ and therefore Constraint \ref{constraint:HR2_4} is not satisfied in $J$, a contradiction. Therefore no such $(r_i, h_j)$ can block $M$.
\end{proof}

\subsection{An IP formulation for {\sc hrc}}

\label{section:IPModelsHRC}

\subsubsection{Variables in the IP model for {\sc hrc}}

\label{section:IPModelsHRCVariables}

The IP model extends the model for {\sc hr} presented in Section \ref{section:IPModelsHR}. This extended model is designed around a series of linear inequalities that establish the absence of blocking pairs according to each of the different parts of Definition \ref{stability:MM}. The variables are defined for each resident, whether single or a member of a couple, and for each element on his/her preference list (with the possibility of being unmatched). A further consistency constraint ensures that each member of a couple obtains hospitals from the same pair in their list, if assigned. Finally, the objective of the IP is to maximise the size of a stable matching, if one exists. The model presented is more complex than existing IP formulations in the literature for stable matching problems \cite{VV89, Rot86, Pod10, KM14} simply because of the number of blocking pair cases in Definition \ref{stability:MM} required to adequately take account of couples.

Let $I$ be an instance of {\sc hrc} with residents $R = \{r_1, r_2,\dots, r_{n_1}\}$ and hospitals $H= \{h_1, h_2,\dots, h_{n_2}\}$. Without loss of generality, suppose residents $r_1, r_2\ldots r_{2c}$ are in couples. Again, without loss of generality, suppose that the couples are $(r_{2i-1}, r_{2i})$  $(1\leq i\leq c)$.  Suppose that the joint preference list of  a couple $c_i =  (r_{2i-1}, r_{2i})$ is:

$$c_i ~ : ~ (h_{\alpha _1}, h_{\beta _1}),(h_{\alpha _2}, h_{\beta _2})\ldots (h_{\alpha _l}, h_{\beta _l})$$ 
From this list we create the following projected preference list for resident $r_{2i-1}$: 
$$r_{2i-1} ~ : ~ h_{\alpha _1}, h_{\alpha _2}\ldots h_{\alpha _l}$$ 
and the following projected preference list for resident $r_{2i}$: 
$$r_{2i} ~ :  ~ h_{\beta _1}, h_{\beta _2}\ldots h_{\beta _l}$$

Clearly, the projected preference list of the residents $r_{2i-1}$ and $r_{2i}$ are the same length as the preference list of the couple $c_i = (r_{2i-1}, r_{2i})$. Let $l(c_i)$ denote the lengths of the preference list of $c_i$ and let $l(r_{2i-1})$ and $l(r_{2i})$ denote the lengths of the projected preference lists of $r_{2i-1}$ and $r_{2i}$ respectively. Then $l(r_{2i-1}) = l(r_{2i}) = l(c_i)$. A given hospital $h_j$ may appear more than once in the projected preference list of a linked resident in a couple $c_i = (r_{2i-1}, r_{2i})$.

Let the single residents be $r_{2c+1}, r_{2c+2}\ldots r_{n_1}$, where each single resident $r_i$, has a preference list of length $l(r_i)$ consisting of individual hospitals $h_j\in H$.

Each hospital $h_j\in H$ has a preference list of individual residents $r_i\in R$ of length $l(h_j)$. Further, each hospital $h_j\in H$ has capacity $c_j \geq 1$, the number of residents with which it may match.

Let  $J$ be the following Integer Programming (IP) formulation of $I$. In $J$, for each $i ~ (1\leq i\leq n_1)$ and $p ~(1\leq p\leq l(r_i))$, define a variable $x_{i,p}$  such that

\[ x_{i,p} = \left\{ \begin{array}{ll}
         1 & \mbox{if $r_i$ is assigned to their $p^{th}$ choice hospital}\\
        0 & \mbox{otherwise}\end{array} \right. \] 

For $p=l(r_i)+1$ define a variable $x_{i,p}$ whose intuitive meaning is that resident $r_i$ is unassigned. Therefore we also have 

\[ x_{i,l(r_i)+1} = \left\{ \begin{array}{ll}
         1 & \mbox{if $r_i$ is unassigned}\\
        0 & \mbox{otherwise}\end{array} \right. \]

Let $X = \{ x_{i,p} : 1\leq i \leq n_1 \wedge 1\leq p \leq l(r_i) + 1 \}$. Let $pref(r_i, p)$ denote the hospital at position $p$ of a single resident $r_i$'s preference list or on the projected preference list of a resident belonging to a couple where $1\leq i\leq n_1$ and $1\leq p\leq l(r_i)$. %Further let $pref(h_j, q)$ denote the resident at position $q$ of $h_j$'s preference list where $1\leq j\leq n_2$ and $1\leq q\leq l(h_j)$.% 
Let $pref( ( r_{2i}, r_{2i-1}), p)$ denote the hospital pair at position $p$ on the joint preference list of $(r_{2i-1}, r_{2i})$.

For an acceptable resident-hospital pair $(r_i, h_j)$, let $rank(h_j, r_i) = q$ denote the rank which hospital $h_j$ assigns resident $r_i$ where $1\leq j\leq n_2$, $1\leq i\leq n_1$ and $1\leq q \leq l(h_j)$. Thus, $rank(h_j, r_i)$ is equal to the number of residents that $h_j$ prefers to $r_i$ plus one.

%Let the preference list of $h_j$ for $(1 \leq j \leq n_2)$ be ordered in non-decreasing order of rank, such that if $rank(h_j, r_u) < rank(h_j, r_v)$ then $r_u$ precedes $r_v$ in $h_j$'s preference list. If $rank(h_j, r_u) = rank(h_j, r_v)$ then $r_u$ and $r_v$ may be arbitrarily ordered with respect to each other in the preference list of $h_j$.

Further, for $i ~ (1\leq i\leq n_1)$, $j ~ (1\leq j\leq n_2)$, $p ~ (1\leq p\leq l(r_i))$ and $q ~ (1\leq q\leq l(h_j))$ let the set $R(h_j, q)$ contain resident integer pairs $(r_i, p)$ such that $rank(h_j, r_i) = q$ and $pref(r_i, p)=h_j$. Hence: 

%$$R(h_j, q) = \{ (r_i, p)\in R \times \mathbb{Z} :  rank(h_j, r_i) = q \wedge p \in \{ p^{\prime } : 1\leq p^{\prime } \leq l(r_{i}) \wedge pref(r_i, p^{\prime }) = h_j \} \}$$

$$R(h_j, q) = \{ (r_i, p)\in R \times \mathbb{Z} :  rank(h_j, r_i) = q \wedge 1\leq p\leq l(r_i )\wedge pref(r_i, p) = h_j \}$$

Intuitively, the set $R(h_j, q)$ contains the resident-position pairs $(r_i, p)$ such that $r_i$ is assigned a rank of $q ~ (1\leq q \leq l(h_j))$ by $h_j$ and $h_j$ is in position $p ~ (1\leq p \leq l(r_i))$ on $r_i$'s preference list. 

%\footnote{In the HRC context described here the set $R(h_j, q)$ will always have cardinality of 1 for $j ~ (1\leq j \leq n_2)$ and $q ~ (1\leq q \leq l(h_j))$, however when ties are allowed in the hospital preference lists in the HRCT context it need not be the case that $R(h_j, q)$ has cardinality of 1.}

When considering the exact nature of a blocking pair in the description that follows, the stability definition due to Manlove and McDermid \cite{MM10} (MM-stability) is applied in all cases. The text in bold before the definition of a constraint shows the section of the MM-stability definition with which the constraint corresponds. Hence, a constraint preceded by `\textbf{Stability 1}' is intended to prevent blocking pairs described by part 1 of the MM-stability definition shown in Definition \ref{stability:MM} in Section \ref{section:introduction}.

\subsubsection{Constraints in the IP model for {\sc hrc}}

\label{section:IPModelsHRCConstraints}

The following constraint simply confirms that each variable $x_{i,p}$ must be binary valued for all $i ~(1\leq i\leq n_1)$ and $p ~ (1\leq p\leq l(r_i)+1)$: 

\begin{equation} \label{constraint:HRC2_1} \displaystyle x_{i,p} \in \{0,1\} \end{equation}

Similarly, the following constraint confirms that each variable $\alpha _{j, q}$ must be binary valued for all $j ~(1\leq j\leq n_2)$ and $q ~(1\leq q\leq l(h_j))$:

\begin{equation} \displaystyle \alpha _{j,q} \in \{0,1\} \end{equation}

Also, the following constraint confirms that each variable $\beta _{j, q}$ must be binary valued for all $j ~(1\leq j\leq n_2)$ and $q ~(1\leq q\leq l(h_j))$:

\begin{equation} \displaystyle \beta _{j,q} \in \{0,1\} \end{equation}

As each resident $r_i\in R$ is either assigned to a single hospital or is unassigned,  we introduce the following constraint for all $i ~(1\leq i\leq n_1)$:

\begin{equation} \label{constraint:HRC2_2} \displaystyle \sum\limits_{p=1}^{l(r_i)+1} x_{i,p} = 1 \end{equation}

Since a hospital $h_j$ may match with at most $c_j$ residents, $x_{i, p} = 1$ where $pref(r_i,p) = h_j$ for at most $c_j$ residents. We thus obtain the following constraint for all $j ~ (1\leq j\leq n_2)$:

\begin{equation} \label{constraint:HRC2_3} \displaystyle \sum\limits_{i=1}^{n_1} \sum\limits_{p=1}^{l(r_i)} \{x_{i,p} \in X :  pref(r_i, p)=h_j \} \leq c_j \end{equation}

For each couple $(r_{2i-1}, r_{2i})$, if resident $r_{2i-1}$ is assigned to the hospital in position $p$ in their projected preference list then $r_{2i}$ must also be assigned to the hospital in position $p$ in their projected preference list. We thus obtain the following constraint for all $1\leq i\leq c$ and $1\leq p\leq l(r_{2i-1})+1$:

\begin{equation} \label{constraint:HRC2_4} \displaystyle x_{2i-1,p} = x_{2i,p} \end{equation}

\textbf{Stability 1} - In a stable matching $M$ in $I$, if a single resident $r_i\in R$ has a worse partner than some hospital $h_j\in H$ where $pref(r_i, p)=h_j$ and $rank(h_j, r_i) =q$ then $h_j$ must be fully subscribed with better partners than $r_i$. Therefore, either $\sum\limits_{p^{\prime}=p+1}^{l(r_i)+1} x_{i,p^{\prime}}= 0$ or $h_j$ is fully subscribed with better partners than $r_i$ and $\sum\limits_{q^{\prime}=1}^{q-1} \{x_{i^{\prime },p^{\prime \prime}} \in X :   ( r_{i^{\prime }, p^{\prime \prime}}) \in R(h_{j}, q^{\prime }) \} = c_j$.

Thus, for each $i ~ (2c+1\leq i\leq n_1)$ and $p ~ (1\leq p\leq l(r_i))$ we obtain the following constraint where $pref(r_i, p) = h_j$ and $rank(h_j, r_i)=q$:

%\begin{equation} \label{constraint:HRC2_5} \displaystyle c_j \sum\limits_{p^{\prime}=p+1}^{l(r_i)+1} x_{i,p^{\prime}} \leq \sum\limits_{q^{\prime}=1}^{q-1} \{x_{i^{\prime },p^{\prime \prime}} \in X :   rank(h_j,r_{i^{\prime }}) = q^{\prime } \wedge pref(r_{i^{\prime }}, p^{\prime \prime}) = h_j) \} \end{equation}

\begin{equation} \label{constraint:HRC2_5} \displaystyle c_j \sum\limits_{p^{\prime}=p+1}^{l(r_i)+1} x_{i,p^{\prime}} \leq \sum\limits_{q^{\prime}=1}^{q-1} \{x_{i^{\prime },p^{\prime \prime}} \in X :   ( r_{i^{\prime }, p^{\prime \prime}}) \in R(h_{j}, q^{\prime }) \} \end{equation}

\textbf{Stability 2(a)} - In a stable matching $M$ in $I$, if a couple $c_i=(r_{2i-1}, r_{2i})$ prefers hospital pair $(h_{j_1}, h_{j_2})$ (which is at position $p_1$ on $c_i$'s preference list) to $( M(r_{2i-1}), M(r_{2i}) )$ (which is at position $p_2$) then it must not be the case that, if $h_{j_2} = M(r_{2i})$ then $h_{j_1}$ is under-subscribed or prefers $r_{2i-1}$ to one of its partners in $M$. In the special case in which $pref(r_{2i-1}, p_1)= pref(r_{2i}, p_1) = h_{j_1}$ it must not be the case that, if $h_{j_1} = h_{j_2} = M(r_{2i})$ then $h_{j_1}$ is under-subscribed or prefers $r_{2i-1}$ to one of its partners in $M$ other than $r_{2i}$.

Thus, for the general case, we obtain the following constraint for all $i ~ (1\leq i\leq c)$ and $p_1, p_2$ $(1\leq p_1 < p_2 \leq l(r_{2i-1}))$ such that $pref(r_{2i}, p_1) = pref(r_{2i}, p_2)$  and $rank(h_{j_1}, r_{2i-1})=q$: 

\begin{equation} \label{constraint:HRC2_6} \displaystyle c_{j_1} x_{2i, p_2} \leq \sum\limits_{q^{\prime }=1}^{q-1} \{ x_{i^{\prime },p^{\prime \prime}} \in X: ( r_{i^{\prime }, p^{\prime \prime}}) \in R(h_{j_1}, q^{\prime })\} \end{equation}

However, for the special case in which $pref(r_{2i-1}, p_1)= pref(r_{2i}, p_1) = h_{j_1}$ we obtain the following constraint for all $i ~ (1\leq i\leq c)$ and $p_1, p_2$ where $(1\leq p_1 < p_2 \leq l(r_{2i-1}))$ such that $pref(r_{2i}, p_1) = pref(r_{2i}, p_2)$  and $rank(h_{j_1}, r_{2i-1})=q$:

\begin{equation} \label{constraint:HRC2_7} \displaystyle ( c_{j_1} - 1) x_{2i, p_2} \leq \sum\limits_{q^{\prime }=1}^{q-1} \{ x_{i^{\prime },p^{\prime \prime}} \in X : q^{\prime } \neq rank(h_{j_1}, r_{2i}) \wedge ( r_{i^{\prime }, p^{\prime \prime}}) \in R(h_{j_1}, q^{\prime })\} \end{equation}

\textbf{Stability 2(b)} - A similar constraint is required for the odd members of each couple. 

Thus, for the general case, we obtain the following constraint for all $i ~ (1\leq i\leq c)$ and $p_1, p_2$ where $(1\leq p_1 < p_2 \leq l(r_{2i}))$ such that $pref(r_{2i-1}, p_1) = pref(r_{2i-1}, p_2)$  and  $rank(h_{j_2}, r_{2i})=q$:

\begin{equation} \label{constraint:HRC2_8} \displaystyle c_{j_2} x_{2i-1, p_2} \in X \leq \sum\limits_{q^{\prime }=1}^{q-1} \{ x_{i^{\prime },p^{\prime \prime}} : ( r_{i^{\prime }, p^{\prime \prime}}) \in R(h_{j_2}, q^{\prime })\} \end{equation}

Again, for the special case in which $pref(r_{2i-1}, p_1)= pref(r_{2i}, p_1) = h_{j_2}$ we obtain the following constraint for all $i ~ (1\leq i\leq c)$ and $p_1, p_2$ where $(1\leq p_1 < p_2 \leq l(r_{2i}))$ such that $pref(r_{2i-1}, p_1) = pref(r_{2i-1}, p_2)$  and  $rank(h_{j_2}, r_{2i})=q$:

\begin{equation} \label{constraint:HRC2_9} \displaystyle ( c_{j_1} - 1) x_{2i-1, p_2} \leq \sum\limits_{q^{\prime }=1}^{q-1} \{ x_{i^{\prime },p^{\prime \prime}} \in X : q^{\prime } \neq rank(h_{j_2}, r_{2i-1}) \wedge ( r_{i^{\prime }, p^{\prime \prime}}) \in R(h_{j_2}, q^{\prime })\} \end{equation}

For all $j ~(1\leq j\leq n_2)$ and $q ~(1\leq q\leq l(h_j))$ define a new constraint such that:

\begin{equation} \label{definition:alpha} \displaystyle \alpha _{j, q} \geq  1 - \dfrac { \sum\limits_{q^{\prime }=1}^{q -1} \{ x_{i^{\prime },p^{\prime \prime}} \in X : ( r_{i^{\prime }, p^{\prime \prime}}) \in R(h_{j}, q^{\prime })  \} } {c_{j} } \end{equation}

Thus, if $h_{j}$ is full with assignees better than rank  $q$ then $\alpha _{j, q}$ may take the value 0 or 1. However, if $h_{j}$ is not full with assignees better than rank $q$ then $\alpha _{j, q} = 1$.

For all $j ~(1\leq j\leq n_2)$ and $q ~(1\leq q\leq l(h_j))$ define a new constraint such that:

\begin{equation} \label{definition:beta} \displaystyle \beta _{j, q} \geq 1 - \dfrac { \sum\limits_{q^{\prime }=1}^{q -1} \{ x_{i^{\prime },p^{\prime \prime}} \in X: ( r_{i^{\prime }, p^{\prime \prime}}) \in R(h_{j}, q^{\prime })  \}  } { ( c_{j} - 1 ) } \end{equation}

Thus, if $h_{j}$ has $c_j -1$ or more assignees better than rank  $q$ then $\beta _{j, q}$ may take the value 0 or 1. However, if $h_{j}$ has less than $c_j -1$ assignees better than rank $q$ then $\beta _{j, q} = 1$.

\textbf{Stability 3(a)} - In a stable matching $M$ in $I$, if a couple $c_i=(r_{2i-1}, r_{2i})$  is assigned to a worse pair than hospital pair $(h_{j_1}, h_{j_2})$ (where $h_{j_1} \neq h_{j_2}$) it must be the case that for some $t\in \{1,2\}$, $h_{j_t}$ is full and prefers its worst assignee to $r_{2i-2+t}$.

%every post of at least one of $h_{j_1}$ or $h_{j_2}$ has a strictly better partner than $r_{2i-1}$ and $r_{2i}$ respectively. Let $rank(h_{j_1}, r_{2i-1}) = q_1$ and $rank(h_{j_2}, r_{2i}) = q_2$.

Thus we obtain the following constraint for all $ i ~ (1\leq i\leq c)$ and $p ~ (1\leq p\leq l(r_{2i-1}))$ where $h_{j_1} = pref(r_{2i-1}, p)$, $h_{j_2} = pref(r_{2i}, p)$, $h_{j_1} \neq h_{j_2}$, $rank(h_{j_1}, r_{2i-1}) =q_1$ and $rank(h_{j_2}, r_{2i}) =q_2$:

\begin{equation} \label{constraint:HRC2_10} \displaystyle \sum\limits_{p^{\prime }=p+1}^{l(r_{2i-1})+1} x_{2i-1, p^{\prime }} + \alpha _{j_1, q_1} + \alpha _{j_2, q_2} \leq 2 \end{equation}

\textbf{Stability 3(b)} - In a stable matching $M$ in $I$, if a couple $c_i=(r_{2i-1}, r_{2i})$ is assigned to a worse pair than $(h_{j}, h_{j})$  where $M(r_{2i-1})\neq h_j$ and $M(r_{2i})\neq h_j$ then $h_{j}$  must not have two or more free posts available. 

\textbf{Stability 3(c)} - In a stable matching $M$ in $I$, if a couple $c_i=(r_{2i-1}, r_{2i})$  is assigned to a worse pair than $(h_{j}, h_{j})$ where $M(r_{2i-1})\neq h_j$ and $M(r_{2i})\neq h_j$ then $h_{j}$ must not prefer at least one of $r_{2i-1}$ or $r_{2i}$ to some assignee of $h_{j}$ in $M$ while having a single free post.

Both of the preceding stability definitions may be modeled by a single constraint. Thus, we obtain the following constraint for $i ~ (1\leq i\leq c)$ and $p ~ (1\leq p\leq l(r_{2i-1}))$  such that $pref(r_{2i-1}, p) = pref(r_{2i},p)$ and $h_j=pref(r_{2i-1}, p)$ where $q = \min \{ rank(h_j, r_{2i}),$ $ rank (h_j, r_{2i-1}) \}$ :

%\begin{equation} \label{constraint:HRC2_8} \displaystyle ( c_j-1) \sum\limits_{p^{\prime }=p+1}^{l(r_{2i-1})+1}  x_{2i-1,p^{\prime }} \leq \sum\limits_{q^{\prime }=1}^{l(h_j)} \{ x_{i^{\prime },p^{\prime \prime}} : (r_{i^{\prime }}, p^{\prime \prime}) \in S(h_j, q^{\prime }) \} \end{equation}

\begin{equation} \label{constraint:HRC2_11} \displaystyle c_j \sum\limits_{p^{\prime }=p+1}^{l(r_{2i-1})+1} x_{2i-1,p^{\prime }} - \dfrac {\sum\limits_{q^{\prime }=1}^{q-1} \{ x_{i^{\prime },p^{\prime \prime}} \in X : ( r_{i^{\prime }, p^{\prime \prime}}) \in R(h_{j}, q^{\prime })\}}{ (c_j -1 )} $$ $$ \leq \sum\limits_{q^{\prime }=1}^{l(h_j)} \{ x_{i^{\prime },p^{\prime \prime}} \in X: (r_{i^{\prime }}, p^{\prime \prime}) \in R(h_j, q^{\prime }) \} \end{equation}

\textbf{Stability 3(d)} - In a stable matching $M$ in $I$, if a couple $c_i=(r_{2i-1}, r_{2i})$ is jointly assigned to a worse pair than $(h_{j}, h_{j})$ where $M(r_{2i-1})\neq h_j$ and $M(r_{2i})\neq h_j$ then $h_{j}$ must not be fully subscribed and also have two assigned partners $r_x$ and $r_y$ (where $x\neq y)$ such that $h_{j}$ strictly prefers $r_{2i-1}$ to $r_x$ and also prefers $r_{2i}$ to $r_{y}$.

For each $(h_{j}, h_{j})$ acceptable to $(r_{2i-1}, r_{2i})$, let $r_{min}$ be the better of $r_{2i-1}$ and $r_{2i}$ according to hospital $h_j$ with $rank(h_j, r_{min}) = q_{min}$. Analogously, let $r_{max}$ be the worse of $r_{2i}$ and $r_{2i-1}$ according to hospital $h_j$ with $rank(h_j, r_{max}) = q_{max}$. Thus we obtain the following constraint for $i ~ (1\leq i\leq c)$ and $p ~ (1\leq p\leq l(r_{2i-1}))$ such that $pref(r_{2i-1}, p) = pref(r_{2i},p) = h_j$.

\begin{equation} \label{constraint:HRC2_12} \displaystyle \sum\limits_{p^{\prime }=p+1}^{l(r_{2i-1})+1} x_{2i-1, p^{\prime }} + \alpha _{j, q_{max}} + \beta _{j, q_{min}} \leq 2 \end{equation}

Objective Function - A maximum cardinalilty matching $M$ in $I$ is a stable matching in which the largest number of residents is matched amongst all of the stable matchings admitted by $I$. To maximise the size of the stable matching found we apply the following objective function:

\begin{equation} \label{constraint:HRC2_13} \displaystyle \max \sum\limits_{i=1}^{n_1} \sum\limits_{p=1}^{l(r_i)} x_{i,p} \end{equation}

\subsubsection{Proof of correctness of constraints in the IP model for {\sc hrc}}

\begin{theorem}\label{IPHRC2}
Given an instance $I$ of {\sc hr}, let $J$ be the corresponding IP model as defined in Section \ref{section:IPModelsHRCVariables} and Section {section:IPModelsHRCConstraints}. A stable matching in $I$ is exactly equivalent to a feasible solution to $J$.
\end{theorem}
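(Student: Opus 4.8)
The plan is to mirror the two-directional argument used in the proof of Theorem \ref{IPHR2}, but now verifying each of the seven blocking-pair cases of Definition \ref{stability:MM} separately against the constraint designed to forbid it. First I would establish the forward direction: given an MM-stable matching $M$ in $I$, I set $x_{i,p}=1$ exactly when $r_i$ is assigned to the hospital in position $p$ of its (projected) preference list, set $x_{i,l(r_i)+1}=1$ when $r_i$ is unassigned, and set the indicator variables $\alpha_{j,q}$ and $\beta_{j,q}$ to the least values permitted by Constraints \ref{definition:alpha} and \ref{definition:beta}. The structural constraints \ref{constraint:HRC2_1}--\ref{constraint:HRC2_4} then follow immediately from the fact that $M$ is a matching assigning couples jointly: \ref{constraint:HRC2_2} holds because each resident has exactly one partner or is unassigned, \ref{constraint:HRC2_3} from the capacity bound, and the consistency constraint \ref{constraint:HRC2_4} because $M$ assigns both members of a couple to hospitals drawn from the same pair on their joint list, so $r_{2i-1}$ and $r_{2i}$ occupy the same position $p$.

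Next I would argue by contraposition that the violation of each stability constraint corresponds exactly to the existence of the blocking configuration it encodes. For Stability 1 this is verbatim the {\sc hr} argument applied to Constraint \ref{constraint:HRC2_5}. For Stability 2(a)/2(b) I would check Constraints \ref{constraint:HRC2_6}--\ref{constraint:HRC2_9}, carefully separating the general case from the special case where the couple's preferred hospital coincides for both members ($h_{j_1}=h_{j_2}$); here the coefficient $(c_{j_1}-1)$ and the exclusion of $rank(h_{j_1},r_{2i})$ from the summation account for the fact that one post is already earmarked for the partner, so a single free post no longer suffices to block. For Stability 3(a) (distinct hospitals), having $\sum_{p'} x_{2i-1,p'}=1$ together with $\alpha_{j_1,q_1}=\alpha_{j_2,q_2}=1$ would violate \ref{constraint:HRC2_10}, and conversely; for 3(b)/3(c) the single Constraint \ref{constraint:HRC2_11} captures both the ``two free posts'' and ``one free post with a preference'' situations; and for 3(d) the indicator combination $\alpha_{j,q_{max}}+\beta_{j,q_{min}}$ in \ref{constraint:HRC2_12} encodes that a full hospital prefers the worse member to one assignee and the better member to a second, distinct assignee. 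The reverse direction then runs symmetrically: from a feasible solution I read off $M$ by placing $(r_i,h_j)\in M$ whenever $x_{i,p}=1$ with $pref(r_i,p)=h_j$, where \ref{constraint:HRC2_2}--\ref{constraint:HRC2_4} guarantee $M$ is a well-defined couple-respecting matching, and assuming a blocking configuration of any type forces the matching $\alpha$/$\beta$ values and position sums to violate the associated constraint.

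The hardest and most error-prone part will be verifying the semantics of the auxiliary indicators $\alpha_{j,q}$ and $\beta_{j,q}$ and the same-hospital cases 3(b)--3(d). Specifically, I must confirm that $\alpha_{j,q}=1$ exactly captures ``$h_j$ is not full with assignees better than rank $q$'' while genuinely allowing $\alpha_{j,q}\in\{0,1\}$ when $h_j$ is full, and that the division by $(c_j-1)$ appearing in \ref{constraint:HRC2_11} and in the definition of $\beta$ correctly distinguishes ``at least two free posts'' from ``exactly one free post''. Because these constraints carry fractional coefficients and because a single hospital may appear twice in a couple's projected list, the bookkeeping for the $(h_j,h_j)$ blocking cases --- ensuring that \emph{two distinct} assignees must be displaced for 3(d) whereas only aggregate occupancy matters for 3(b)/3(c) --- is where the equivalence is most delicate. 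I would therefore treat these cases last and in the greatest detail, checking the boundary values of the $\alpha$ and $\beta$ constraints explicitly to confirm that feasibility of the combined system is both necessary and sufficient for the absence of every blocking type in Definition \ref{stability:MM}.
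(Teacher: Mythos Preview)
Your proposal is correct and follows essentially the same route as the paper's own proof: a two-directional argument that sets the $x$-variables from $M$ (and conversely reads $M$ off from the $x$-variables), fixes $\alpha_{j,q}$ and $\beta_{j,q}$ at the minimal values allowed by Constraints~\ref{definition:alpha}--\ref{definition:beta}, and then verifies each stability constraint \ref{constraint:HRC2_5}--\ref{constraint:HRC2_12} against the corresponding clause of Definition~\ref{stability:MM} by contraposition. Your identification of the same-hospital cases 3(b)--3(d) and the fractional coefficients in Constraint~\ref{constraint:HRC2_11} as the delicate part is exactly where the paper also concentrates its casework.
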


\begin{proof}

Consider a stable matching $M$ in $I$. From $M$ we form an assignment of values to the variables  \textbf{x}, \textbf{$ \alpha $}, and \textbf{$ \beta $} as follows. 

Initially set $x_{i,p}=0$ for all $i ~ (1\leq i\leq n_1)$ and $p ~ (1\leq p\leq l(r_i)+1)$. Then for each $(r_i, h_j)\in M$ where $r_i$ is a single resident, set $x_{i,p}=1$, where $pref(r_i, p)= h_j$. If $r_i$ is unassigned then set $x_{i, l(r_i)+1}=1$. If $r_i$ is a linked resident, assume without loss of generality that $r_i = r_{2i-1}$ (respectively $r_{2i}$) then set $x_{2i-1,p} =1$ (respectively $x_{2i,p} =1$) where $pref( (r_{2i-1}, r_{2i}), p) = (h_{j_1}, h_{j_2})$ where $h_{j_1} = M(r_{2i-1})$ and $h_{j_2} = M((r_{2i})$. If $(r_{2i-1}, r_{2i})$ is unassigned then set $x_{2i-1, l(r_{2i-1})+1}=1$ and $x_{2i, l(r_{2i})+1}=1$.

As each resident has a single partner or is unassigned (but not both), for a given $i ~ (1\leq i\leq n_1)$, for exactly one value of $p$ in the range $1\leq p\leq r(i)+1$, $x_{i,p}=1$, and for each other value of $p$ in the same range, $x_{i,p}=0$, and therefore Constraint \ref{constraint:HRC2_2} holds for \textbf{x}. Also, if $x_{i,p}=1$ then for all $p^{\prime }\neq p$ such that $pref(r_i, p) = pref(r_i, p^{\prime })$, $x_{i,p^{\prime }}=0$.

Initially set $\alpha _{j,q}=0$ for all $j ~ (1\leq j\leq n_2)$ and $q ~ (1\leq q\leq l(h_j))$. Then for each $\alpha _{j,q}$, if $$ \sum\limits_{q^{\prime }=1}^{q -1} \{ x_{i^{\prime },p^{\prime \prime}} \in X : ( r_{i^{\prime }, p^{\prime \prime}}) \in R(h_{j}, q^{\prime })  \} < c_j$$ then set $\alpha _{j,q} = 1$. Initially set $\beta _{j,q}=0$ for all $j ~ (1\leq j\leq n_2)$ and $q ~ (1\leq q\leq l(h_j))$. Then for each $\beta _{j,q}$, if $${ \sum\limits_{q^{\prime }=1}^{q -1} \{ x_{i^{\prime },p^{\prime \prime}} \in X: ( r_{i^{\prime }, p^{\prime \prime}}) \in R(h_{j}, q^{\prime })  \}  } < c_j - 1$$ then set $\beta _{j,q} = 1$.

Since, each hospital $h_j$ is assigned to at most $c_j$ acceptable residents in $M$, Constraint \ref{constraint:HRC2_3} also holds for \textbf{x}. 

For each couple $(r_{2i-1}, r_{2i})$ in $I$, let $p ~ (1\leq p \leq l(r_{2i-1}))$ be given. If $r_{2i-1}$ is assigned to $h_{j_1} = pref (r_{2i-1},p)$ in $M$ then $r_{2i}$ is assigned to $h_{j_2} = pref (r_{2i},p)$ in $M$. Similarly, for each couple $(r_{2i-1}, r_{2i})$ in $I$, if $r_{2i-1}$ is not assigned to $h_{j_1} = pref (r_{2i-1},p)$ in $M$ then $r_{2i}$ is not assigned to $h_{j_2} = pref (r_{2i},p)$ in $M$. Therefore, in the assignment derived from $M$, $x_{2i-1,p} = x_{2i,p}$ for all $i ~ (1\leq i\leq c)$ and $p ~ (1\leq p\leq l(r_{2i-1}) + 1)$ (where $l(r_{2i-1}) = l (r_{2i})$) and Constraint \ref{constraint:HRC2_4} is satisfied for \textbf{x}.

% 1.9 - stability 1

Assume \textbf{x} does not satisfy Constraint \ref{constraint:HRC2_5}. For all $i ~ (2c+1\leq i \leq n_1)$, $j ~ (1\leq j \leq n_2)$ and $p ~ (1\leq p\leq l(r_{2i-1}))$,  suppose that $(r_i, h_j)$ is an acceptable pair not in $M$ where $h_j=pref(r_i, p)$ and $rank(h_j, r_i)=q$.  If $c_j \sum\limits_{q^{\prime}=q+1}^{l(h_j)+1} x_{i,p^{\prime}} = 0$ then Constraint \ref{constraint:HRC2_5} is trivially satisfied as $\sum\limits_{q^{\prime}=1}^{q-1} \{x_{i^{\prime },p^{\prime \prime}} \in X :  ( r_{i^{\prime }, p^{\prime \prime}}) \in R(h_{j}, q^{\prime }) \} \geq 0$.

Hence, $c_j \sum\limits_{q^{\prime}=q+1}^{l(h_j)+1} x_{i,p^{\prime}} = c_j$ If $\sum\limits_{q^{\prime}=1}^{q-1} \{x_{i^{\prime },p^{\prime \prime}} \in X :  ( r_{i^{\prime }, p^{\prime \prime}}) \in R(h_{j}, q^{\prime }) \} \geq c_j$ then Constraint \ref{constraint:HRC2_5} is satisfied. Hence $\sum\limits_{q^{\prime}=1}^{q-1} \{x_{i^{\prime },p^{\prime \prime}} \in X :  ( r_{i^{\prime }, p^{\prime \prime}}) \in R(h_{j}, q^{\prime }) \} < c_j$ However, since $c_j \sum\limits_{q^{\prime}=q+1}^{l(h_j)+1} x_{i,p^{\prime}} = c_j$, $r_i$ must be unassigned or have a partner worse than $h_j$. Also, since $\sum\limits_{q^{\prime}=1}^{q-1} \{x_{i^{\prime },p^{\prime \prime}} \in X :  ( r_{i^{\prime }, p^{\prime \prime}}) \in R(h_{j}, q^{\prime }) \} < c_j$, $h_j$ is either under-subscribed or has a partner worse than $r_i$. Thus $(r_i, h_j)$ blocks $M$, a contradiction. Hence, Constraint \ref{constraint:HRC2_5} is satisfied by \textbf{x}.

% HRC2_6 constraint - stability 2a(i)

Assume \textbf{x} does not satisfy Constraint \ref{constraint:HRC2_6}. For all $x_{2i,{p_2}}$ such that $i ~ (1\leq i\leq c)$, $p_1, p_2 ~(1\leq p_1 < p_2 \leq l(r_{2i-1}))$ where $h_{j_1} = pref(r_{2i-1}, p_1)$, $pref(r_{2i}, p_1) = pref(r_{2i}, p_2) = h_{j_2}$  and $rank(h_{j_1}, r_{2i-1})=q$. If $c_{j_1} x_{2i,{p_2}} = 0$ then Constraint \ref{constraint:HRC2_6} is trivially satisfied as $\sum\limits_{q^{\prime }=1}^{q-1} \{ x_{i^{\prime },p^{\prime \prime}} \in X : ( r_{i^{\prime }, p^{\prime \prime}}) \in R(h_{j_1}, q^{\prime })\} \geq 0$. Hence, $c_{j_1} x_{2i,{p_2}} = c_{j_1}$. If  $\sum\limits_{q^{\prime }=1}^{q-1} \{ x_{i^{\prime },p^{\prime \prime}} \in X : ( r_{i^{\prime }, p^{\prime \prime}}) \in R(h_{j_1}, q^{\prime })\} \geq c_{j_1}$ then Constraint \ref{constraint:HRC2_6} is satisfied. Hence, $\sum\limits_{q^{\prime }=1}^{q-1} \{ x_{i^{\prime },p^{\prime \prime}} \in X : ( r_{i^{\prime }, p^{\prime \prime}}) \in R(h_{j_1}, q^{\prime })\} < c_{j_1}$. 

However, since  $c_{j_1} x_{2i,{p_2}} = c_{j_1}$ in \textbf{x}, $(r_{2i-1}, r_{2i})$ is jointly assigned to a worse partner than $(h_{j_1}, h_{j_2})$. Further, since $\sum\limits_{q^{\prime }=1}^{q-1} \{ x_{i^{\prime },p^{\prime \prime}} \in X : ( r_{i^{\prime }, p^{\prime \prime}}) \in R(h_{j_1}, q^{\prime })\} < c_{j_1}$ in \textbf{x}, $h_{j_1}$ is either under-subscribed in $M$ or prefers $r_{2i-1}$ to some member of $M(h_{j_1})$. Thus, $(r_{2i-1}, r_{2i})$ blocks $M$ with $(h_{j_1}, h_{j_2})$, a contradiction. Therefore Constraint \ref{constraint:HRC2_6} holds in the assignment derived from $M$.

% HRC2_7 constraint - stability 2a(ii)

Assume \textbf{x} does not satisfy Constraint \ref{constraint:HRC2_7}. Let there be $x_{2i,{p_2}}$ such that $i ~ (1\leq i\leq c)$ and $p_1, p_2 (1\leq p_1 < p_2 \leq l(r_{2i-1}))$, where $h_{j} = pref(r_{2i-1, p_1}) = pref(r_{2i, p_1})$, $pref(r_{2i}, p_1) = pref(r_{2i}, p_2) = h_j$  and $rank(h_{j}, r_{2i-1})=q$. If $(c_{j} - 1) x_{2i,{p_2}} = 0$ then Constraint \ref{constraint:HRC2_7} is trivially satisfied as $\sum\limits_{q^{\prime }=1}^{q-1} \{ x_{i^{\prime },p^{\prime \prime}} \in X : ( r_{i^{\prime }, p^{\prime \prime}}) \in R(h_{j}, q^{\prime })\} \geq 0$. Hence, $(c_{j} - 1) x_{2i,{p_2}} = c_{j} - 1$. If  $\sum\limits_{q^{\prime }=1}^{q-1} \{ x_{i^{\prime },p^{\prime \prime}} \in X : ( r_{i^{\prime }, p^{\prime \prime}}) \in R(h_{j}, q^{\prime })\} \geq c_{j} - 1$ then Constraint \ref{constraint:HRC2_7} is satisfied. Hence, $\sum\limits_{q^{\prime }=1}^{q-1} \{ x_{i^{\prime },p^{\prime \prime}} \in X : ( r_{i^{\prime }, p^{\prime \prime}}) \in R(h_{j}, q^{\prime })\} < c_{j} - 1$. 

However, since  $(c_{j}-1) x_{2i,{p_2}} = c_{j} - 1$ in \textbf{x}, $(r_{2i-1}, r_{2i})$ is jointly assigned to a worse partner than $(h_{j}, h_{j})$. Further, since $\sum\limits_{q^{\prime }=1}^{q-1} \{ x_{i^{\prime },p^{\prime \prime}} \in X : ( r_{i^{\prime }, p^{\prime \prime}}) \in R(h_{j}, q^{\prime })\} < (c_j - 1)$ in \textbf{x}, $h_{j}$ is either under-subscribed in $M$ or prefers $r_{2i-1}$ to some assignee in $M(h_{j})$ other than $r_{2i}$. Thus, $(r_{2i-1}, r_{2i})$ blocks $M$ with $(h_{j}, h_{j})$, a contradiction. Therefore Constraint \ref{constraint:HRC2_7} holds in the assignment derived from $M$.

% HRC2_8 and HRC2_9 constraint - stability 2a(i) and 2a(ii)

A similar argument for the odd members of the couples in $M$ ensures that Constraints \ref{constraint:HRC2_8} and \ref{constraint:HRC2_9} are also satisfied in \textbf{x}.

% HRC2_10 - stability 3a

Assume \textbf{x} does not satisfy Constraint \ref{constraint:HRC2_10}. There exists $ i ~ (1\leq i\leq c)$ and $p ~ (1\leq p\leq r_{2i-1})$, where $h_{j_1} = pref(r_{2i-1}, p)$, $h_{j_2} = pref(r_{2i}, p)$, $h_{j_1} \neq h_{j_2}$, $rank(h_{j_1}, r_{2i-1}) =q_1$ and $rank(h_{j_2}, r_{2i}) =q_2$, if $\sum\limits_{p^{\prime }=p+1}^{l(r_{2i-1})+1} x_{2i-1, p^{\prime }} =0$ then Constraint \ref{constraint:HRC2_10} must be satisfied in \textbf{x}. Hence, $\sum\limits_{p^{\prime }=p+1}^{l(r_{2i-1})+1} x_{2i-1, p^{\prime }} =1$. If $\alpha _{j_1, q_1} = 0$ (similarly $\alpha _{j_2, q_2} = 0$)  then Constraint \ref{constraint:HRC2_10} must be satisfied in \textbf{x}. Hence $\sum\limits_{p^{\prime }=p+1}^{l(r_{2i-1})+1} x_{2i-1, p^{\prime }} =1$,  $\alpha _{j_1, q_1} =1$ and $\alpha _{j_2, q_2} =1$.

Since $\alpha _{j_1, q_1} =1$, $ { \sum\limits_{q^{\prime }=1}^{q_1 -1} \{ x_{i^{\prime },p^{\prime \prime}} \in X : ( r_{i^{\prime }, p^{\prime \prime}}) \in R(h_{j_1}, q^{\prime })  \}  } < c_j$. Thus $h_{j_1}$ is under-subscribed in $M$ or prefers $r_{2i-1}$ to some assignee in $M(h_{j_1})$. Similarly, if $\alpha _{j_2, q_2} =1$, then $h_{j_2}$ is under-subscribed in $M$ or prefers $r_{2i}$ to some assignee in $M(h_{j_2})$. Also in $M$, $(r_{2i-1}, r_{2i})$ is unassigned or is assigned to a worse partner than $(h_{j_1}, h_{j_2})$. Thus, $(r_{2i-1}, r_{2i})$ blocks $M$ with $(h_{j_1}, h_{j_2})$, a contradiction. Therefore Constraint \ref{constraint:HRC2_10} is satisfied in \textbf{x, $\alpha $, $\beta $}.

%constraint HRC2_11

Let $$\gamma = c_j \sum\limits_{p^{\prime }=p+1}^{l(r_{2i-1})+1} x_{2i-1,p^{\prime }} - \dfrac {\sum\limits_{q^{\prime }=1}^{q-1} \{ x_{i^{\prime },p^{\prime \prime}} \in X : ( r_{i^{\prime }, p^{\prime \prime}}) \in R(h_{j}, q^{\prime })\}}{ (c_j -1 )}$$ Further, let $$\delta = \sum\limits_{q^{\prime }=1}^{l(h_j)} \{ x_{i^{\prime },p^{\prime \prime}} \in X: (r_{i^{\prime }}, p^{\prime \prime}) \in R(h_j, q^{\prime }) \}$$ and let $$\varepsilon = {\sum\limits_{q^{\prime }=1}^{q-1} \{ x_{i^{\prime },p^{\prime \prime}} \in X : ( r_{i^{\prime }, p^{\prime \prime}}) \in R(h_{j}, q^{\prime })\}}$$

Assume \textbf{x} does not satisfy Constraint \ref{constraint:HRC2_11}. Hence, $\gamma > \delta $. If $c_j \sum\limits_{p^{\prime }=p+1}^{l(r_{2i-1})+1} x_{2i-1,p^{\prime }} = 0$ then $\gamma \leq 0$. However, $\delta \geq 0$, a contradiction. Hence $c_j \sum\limits_{p^{\prime }=p+1}^{l(r_{2i-1})+1} x_{2i-1,p^{\prime }} = c_j$.

Clearly, $0 \leq \varepsilon \leq c_j$. Assume $\varepsilon = c_j$. Hence, $\gamma = c_j - ( c_j / c_j - 1) = c_j(c_j -2)/ (c_j -1)$. A simple argument shows that $c_j - 2 < \gamma < c_j - 1$. Thus $\delta \leq c_j - 2$. However, this implies that $h_j$ has two free posts in $M$ and $(r_{2i-1}, r_{2i})$ is unassigned or is assigned to a worse partner than $(h_{j}, h_{j})$. Thus, $(r_{2i-1}, r_{2i})$ blocks $M$ with $(h_j, h_j)$ , a contradiction. Assume $\varepsilon = c_j -1$. Hence $\gamma = c_j -1$. Thus, $\delta \leq c_j - 2$. However, this again implies that $h_j$ has two vacant posts in $M$ and $(r_{2i-1}, r_{2i})$ is unassigned or is assigned to a worse partner than $(h_{j}, h_{j})$. Thus, $(r_{2i-1}, r_{2i})$ blocks $M$ with $(h_j, h_j)$, a contradiction. 

Hence, $\varepsilon < c_j -1$ and thus $c_j -1 < \gamma \leq c_j$. Therefore, $\delta \leq c_j -1$. This implies that $h_j$ has a vacant post in $M$, moreover, $h_j$ prefers $r_{2i-1}$ or $r_{2i}$  to at least one of its assignees and $(r_{2i-1}, r_{2i})$ is unassigned or is assigned to a worse partner than $(h_{j}, h_{j})$. Hence, $(r_{2i-1}, r_{2i})$ blocks $M$ with $(h_j, h_j)$, a contradiction. Therefore Constraint \ref{constraint:HRC2_11} is satisfied in \textbf{x}.

%constraint HRC2_12

Assume \textbf{x, $\alpha $, $\beta $} does not satisfy Constraint \ref{constraint:HRC2_12}. For some $i ~ (1\leq i\leq c)$ and $p ~ (1\leq p\leq l(r_{2i-1}))$ where $pref(r_{2i-1}, p) = pref(r_{2i},p) = h_j$, let $r_{min}$ be the better of $r_{2i}$ and $r_{2i-1}$ according to hospital $h_j$ with $rank(h_j, r_{min}) = q_{min}$. 

%If $\sum\limits_{p^{\prime }=p+1}^{l(r_{2i-1})+1} x_{2i-1, p^{\prime }} =0$ then Constraint \ref{constraint:HRC2_12} must be satisfied in \textbf{x, $\alpha $, $\beta $}. Hence, $\sum\limits_{p^{\prime }=p+1}^{l(r_{2i-1})+1} x_{2i-1, p^{\prime }} =1$. If $\alpha _{j, q_{max}} =0$ then Constraint \ref{constraint:HRC2_12} must be satisfied in \textbf{x, $\alpha $, $\beta $}. Similarly if $\beta _{j, q_{min}} =0$ then Constraint \ref{constraint:HRC2_12} must be satisfied in \textbf{x, $\alpha $, $\beta $}.  

Hence $\sum\limits_{p^{\prime }=p+1}^{l(r_{2i-1})+1} x_{2i-1, p^{\prime }} =1$,  $\alpha _{j, q_{max}} =1$ and $\beta _{j, q_{min}} =1$. Since $\alpha _{j, q_{max}} =1$, $$ { \sum\limits_{q^{\prime }=1}^{q_{max} -1} \{ x_{i^{\prime },p^{\prime \prime}} \in X : ( r_{i^{\prime }, p^{\prime \prime}}) \in R(h_{j}, q^{\prime })  \}  } < c_j$$ Thus $h_{j}$ is under-subscribed in $M$ or prefers $r_{2i-1}$ to some assignee, $r_x$, in $M(h_{j})$. Similarly, if $\beta _{j, q_{min}} =1$ then $${ \sum\limits_{q^{\prime }=1}^{q -1} \{ x_{i^{\prime },p^{\prime \prime}} \in X: ( r_{i^{\prime }, p^{\prime \prime}}) \in R(h_{j}, q^{\prime })  \}  } < c_j -1  $$ Thus $h_{j}$ is under-subscribed in $M$ or prefers $r_{2i}$ to some assignee, $r_y$, in $M(h_{j})$. 

This implies that in $M$, $(r_{2i-1}, r_{2i})$ is assigned to a worse partner than $(h_j, h_j)$, $h_j$ prefers $r_{2i-1}$ to some $r_{x} \in M(h_j)$ and also prefers $r_{2i}$ to some $r_{y} \in M(h_j)$. Moreover, in $M$, $(r_{2i-1}, r_{2i})$ is unassigned or is assigned to a worse partner than $(h_{j}, h_{j})$. Thus, $(r_{2i-1}, r_{2i})$ blocks $M$ with $(h_j, h_j)$, a contradiction. Therefore Constraint \ref{constraint:HRC2_12} is satisfied in \textbf{x, $\alpha $, $\beta $}. As all of the constraints in $J$ hold for an assignment derived from a stable matching $M$, a stable matching $M$ in $I$ represents a feasible solution to $J$.

Conversely, consider a feasible  solution, $\langle \textbf{x} \rangle$,  to $J$. From such a solution we form a set of pairs, $M$, as follows.

Initially let $M = \emptyset$. For each $i$   $(1\leq i\leq n_1)$ and $p$  $(1\leq p\leq l(r_i))$ if $x_{i,p}=1$ then add $(r_i, h_j)$ to $M$ where $h_j= pref(r_i, p)$. As $\langle \textbf{x} \rangle$ satisfies Constraints \ref{constraint:HRC2_1}, \ref{constraint:HRC2_2} and \ref{constraint:HRC2_3}, each resident in $M$ must have exactly one partner or be unassigned (but not both) and each hospital in $M$ must have at most $c_j$ assignees.

As $\langle \textbf{x} \rangle$ satisfies Constraint \ref{constraint:HRC2_4} each resident couple $(r_{2i-1}, r_{2i})$ must be either jointly assigned to a hospital pair $(h_{j_1}, h_{j_2})$ where $pref((r_{2i-1}, r_{2i}, p ) =(h_{j_1}, h_{j_2}))$ for some $p ~ (1\leq p\leq l(r_{2i-1}))$, meaning that $(r_{2i-1}, h_j) \in M$ and $(r_{2i}, h_{j_2}) \in M$, or jointly unassigned meaning that both $r_{2i-1}$ and $r_{2i}$ are unassigned in $M$.

Therefore the set of pairs $M$ created from the solution $\langle$ \textbf{x} \textbf{$\alpha $} \textbf{$\beta $} $\rangle$ to $J$, is a matching in $I$. We now show that $M$ is stable. 

Type 1 Blocking Pair - Assume $(r_i, h_j)$ blocks $M$ (as a type 1 blocking pair), where $r_i$ is a single resident. Let $pref(r_i,p) = h_j$ and $rank(h_j, r_i) =q$. Therefore, $r_i$ is unassigned or, has a worse partner than $h_j$ and $h_j$ is under-subscribed or prefers $r_i$ to some member of $M(h_j)$. However, this implies that in $J$,  $c_j \sum\limits_{p^{\prime}=p+1}^{l(r_i)+1} x_{i,p^{\prime}} = c_j$ and $ \sum\limits_{q^{\prime}=1}^{q-1} \{x_{i^{\prime },p^{\prime \prime}} \in X :   rank(h_j,r_{i^{\prime }}) = q^{\prime } \wedge pref(r_{i^{\prime }}, p^{\prime \prime}) = h_j) \} < c_j$ and therefore Constraint \ref{constraint:HRC2_5} is not satisfied in $J$, a contradiction. Therefore no such $(r_i, h_j)$ can block $M$.

Type 2 Blocking Pair - Assume $(r_{2i-1}, r_{2i})$ blocks $M$ (as a type 2 blocking pair) with $(h_{j_1}, h_{j_2})$ where $pref( (r_{2i-1}, r_{2i}), p_1 ) = (h_{j_1}, h_{j_2})$, $$pref( (r_{2i-1}, r_{2i}), p_2) =  ( M(r_{2i-1}), M(r_{2i}) )$$ for $(1\leq p_1 < p_2 \leq l(r_{2i-1}))$, $pref(r_{2i}, p_1) = pref(r_{2i}, p_2) = h_{j_2}$  and $rank(h_{j_1}, r_{2i-1})=q$. Hence, $r_{2i}$ has the same hospital in positions $p_1$ and $p_2$, and $h_{j_1}$ is under-subscribed or prefers $r_{2i-1}$ to some member of $M(h_{j_1})$.

Further assume $pref(r_{2i-1}, p_1)\neq pref(r_{2i}, p_1)$. Hence $c_{j_1} x_{2i, p_2} = c_{j_1}$ and $$\sum\limits_{q^{\prime }=1}^{q-1} \{ x_{i^{\prime },p^{\prime \prime}} \in X : ( r_{i^{\prime }, p^{\prime \prime}}) \in R(h_{j_1}, q^{\prime })\} < c_{j_1}$$ as $h_{j_1}$ is under-subscribed or prefers $r_{2i-1}$ to some member of $M(h_{j_1})$. Hence in $J$, the RHS of Constraint \ref{constraint:HRC2_6} is at most $(c_{j_1} - 1)$ and the LHS is equal to $c_{j_1}$ and therefore Constraint \ref{constraint:HRC2_6} is not satisfied in $J$, a contradiction. Therefore no such $( (r_{2i-1}, r_{2i}), (h_{j_1}, h_{j_2}) )$ can block $M$.

Thus, $pref(r_{2i-1}, p_1)= pref(r_{2i}, p_1)$. Hence $( c_{j_1} - 1) x_{2i, p_2} = ( c_{j_1} - 1)$ and $\sum\limits_{q^{\prime}=1}^{q-1}  \{ x_{i^{\prime },p^{\prime \prime}} \in X : q^{\prime } \neq rank(h_{j_1}, r_{2i}) \wedge ( r_{i^{\prime }, p^{\prime \prime}}) \in R(h_{j_1}, q^{\prime }) < c_{j_1} - 1)$ as $h_{j_1}$ is under-subscribed or prefers $r_{2i-1}$ to some member of $M(h_{j_1})$ other than $r_{2i}$. Hence in $J$, the RHS of Constraint \ref{constraint:HRC2_7} is at most $c_{j_1} - 2$ and the LHS is equal to $c_{j_1} - 1$ and therefore Constraint \ref{constraint:HRC2_7} is not satisfied in $J$, a contradiction. Therefore no such $( (r_{2i-1}, r_{2i}), (h_{j_1}, h_{j_2}) )$ can block $M$.

A similar argument can be used to show that the odd member of each couple cannot improve in such a blocking pair in $M$ and therefore Constraint \ref{constraint:HRC2_8} and \ref{constraint:HRC2_9} are both satisfied in the assignment derived from $M$.

\noindent Type 3 Blocking Pairs - Suppose that $(r_{2i-1}, r_{2i})$ blocks $M$ (as a type 3 blocking pair) with $(h_{j_1}, h_{j_2})$ where $rank( (r_{2i-1}, r_{2i}), (h_{j_1}, h_{j_2}) )= p$, $rank(h_{j_1}, r_{2i-1}) = q_1$ and $rank(h_{j_2}, r_{2i}) = q_2$. Hence, $(r_{2i-1}, r_{2i})$ is unassigned or prefers $(h_{j_1}, h_{j_2})$ to $( M(r_{2i-1}), M(r_{2i}) )$ where $h_{j_1} \neq M(r_{2i-1})$ and $h_{j_2} \neq M(r_{2i})$.

Type 3a Blocking Pair -  $h_{j_1} \neq h_{j_2}$. Therefore, $h_{j_1}$ is under-subscribed or prefers $r_{2i-1}$ to some member of $M(h_{j_1})$ and $h_{j_2}$ is under-subscribed prefers $r_{2i}$ to some member of $M(h_{j_2})$. However, this implies that both, $$ { \sum\limits_{q^{\prime }=1}^{q_1 -1} \{ x_{i^{\prime },p^{\prime \prime}} \in X : ( r_{i^{\prime }, p^{\prime \prime}}) \in R(h_{j_1}, q^{\prime })  \}  }  <  c_{j_1} $$ and $$  { \sum\limits_{q^{\prime }=1}^{q_2 -1} \{ x_{i^{\prime },p^{\prime \prime}} \in X : ( r_{i^{\prime }, p^{\prime \prime}}) \in R(h_{j_2}, q^{\prime })  \} }  < c_{j_2} $$ in $J$. Hence $\alpha _{j_1, q_1} = 1$, $\alpha _{j_2, q_2} =1$ and $$\sum\limits_{p^{\prime }=p+1}^{l(r_{2i-1})+1} x_{2i-1, p^{\prime }} = 1$$ and thus Constraint \ref{constraint:HRC2_10} is not satisfied in $J$, a contradiction. Therefore no such $( (r_{2i-1}, r_{2i}), $ $ (h_{j_1}, h_{j_2}))$ can block $M$.

Type 3b Blocking Pair - $h_{j_1} = h_{j_2} = h_j $ and $h_j$ has two unassigned posts in $M$. 

However, this implies that in $J$, $$\sum\limits_{q^{\prime }=1}^{l(h_j)} \{ x_{i^{\prime },p^{\prime \prime}} \in X : (r_{i^{\prime }}, p^{\prime \prime}) \in R(h_j, q^{\prime }) \leq c_j -2$$ and also $${\sum\limits_{q^{\prime }=1}^{q-1} \{ x_{i^{\prime },p^{\prime \prime}} : ( r_{i^{\prime }, p^{\prime \prime}}) \in R(h_{j}, q^{\prime })\}} \leq c_j - 2 $$ Further, $$ c_j \sum\limits_{p^{\prime }=p+1}^{l(r_{2i-1})+1} x_{2i-1,p^{\prime }} =c_j$$ since $r_{2i-1}$ and $r_{2i}$ prefer $h_j$ to $M(r_{2i-1})$ and $M(r_{2i})$ respectively. Hence in $J$, the RHS of Constraint \ref{constraint:HRC2_11} is at most $c_j - 2$ and the LHS is greater than $c_j - 1$ and therefore Constraint \ref{constraint:HRC2_11} is not satisfied in $J$, a contradiction. Therefore no such $( (r_{2i-1}, r_{2i}), (h_j, h_j) )$ can block $M$.

Type 3c Blocking Pair - $h_{j_1} = h_{j_2} = h_j $ and $h_j$ has a vacant post in $M$ and $h_j$ also prefers $r_{2i-1}$ or $r_{2i}$ to some other member of $M(h_j)$. Let $q = \min\{rank(h_{j},r_{2i-1}), rank(h_{j},r_{2i})\}$.

However, this implies that in $J$,$\sum\limits_{q^{\prime }=1}^{l(h_j)} \{ x_{i^{\prime },p^{\prime \prime}} \in X : (r_{i^{\prime }}, p^{\prime \prime}) \in R(h_j, q^{\prime }) \leq c_j -1$.

Since $h_j$ prefers $r_{2i-1}$ or $r_{2i}$ to some other member of $M(h_j)$ and $h_j$ also has a free post $$\sum\limits_{q^{\prime }=1}^{q-1} \{ x_{i^{\prime },p^{\prime \prime}} : ( r_{i^{\prime }, p^{\prime \prime}}) \in R(h_{j}, q^{\prime }) \leq (c_j -2)$$ Since $(r_{2i-1}, r_{2i})$ is unassigned or prefers $(h_{j_1}, h_{j_2})$ to $( M(r_{2i-1}), M(r_{2i}) )$, $$ c_j \sum\limits_{p^{\prime }=p+1}^{l(r_{2i-1})+1} x_{2i-1,p^{\prime }} =c_j$$ Hence, $$ {\sum\limits_{q^{\prime }=1}^{q-1} \{ x_{i^{\prime },p^{\prime \prime}} \in X : ( r_{i^{\prime }, p^{\prime \prime}}) \in R(h_{j}, q^{\prime })\}} < c_j -1$$ Thus in $J$, the RHS of Constraint \ref{constraint:HRC2_11} is at most $c_j - 1$ and the LHS is greater than $c_j - 1$ and therefore Constraint \ref{constraint:HRC2_11} is not satisfied in $J$, a contradiction. Therefore no such $( (r_{2i-1}, r_{2i}), (h_j, h_j) )$ can block $M$.

Type 3d Blocking Pair - $h_{j_1} = h_{j_2} = h_j $, $h_{j}$ is fully subscribed and also has two assignees $r_x$ and $r_y$ (where $x\neq y$ and neither $x$ nor $y$ is equal to $r_{2i-1}$ or $r_{2i}$) such that $h_{j}$ prefers $r_{2i-1}$ to $r_x$ and also prefers $r_{2i}$ to $r_{y}$. Let $r_{min}$ be the better of $r_{2i}$ and $r_{2i-1}$ according to hospital $h_j$ with $rank(h_j, r_{min}) = q_{min}$. Analogously, let $r_{max}$ be the worse of $r_{2i}$ and $r_{2i-1}$ according to hospital $h_j$ with $rank(h_j, r_{max}) = q_{max}$.

However, this implies that both $${ \sum\limits_{q^{\prime }=1}^{q_{min} -1} \{ x_{i^{\prime },p^{\prime \prime}} \in X : ( r_{i^{\prime }, p^{\prime \prime}}) \in R(h_{j}, q^{\prime })  \}  }  <  c_{j} - 1 $$ and $${ \sum\limits_{q^{\prime }=1}^{q_{max} -1} \{ x_{i^{\prime },p^{\prime \prime}} \in X : ( r_{i^{\prime }, p^{\prime \prime}}) \in R(h_{j}, q^{\prime })  \} }  <  c_{j} $$ in $J$.  Hence $\beta _{j, q_{min}} =1$ and $\alpha _{j, q_{max}} = 1$. Also $$\sum\limits_{p^{\prime }=p+1}^{l(r_{2i-1})+1} x_{2i-1, p^{\prime }} = 1$$ and thus Constraint \ref{constraint:HRC2_12} is not satisfied in $J$, a contradiction. Therefore no such $( (r_{2i-1}, r_{2i}),$ $ (h_{j_1}, h_{j_2}))$ can block $M$.

\end{proof}

\subsection{Creating the IP model for an example {\sc hrc} instance}
\label{section:IPModelsExample}

\begin{figure}[h]
\[
\begin{array}{rll}

\multicolumn{3}{c}{Residents} \\
\hline
\\
(r_{1}, r_{2}) : & (h_{1}, h_{2}) ~~ (h_{2}, h_{1}) ~~ (h_{2}, h_{3}) & \\ \\
 
r_3 : & h_1 ~~ h_3 ~~  & \\ 
r_4 : & h_2 ~~ h_3 ~~  & \\ 
r_5 : & h_2 ~~ h_1 ~~  & \\ 
r_6 : & h_1 ~~ h_2 ~~  & \\  \\

\multicolumn{3}{c}{Hospitals} \\
\hline
\\

h_1 : ~ 2 ~ : & r_1 ~~ r_3 ~~ r_2 ~~ r_6 ~~ r_5 \\
h_2 : ~ 2 ~ : & r_2 ~~ r_6 ~~ r_1 ~~ r_4 ~~ r_5 \\
h_3 : ~ 2 ~ : & r_4 ~~ r_3 ~~ r_2 \\
\end{array}
\]
\caption{Example instance of HRC.}
\label{instance:example}
\end{figure}

Let $I$ be the example instance of {\sc hrc} shown in Figure \ref{instance:example} where the capacity of each hospital in $I$ is shown after the first colon, followed by the preference list. We shall consider the creation of the corresponding IP model $J$ for the example instance $I$. For each resident $r_i \in I ~ (1\leq i\leq 6)$ construct a vector $x_i$ consisting of $l(r_i)+1$ binary variables, $x_{i,p} ~ (1\leq p\leq l(r_i)+1)$, as shown in Figure \ref{model:example_variables}, and apply the constraints as described in Section \ref{section:IPModelsHRC}. Thus, we form an IP model $J$ derived from $I$.

\begin{figure}[h]
\[
\begin{array}{rll}
x_1 : & \langle ~ x_{1,1} ~~ x_{1,2} ~~ x_{1,3} ~~ x_{1,4} ~  \rangle & \\ 
x_2 : & \langle ~ x_{2,1} ~~ x_{2,2} ~~ x_{2,3} ~~ x_{2,4} ~  \rangle & \\ 
x_3 : & \langle ~ x_{3,1} ~~ x_{3,2} ~~ x_{2,3} ~ \rangle & \\ 
x_4 : & \langle ~ x_{4,1} ~~ x_{4,2} ~~ x_{4,3} ~ \rangle & \\ 
x_5 : & \langle ~ x_{5,1} ~~ x_{5,2} ~~ x_{5,3} ~ \rangle & \\ 
x_6 : & \langle ~ x_{6,1} ~~ x_{6,2} ~~ x_{6,3} ~ \rangle & \\  
\end{array}
\]
\caption{Variables created in $J$ from the example instance of {\sc hrc} shown in Figure \ref{instance:example}.}
\label{model:example_variables}
\end{figure}

Let $\mathbf{x} ^u$ denote the assignment of values to the variables in the IP model $J$ shown in Figure \ref{model:example_unstable}. We will show that $\mathbf{x} ^u$ is not a feasible solution to the IP model $J$ and thus, by Theorem \ref{IPHRC2}, does not correspond to a stable matching in $I$. However, as all instantiations of Constraints \ref{constraint:HRC2_1} - \ref{constraint:HRC2_4} hold for $\mathbf{x} ^u$, $\mathbf{x} ^u$ does correspond to a matching in $I$, namely $M_u = \{ (r_1, h_2 ) , (r_2, h_3 ), (r_3 , h_1 ),$ $ (r_4 , h_3 ), (r_5 , h_1 ), (r_6 , h_2 ) \}$. We shall demonstrate that several constraints in $J$ are violated by $\mathbf{x} ^u$ and that these constraints correspond to blockings pairs of $M_u$ in $I$.

\begin{figure}[h]
\[
\begin{array}{rll}
x_1 : & \langle ~ 0 ~~ 0 ~~ 1 ~~ 0 ~  \rangle & \\ 
x_2 : & \langle ~ 0 ~~ 0 ~~ 1 ~~ 0 ~  \rangle & \\ 
x_3 : & \langle ~ 1 ~~ 0 ~~ 0 ~  \rangle & \\ 
x_4 : & \langle ~ 0 ~~ 1 ~~ 0 ~  \rangle & \\ 
x_5 : & \langle ~ 0 ~~ 1 ~~ 0 ~  \rangle & \\ 
x_6 : & \langle ~ 0 ~~ 1 ~~ 0 ~  \rangle & \\  
\end{array}
\]
\caption{The assignment of values, $\mathbf{x} ^u$, to the variables in the IP Model $J$ corresponding to the unstable matching $M_u$ in $I$, the example instance of {\sc hrc}.}
\label{model:example_unstable}
\end{figure}

Inequality \ref{constraint:type1blockingpairexample} represents the instantiation of Constraint \ref{constraint:HRC2_5} in the case that $i=6$ and $p=1$. The LHS of Inequality \ref{constraint:type1blockingpairexample} is the product of the capacity of $h_1$ and the values of the variables that represent $r_6$ being matched to a worse partner than $h_1$ or being unmatched. The RHS of Inequality \ref{constraint:type1blockingpairexample} is the summation of the values of the variables that indicate whether $h_1$ is matched to partners it prefers to $r_6$.

\begin{equation} \label{constraint:type1blockingpairexample} \displaystyle c_1 ( x_{6,2} +x_{6,3} ) \leq x_{1,1} + x_{3,1} + x_{2,2} \end{equation}

The acceptable pair $(r_6 , h_1 )$ is a Type 1 blocking pair of $M_u$ in $I$. %Correspondingly one of the instantiations of Constraint \ref{constraint:HRC2_5} is not satisfied by $\mathbf{x} ^u$. Inequality \ref{constraint:type1blockingpairexample} is such an instantiation in $J$ in the case that $i=6$ and $p=1$. 
In this case the LHS of Inequality \ref{constraint:type1blockingpairexample} equals 2 and the RHS of Inequality \ref{constraint:type1blockingpairexample} equals 1. Hence Inequality \ref{constraint:type1blockingpairexample} is not satisfied in $\mathbf{x} ^u$ and thus $\mathbf{x} ^u$ is not a feasible solution to $J$.

\medskip

Inequality \ref{constraint:type2blockingpairexample} represents the instantiation of Constraint \ref{constraint:HRC2_8} in the case that $p1=2$, $p2=3$ and $r=3$. In this case the LHS of Inequality \ref{constraint:type2blockingpairexample} is the product of the capacity of $h_1$ and the value of the variable that represents $r_1$ being matched at position 3 on its projected preference list (and thus, since no instance of Constraint \ref{constraint:HRC2_4} is violated, $(r_1, r_2 )$ being jointly matched to the pair in position 3 on its joint projected preference list). The RHS of Inequality \ref{constraint:type2blockingpairexample} is the summation of the values of variables which indicate whether $h_1$ is matched to partners it prefers to $r_2$.

\begin{equation} \label{constraint:type2blockingpairexample} \displaystyle c_1 ( x_{1,3} ) \leq x_{1,1} + x_{3,1} \end{equation}

The acceptable pair $( (r_1 , r_2 ), (h_2 , h_1 ))$ is a Type 2 blocking pair of $M_u$ in $I$. %Correspondingly one of the instantiations of Constraints \ref{constraint:HRC2_6} - \ref{constraint:HRC2_9}, in this case Constraint \ref{constraint:HRC2_8}, is not satisfied by $\mathbf{x} ^u$. Equation \ref{constraint:type2blockingpairexample} is such an instantiation in $J$ in the case that $p1=1$, $p2=3$ and $r=3$. 
In this case the LHS of Inequality \ref{constraint:type2blockingpairexample} equals 2 and the RHS of Inequality \ref{constraint:type2blockingpairexample} equals 1. Hence Inequality \ref{constraint:type2blockingpairexample} is not satisfied in $\mathbf{x} ^u$ and thus $\mathbf{x} ^u$ is not a feasible solution to $J$.

\medskip

Inequality \ref{constraint:type3ablockingpairexample} represents the instantiation of Constraint \ref{constraint:HRC2_10} in the case that $i=1$ and $p=1$. In this case the summation on the LHS of Inequality \ref{constraint:type3ablockingpairexample} is over the variables that represent $r_1$ being matched to a worse partner than $h_1$ in position 1 on its projected preference or unmatched. (Since no instance of Constraint \ref{constraint:HRC2_4} is violated in $J$ these variables equally represent $(r_1 , r_2 )$ being jointly matched to a worse joint partner than $(h_1 , h_2 )$ or being jointly unmatched). Also, $\alpha _{1,1}$ is a variable constrained to take a value of 1 in the case that $h_1$ prefers less than $c_1$ residents to $r_1$. Similarly, $\alpha _{2,1}$ is a variable constrained to take a value of 1 in the case that $h_2$ prefers less than $c_2$ residents to $r_1$.

\begin{equation} \label{constraint:type3ablockingpairexample} \displaystyle ( x_{1,2} +  x_{1,3} +  x_{1,4} ) + \alpha _{1,1} + \alpha _{2,1} \leq 2 \end{equation}

The acceptable pair $( (r_1 , r_2 ), (h_1 , h_2 ))$ is a Type 3a blocking pair of $M_u$ in $I$. %Correspondingly one of the instantiations of Constraint \ref{constraint:HRC2_10} is not satisfied by $\mathbf{x} ^u$. Equation \ref{constraint:type3ablockingpairexample} is such an instantiation in the case that $i=1$ and $p=1$. 
In this case the summation on the LHS of Inequality \ref{constraint:type3ablockingpairexample} equals 1. Also, since $r_1$ is in first position on $h_1$'s preference list and thus $h_1$ prefers no other assignees to $r_1$, $\alpha _{1,1} \geq ( 1 -  ( 0 / 2))$ and hence $\alpha _{1,1} = 1$. Similarly, $\alpha _{2,1} \geq ( 1 -  ( 0 / 2))$ since $r_2$ is in first position on $h_2$'s preference list and hence $\alpha _{2,1} = 1$. Thus Inequality \ref{constraint:type3ablockingpairexample} is not satisfied in $\mathbf{x} ^u$ and $\mathbf{x} ^u$ is not a feasible solution to $J$.

\begin{figure}[h]
\[
\begin{array}{rll}
x_1 : & \langle ~ 1 ~~ 0 ~~ 0 ~~ 0 ~  \rangle & \\ 
x_2 : & \langle ~ 1 ~~ 0 ~~ 0 ~~ 0 ~  \rangle & \\ 
x_3 : & \langle ~ 1 ~~ 0 ~~ 0 ~  \rangle & \\ 
x_4 : & \langle ~ 0 ~~ 1 ~~ 0 ~  \rangle & \\ 
x_5 : & \langle ~ 0 ~~ 0 ~~ 1 ~  \rangle & \\ 
x_6 : & \langle ~ 0 ~~ 1 ~~ 0 ~  \rangle & \\  
\end{array}
\]
\caption{The assignment of values, $\mathbf{x} ^s$, to the variables in the IP model $J$ corresponding to the stable matching $M_s$ in $I$, the example instance of {\sc hrc}.}
\label{model:example_stable}
\end{figure}

Let $\mathbf{x} ^s$ denote the assignment of values to the variables in the IP model $J$ shown in Figure \ref{model:example_stable}. $\mathbf{x} ^s$ is a feasible solution to the IP model $J$ and as such does correspond with a stable matching in $I$, namely $M_s = \{ (r_1, h_1 ) , (r_2, h_2 ), (r_3 , h_1 ), (r_4 , h_3 ), (r_6 , h_2 ) \}$.

Consider a potential blocking pair of $M_s$. $(r_5, h_2)$ is an acceptable pair in $I$ and $r_5$ is unmatched in $M_s$. Inequality \ref{constraint:notblockingpairexample1} represents the instantiation of Constraint \ref{constraint:HRC2_5} in the case that $i=5$ and $p=1$. The LHS of Inequality \ref{constraint:notblockingpairexample1} is the product of the capacity of $h_2$ and the values of the variables that represent $r_5$ being matched to a worse partner than $h_2$. The RHS of Inequality \ref{constraint:notblockingpairexample1} is the summation of the values of the variables that indicate whether $h_2$ is matched to the partners it prefers to $r_5$.

\begin{equation} \label{constraint:notblockingpairexample1} \displaystyle c_2 ( x_{5,2} +x_{5,3} ) \leq x_{2,1} + x_{6,2}+ x_{1,2} + x_{1,3} + x_{4,1} \end{equation}

In this case the LHS of Inequality \ref{constraint:notblockingpairexample1} equals 2 since $r_5$ is unmatched and the RHS of Inequality \ref{constraint:notblockingpairexample1} also equals 2 since $h_2$ has two assignees that it prefers to $r_5$. Hence Inequality \ref{constraint:notblockingpairexample1} is satisfied in $\mathbf{x} ^s$. A similar consideration of other possible blocking pairs of $M_s$ in $I$ shows that no constraint is violated by $\mathbf{x} ^s$ and thus $\mathbf{x} ^s$ is a feasible solution of $J$.

\subsection{An integer programming formulation for {\sc hrct}}

\label{section:IPModelsHRCT}

The \emph{Hospitals / Residents Problem with Couples and Ties} ({\sc hrct}) is a generalisation of {\sc hrc} in which hospitals (respectively residents) may find some subsets of their acceptable residents (respectively hospitals) equally preferable. Residents (respectively hospitals) that are found equally preferable by a hospital (respectively resident) are \emph{tied} with each other in the preference list of that hospital (respectively resident). 

In this section we show how to extend the IP model for {\sc hrc} as represented in Section \ref{section:IPModelsHRC} to the {\sc hrct} case. In order to do so we require some additional notation.

For an acceptable resident-hospital pair $(r_i, h_j)$, where $r_i$ is a single resident let $rank(r_i, $ $ h_j) = q$ denote the rank that resident $r_i$ assigns hospital $h_j$ where $1\leq q \leq l(r_i)$. Thus, $rank(r_i , h_j )$ is equal to the number of hospitals that $r_i$ prefers to $h_j$ plus one.

For an acceptable pair $( (r_{s}, r_{t}), (h_j , h_k))$ where $c=(r_s, r_t)$ is a couple, let $rank(c,$ $(h_j, h_k)) =q$ denote the rank that the couple $c=(r_{s}, r_{t})$ assigns the hospital pair $(h_j, h_k)$ where $1\leq q \leq l(c)$. Thus, $rank(c, (h_j, h_k))$ is equal to the number of hospital pairs that $( r_{s}, r_{t})$ jointly prefers to $(h_j , h_k)$ plus one.

For each single resident $r_i \in R$ and integer $p ~ (1\leq p\leq l(r_i))$ let
$$p^{+ } = \max \{ p^{\prime } : 1\leq p^{\prime } \leq l(r_i) \wedge rank (r_i, pref(r_i, p)) = rank(r_i , pref(r_i , p^{\prime }))  \}  $$

Similarly, in the case of a couple $c$ and integer $p ~ (1\leq p\leq l(c))$ let
$$p^{+ } = \max \{ p^{\prime } : 1\leq p^{\prime } \leq l(c) \wedge rank (c, pref(c, p)) = rank(c , pref(c , p^{\prime }))  \}  $$

Intuitively, for a single resident $r_i$, $p^{+ }$ is the largest position on $r_i$'s preference list of a hospital appearing in the same tie on $r_i$'s list as the hospital in position $p$ on $r_i$'s preference list. Also, for a couple $(r_i , r_j)$, $p^{+ }$ is the largest position on $(r_i , r_j)$'s joint preference list of a hospital pair appearing in the same tie on $(r_i , r_j)$'s preference list the as hospital pair in position $p$ on $(r_i ,r_j)$'s joint preference list.

To correctly construct an IP of model HRCT we must make the following alterations to the mechanism described in Section \ref{section:IPModelsHRC} for obtaining an IP model from an HRC instance. All constraints are as before unless otherwise noted. Since, a hospital $h_j$ may rank some members of $M(h_j)$ equally with $r_i$ in HRCT, the summations involving $q$ in Constraints \ref{constraint:HRC2_5} - \ref{constraint:HRC2_9} and \ref{constraint:HRC2_11} and the Inequalities \ref{definition:alpha} and \ref{definition:beta} must now range from $1$ to $q$.

Also, since a resident $r_i$ may rank $M(r_i)$ equally with $h_j$, the summations involving $p$ in Constraints \ref{constraint:HRC2_5}, \ref{constraint:HRC2_10},  \ref{constraint:HRC2_11} and \ref{constraint:HRC2_12} must now range from $p^{+}+1$ to $l(r_i)+1$. Further, we must extend the definition of $p_1$ and $p_2$ in Constraints \ref{constraint:HRC2_6} - \ref{constraint:HRC2_9} such that $1\leq p_1 \leq p^+_1 < p_2 \leq l(r_{s})$ where $r_s$ is the resident involved in each case.

To give an example of a modified constraint, a full description of Stability 1 within the {\sc hrct} context is:

\textbf{Stability 1} - In a stable matching $M$ in $I$, if a single resident $r_i\in R$ has a partner worse than some hospital $h_j\in H$ where $pref(r_i, p)=h_j$ and $rank(h_j, r_i) =q$ then $h_j$ must be fully subscribed with partners at least as good as $r_i$. Therefore, either $$\sum\limits_{p^{\prime}=p^{+ }+1}^{l(r_i)+1} x_{i,p^{\prime}}= 0$$ or $h_j$ is fully subscribed with partners at least as good as $r_i$, i.e. $$\sum\limits_{q^{\prime}=1}^{q} \{x_{i^{\prime },p^{\prime \prime}} \in X :   ( r_{i^{\prime }, p^{\prime \prime}}) \in R(h_{j}, q^{\prime }) \} = c_j$$

Thus, for each $i ~ (2c+1\leq i\leq n_1)$ and $p ~ (1\leq p\leq l(r_i))$ we obtain the following constraint where $pref(r_i, p) = h_j$ and $rank(h_j, r_i)=q$:

\begin{equation} \label{constraint:HRCT_1} \displaystyle c_j \sum\limits_{p^{\prime}=p^{+ }+1}^{l(r_i)+1} x_{i,p^{\prime}} \leq \sum\limits_{q^{\prime}=1}^{q} \{x_{i^{\prime },p^{\prime \prime}} \in X :   ( r_{i^{\prime }, p^{\prime \prime}}) \in R(h_{j}, q^{\prime }) \} \end{equation}

The other constraints in the {\sc hrct} model follow by adapting the remaining stability criteria in an analogous fashion. Using a proof analogous to that of Theorem \ref{IPHRC2}, the following result may be established.

\begin{theorem}\label{IPHRCT}
Given an instance $I$ of {\sc hr}, let $J$ be the corresponding IP model as defined in Section \ref{section:IPModelsHRCT}. A stable matching in $I$ is exactly equivalent to a feasible solution to $J$.
\end{theorem}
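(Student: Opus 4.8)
The plan is to follow the two-directional argument used in the proof of Theorem \ref{IPHRC2}, establishing that (i) every stable matching $M$ in $I$ yields a feasible assignment to the variables of $J$, and (ii) every feasible solution of $J$ yields a stable matching in $I$. The only genuinely new ingredient is the treatment of ties, and this is localised in exactly the places flagged in Section \ref{section:IPModelsHRCT}: the resident-side summations start at $p^+ + 1$ rather than $p+1$, and the hospital-side summations run up to $q$ rather than $q-1$. Since Constraints \ref{constraint:HRC2_1}--\ref{constraint:HRC2_4} involve neither ranks nor ties, their verification (that $\mathbf{x}$ encodes a matching respecting capacities and the couple-consistency condition) is identical to the {\sc hrc} case and may be quoted verbatim.

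For the forward direction I would define $\mathbf{x}$ exactly as in Theorem \ref{IPHRC2}, and set $\alpha_{j,q}$ (respectively $\beta_{j,q}$) to $1$ precisely when $\sum_{q'=1}^{q}\{x_{i',p''}\in X : (r_{i'},p'')\in R(h_j,q')\}$ is less than $c_j$ (respectively $c_j-1$), i.e.\ when $h_j$ is not full with assignees at least as good as rank $q$. The heart of the argument is the observation that, under Definition \ref{stability:MM}, blocking still requires \emph{strict} improvement, and that the tie-adjusted bounds encode exactly the negation of the blocking condition: a resident matched to a hospital tied with $h_j$ does not strictly prefer $h_j$, so the relevant left-hand sum $\sum_{p'=p^+ + 1}^{l(r_i)+1} x_{i,p'}$ vanishes, while a hospital holding $c_j$ assignees ranked at least as well as $r_i$ does not strictly prefer $r_i$ to any assignee, so the relevant right-hand sum reaching $c_j$ certifies non-blocking. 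With these two observations in place, Constraint \ref{constraint:HRCT_1} and its analogues for parts 2 and 3 of the stability definition are each verified by repeating the corresponding step of the proof of Theorem \ref{IPHRC2} with $q-1$ replaced by $q$ and $p+1$ replaced by $p^+ + 1$; the case analysis driving the definitions of $\alpha$ and $\beta$ in the shared-hospital subcases 3(b)--3(d) is structurally unchanged.

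For the reverse direction I would, as before, read off a set of pairs $M$ from a feasible solution, use Constraints \ref{constraint:HRC2_1}--\ref{constraint:HRC2_4} to confirm that $M$ is a matching in which couples are assigned consistently, and then argue by contraposition that any blocking pair of $M$ forces a violated constraint. For each blocking-pair type 1, 2(a)--(b) and 3(a)--(d), I would show that the existence of a strict blocking pair drives the corresponding left-hand side to its maximal value while the tie-adjusted right-hand side falls strictly below the required threshold, contradicting feasibility; this is again the {\sc hrc} argument with the summation ranges shifted to account for ties.

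The step I expect to be the main obstacle is verifying that the asymmetric shift --- strict on the ``resident has a strictly worse partner'' side, captured by $p^+$, and non-strict on the ``hospital holds partners at least as good'' side, captured by extending the sum to $q$ --- faithfully encodes the negation of each clause of Definition \ref{stability:MM} when several residents are tied at a hospital's threshold rank. In particular, for the couple subcases 3(c) and 3(d) one must check that the boundary contributions of residents tied with $r_{2i-1}$ or $r_{2i}$ are counted on the correct side of each inequality, since an off-by-one error in the tie handling there would either admit a spurious blocking pair or wrongly forbid a stable matching. Everything else transfers mechanically from the proof of Theorem \ref{IPHRC2}.
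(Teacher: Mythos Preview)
Your proposal is correct and follows exactly the approach the paper itself takes: the paper does not give a standalone proof of Theorem \ref{IPHRCT} but simply states that it ``may be established'' by a proof analogous to that of Theorem \ref{IPHRC2}. Your sketch is in fact more detailed than the paper's treatment, since you explicitly identify where the tie-handling modifications (replacing $p+1$ by $p^{+}+1$ on the resident side and extending the hospital-side sums from $q-1$ to $q$) enter each constraint, and you correctly flag the shared-hospital subcases 3(c)--3(d) as the place where the bookkeeping needs the most care.
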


\section{Empirical Results}

\label{section:IPexperiments}

\subsection{Introduction}

We ran experiments on a Java implementation of the IP models as described in Section \ref{section:IPModels} applied to both randomly-generated and real data.  We present data showing (i) the average time taken to find a maximum cardinality stable matching or report that no stable matching exists, and (ii) the average size of a maximum cardinality stable matching where a stable matching did exist. Instances were generated with a skewed preference list distribution on both sides, taking into account that in practice some residents and hospitals are more popular than others (on both sides, the most popular agent was approximately 3 times as popular as the least popular agent).

All experiments were carried out on a desktop PC with an Intel i5-2400 3.1Ghz processor, with 8Gb of memory running Windows 7. The IP solver used in all cases was CPLEX 12.4 and the model was implemented in Java using CPLEX Concert. 

To test our implementation for correctness we used a brute force algorithm which recursively generated all possible matchings admitted by an {\sc hrc} instance and selected a maximum cardinality stable matching from amongst those matchings or reported that none of the generated matchings was stable. Due to the inefficiency of this algorithm it may only be realistically applied to relatively small instances. When solving several thousand {\sc hrc} instances involving up to 15 residents our implementation agreed with the brute force algorithm when reporting whether the instance admitted a stable solution and further our implementation returned a stable matching of the same size as a maximum cardinality stable matching output by the brute force algorithm.

For all the instances solved in these experiments, the minimum length of preference list for an individual resident is 5 and the maximum length is 10. In the SFAS application, the joint preference list for a couple $(r_i , r_j)$ is derived from the preference lists of the individual residents $r_i$ and $r_j$. The joint preference lists of the couples in the SFAS application are constructed as follows. For a couple $(r_i , r_j)$, let $s$ (respectively $t$) be the length of the individual preference list of $r_i$ (respectively $r_j$). Now, let $a$ and $b ~ (1\leq a\leq s, 1\leq b\leq t)$. The rank pair $(a, b)$ represents the $a^{th}$ hospital on resident $r_i$'s individual preference list and the $b^{th}$ hospital on resident $r_j$'s preference list. Couple $(r_i ,r_j)$ finds acceptable all pairs $(h_p ,h_q)$ where $r_i$ finds $h_p$ acceptable and $r_j$ finds $h_q$ acceptable ($st$ pairs in total). These pairs are ordered as follows.  Let $L=\max \{ s,t \}$.  Corresponding to every such acceptable pair $(h_p ,h_q)$, create an $L$-tuple whose $i^{th}$ entry is the number of residents in the couple who obtain their $i^{th}$ choice (when considering their individual lists) in the pair $(h_p ,h_q)$.  The acceptable pairs on the couple's list are then ordered according to a lexicographically increasing order on the reverse of the corresponding $L$-tuples. The preference lists of the couples in the randomly generated instances in the experiments that follow are constructed in a similar fashion.

\subsection{Experiments with randomly generated instances}

In the experiments which follow we consider the question of how the time taken to find maximum cardinality stable matchings or report that no stable matching exists in an instance of {\sc hrc} alters as we vary the parameters of the instance. Further, we consider how the size of a maximum cardinality stable matching supported by an instance changes as we vary the parameters of the instance.

\subsubsection{Experiment 1}

In this first experiment, we report on data obtained as we increased the number of residents while maintaining a constant ratio of couples, hospitals and posts to residents. For various values of $x ~ (100 \le x \le 1000)$ in increments of $30$, $1000$ randomly generated instances were created containing $x$ residents, $0.1x$ couples and $0.1x$ hospitals with $x$ available posts which were unevenly distributed amongst the hospitals. The mean time taken to find a maximum cardinality stable matching or report that no stable matching existed in each instance is plotted in Figure \ref{image:varyingsizeplots} for all values of $x$. Figure \ref{image:varyingsizeplots} also shows charts displaying the percentage of instances encountered which admit a stable solution and the mean size of the maximum cardinality stable solution for all values of $x$.

The data in Figure {\ref{image:varyingsizeplots} shows that the mean time taken to find a maximum cardinality stable matching or report that no stable matching existed increased as we increased the number of residents in the instance. Figure {\ref{image:varyingsizeplots} also shows that the percentage of {\sc hrc} instances admitting a stable matching did not appear to be correlated with the number of residents involved in the instance. Figure {\ref{image:varyingsizeplots} also shows that as the number of residents in the instances increased the mean size of the maximum cardinality stable matching supported by the instances increased.

\begin{figure}
\[
\includegraphics[width=\textwidth, height=4.55cm]{./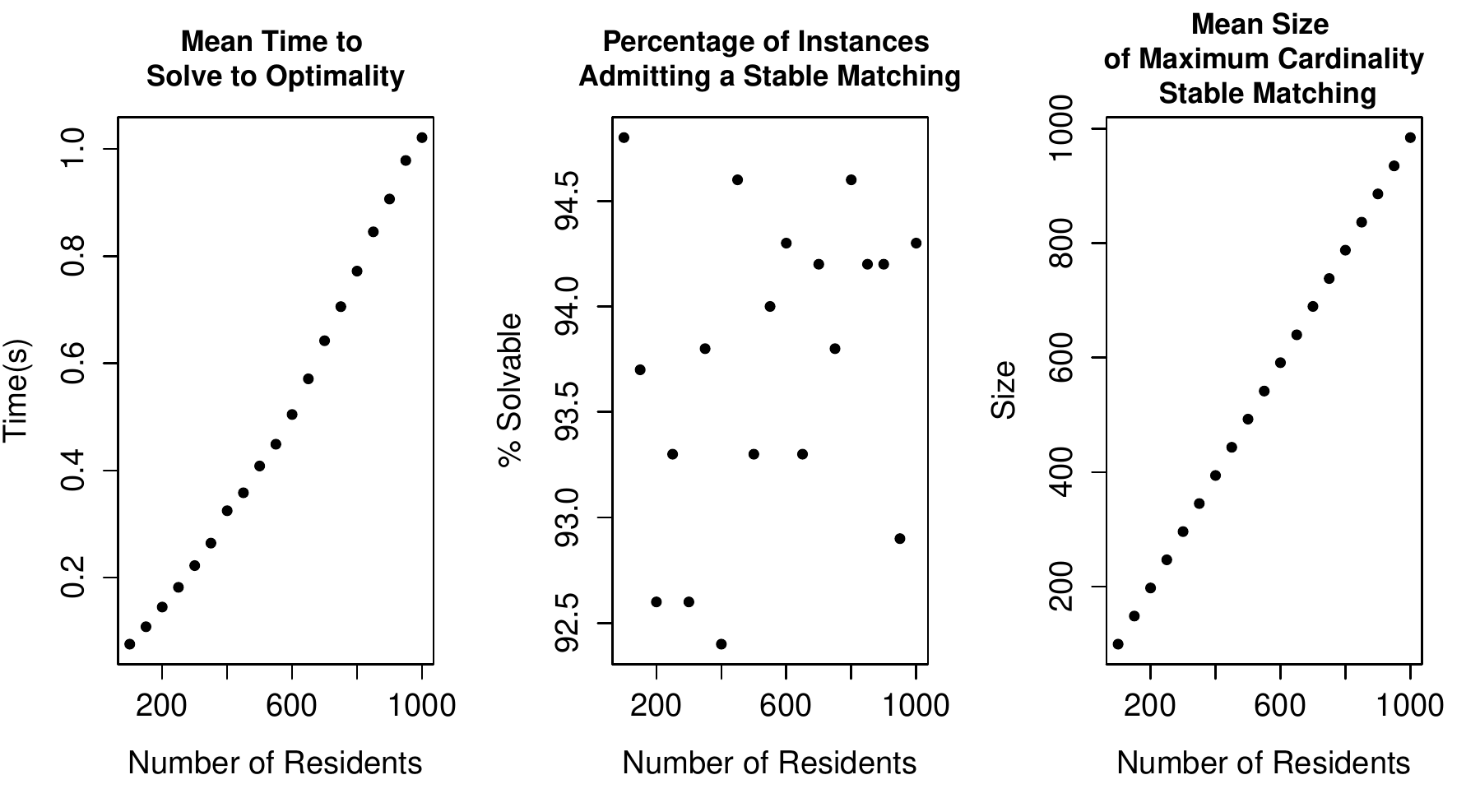}
\]
\caption{Empirical Results in Experiment 1.}
\label{image:varyingsizeplots}
\end{figure}

\subsubsection{Experiment 2}

In our second experiment, we report on results obtained as we increased the the percentage of residents involved in couples while maintaining the same total number of residents, hospitals and posts. For various values of $x ~ (0 \le x \le 250)$ in increments of $25$, $1000$ randomly generated instances were created containing $1000$ residents, $x$ couples (and hence $1000 - 2x$ single residents) and $100$ hospitals with $1000$ available posts which were unevenly distributed amongst the hospitals. The mean time taken to find a maximum cardinality stable matching or report that no stable matching existed in each instance is plotted in Figure \ref{image:varyingcouplesplots} for all values of $x$. In similar fashion to Figure \ref{image:varyingsizeplots}, Figure \ref{image:varyingcouplesplots} also shows charts displaying the percentage of instances encountered which admitted a stable matching and the mean size of the maximum cardinality stable solution for all values of $x$.

The data in Figure \ref{image:varyingcouplesplots} shows that the mean time taken to find a maximum cardinality stable matching tends to increased as we increased the number of residents in the instances involved in couples. Further, Figure {\ref{image:varyingcouplesplots} shows that the percentage of {\sc hrc} instances that admitted a stable matching fell as the percentage of the residents in the instances involved in couples increased. When $50\%$ of the residents in the instance were involved in a couple we found that $832$ of the $1000$ instances admitted a stable matching. Figure {\ref{image:varyingcouplesplots} also shows that as the percentage of the residents in the instances involved in couples increased the mean size of a maximum cardinality stable matching supported by the instances tended to decrease.

We conjecture that, as the number of couples increases while the number of residents remains the same, the cardinality of the set of stable matchings supported decreases. This is suggested by the fact that the number of instances where the set of stable matchings is of cardinality zero increases. Hence this in turn suggests that the range of sizes of the stable matchings, across the set of all stable matchings admitted, contracts, and thus the size of the largest matching in that set decreases

%The data in Figure \ref{image:varyingcouplesplots} shows that the mean time taken to find a maximum cardinality stable matching tends to increased as we increased the number of residents in the instances involved in couples. Further, Figure {\ref{image:varyingcouplesplots} shows that the percentage of {\sc hrc} instances that admitted a stable matching fell as the percentage of the residents in the instances involved in couples increased. When $50\%$ of the residents in the instance were involved in a couple we found that $832$ of the $1000$ instances admitted a stable matching. Figure {\ref{image:varyingcouplesplots} also shows that as the percentage of the residents in the instances involved in couples increased the mean size of a maximum cardinality stable matching supported by the instances tended to decrease.

\begin{figure}
\[
\includegraphics[width=\textwidth, height=4.55cm]{./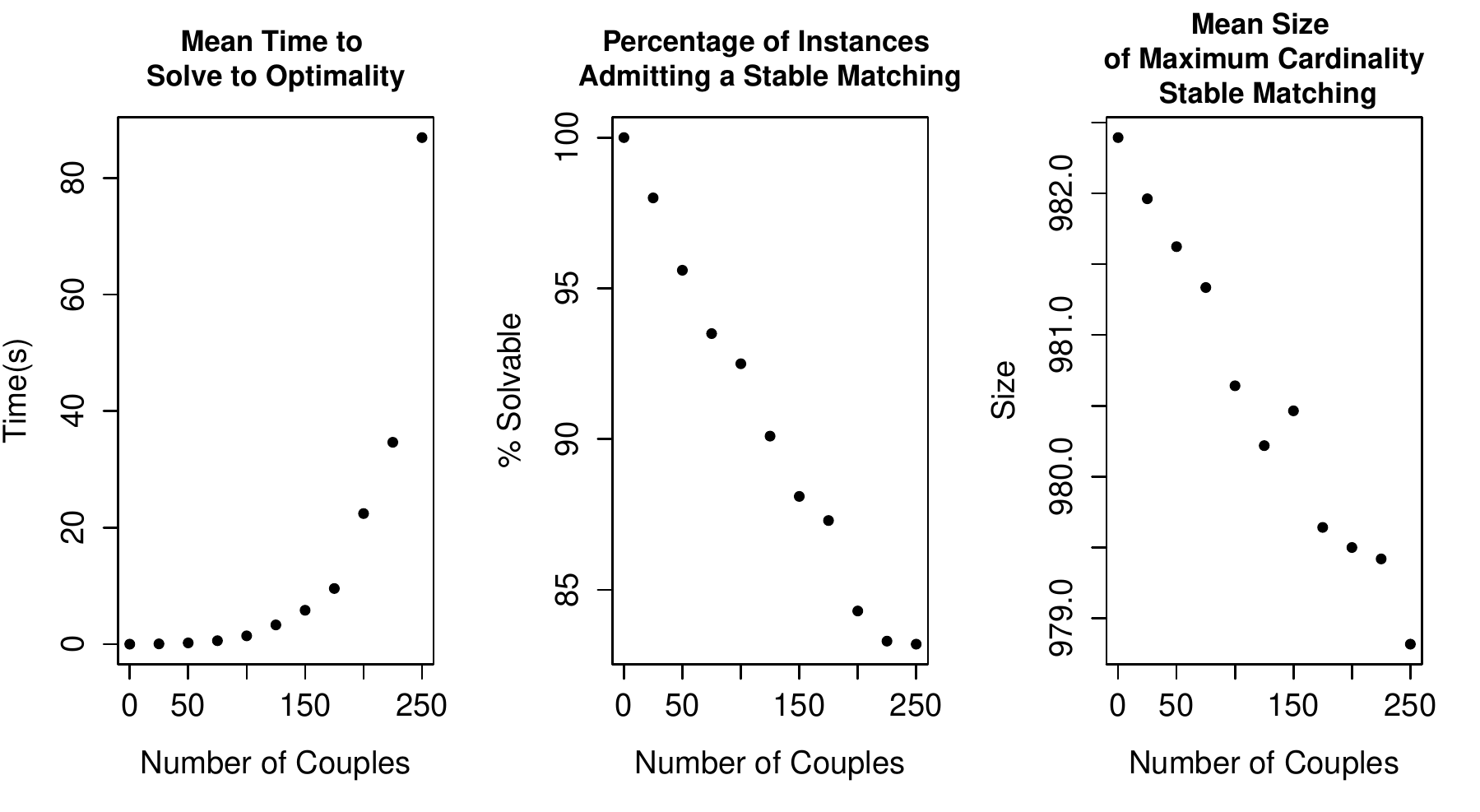}
\]
\caption{Empirical Results in Experiment 2.}
\label{image:varyingcouplesplots}
\end{figure}

\subsubsection{Experiment 3}

In our third experiment, we report on data obtained as we increased the number of hospitals in the instance while maintaining the same total number of residents, couples and posts. For various values of $x ~ (25 \le x \le 500)$ in increments of $25$, $1000$ randomly generated instances of size $1000$ were created consisting of $1000$ residents in total, $x$ hospitals, $100$ couples (and hence $800$ single residents) and $1000$ available posts which were unevenly distributed amongst the hospitals. The time taken to find a maximum cardinality stable matching or report that no stable matching existed in each instance is plotted in Figure \ref{image:varyinghospitalsplots} for all values of $x$. Again, in similar fashion to Figure \ref{image:varyingsizeplots}, Figure \ref{image:varyinghospitalsplots} also shows charts displaying the percentage of instances encountered which admitted a stable matching and the mean size of a maximum cardinality stable solution for all values of $x$.

\begin{figure}
\[
\includegraphics[width=\textwidth, height=4.55cm]{./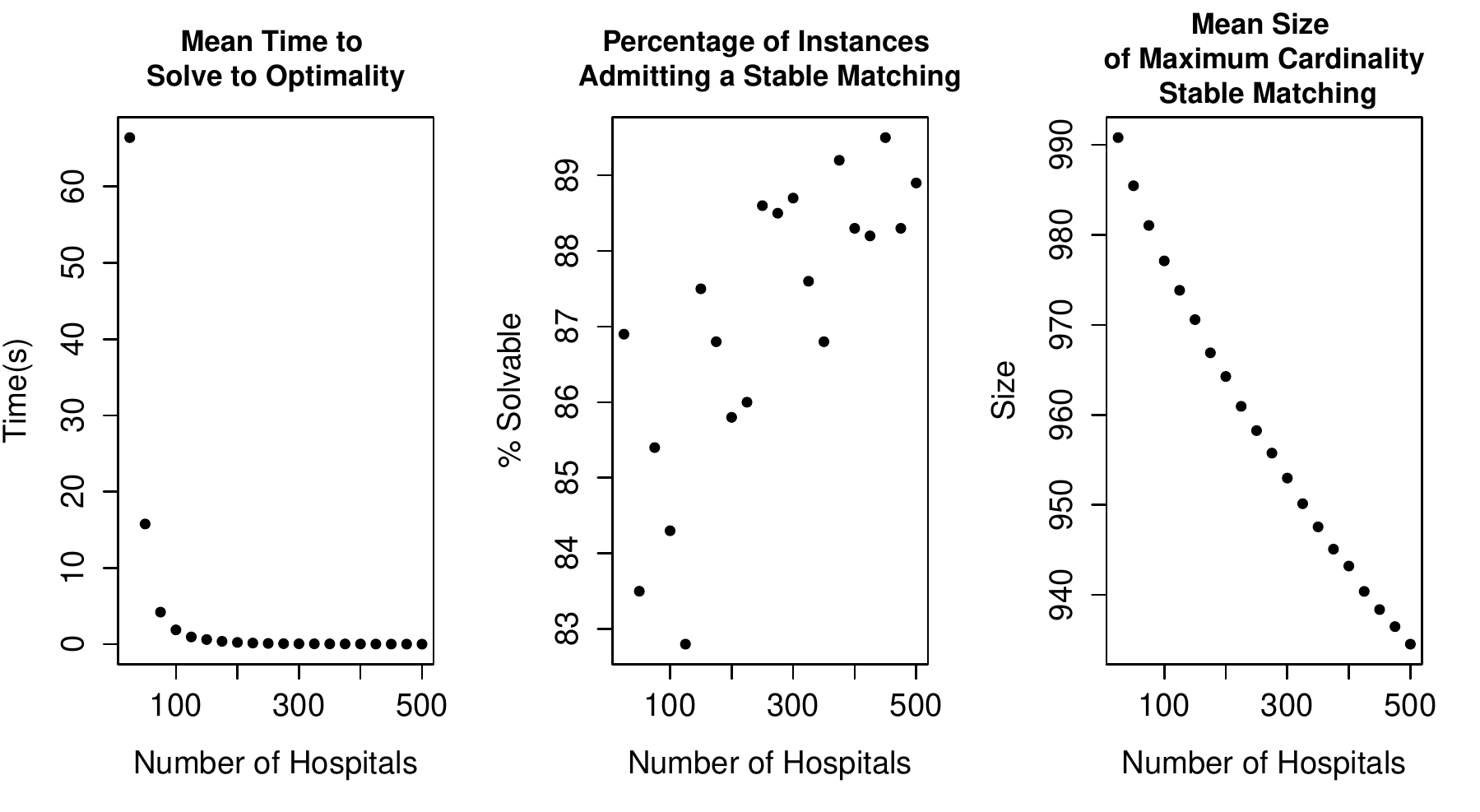}
\]
\caption{Empirical Results in Experiment 3.}
\label{image:varyinghospitalsplots}
\end{figure}

Figure \ref{image:varyinghospitalsplots} shows that the mean time taken to find a maximum cardinality stable matching tended to decrease as we increased the number of hospitals in the instances.  We believe that this is due to the hospitals' preference lists becoming shorter, thereby reducing the model's complexity. The data in Figure {\ref{image:varyinghospitalsplots} also shows that the percentage of {\sc hrc} instances admitting a stable matching appeared to increase with the number of hospitals involved in the instance.  We conjecture that this is because, as each hospital has a smaller number of posts, it is more likely to become full, and therefore less likely to be involved in a blocking pair due to being under-subscribed. Finally, the data shows that as the number of hospitals in the instances increased, the mean size of a maximum cardinality stable matching supported by the instances tended to decrease.  This can be explained by the fact that, as the number of hospitals increases but the residents' preference list lengths and the total number of posts remain constant, the number of posts per hospital decreases. Hence the total number of posts among all hospitals on a resident's preference list decreases.

%Figure \ref{image:varyinghospitalsplots} shows that the mean time taken to find a maximum cardinality stable matching tended to decrease as we increased the number of hospitals in the instances. The data in Figure {\ref{image:varyinghospitalsplots} also shows that the percentage of {\sc hrc} instances admitting a stable matching did not appear to be correlated with the number of hospitals involved in the instance and further that as the number of hospitals in the instances increased the mean size of a maximum cardinality stable matching supported by the instances tended to decrease.

\subsubsection{Experiment 4}

In our last experiment, we report on data obtained as we increased the length of the individual preference lists for the residents in the instance while maintaining the same total number of residents, couples, hospitals and posts. For various values of $x ~ (3 \le x \le 12)$ in increments of $1$, $1000$ randomly generated instances of size $1000$ were created consisting of $1000$ residents in total, $100$ hospitals, $100$ couples (and hence $800$ single residents) and $1000$ available posts which were unevenly distributed amongst the hospitals. The time taken to find a maximum cardinality stable matching or report that no stable matching existed in each instance is plotted in Figure \ref{image:varyingpreflistlengthplots} for all values of $x$. Again, in similar fashion to Figure \ref{image:varyingsizeplots}, Figure \ref{image:varyingpreflistlengthplots} also shows charts displaying the percentage of instances encountered admitting a stable matching and the mean size of a maximum cardinality stable solution for all values of $x$.

\begin{figure}
\[
\includegraphics[width=\textwidth, height=4.55cm]{./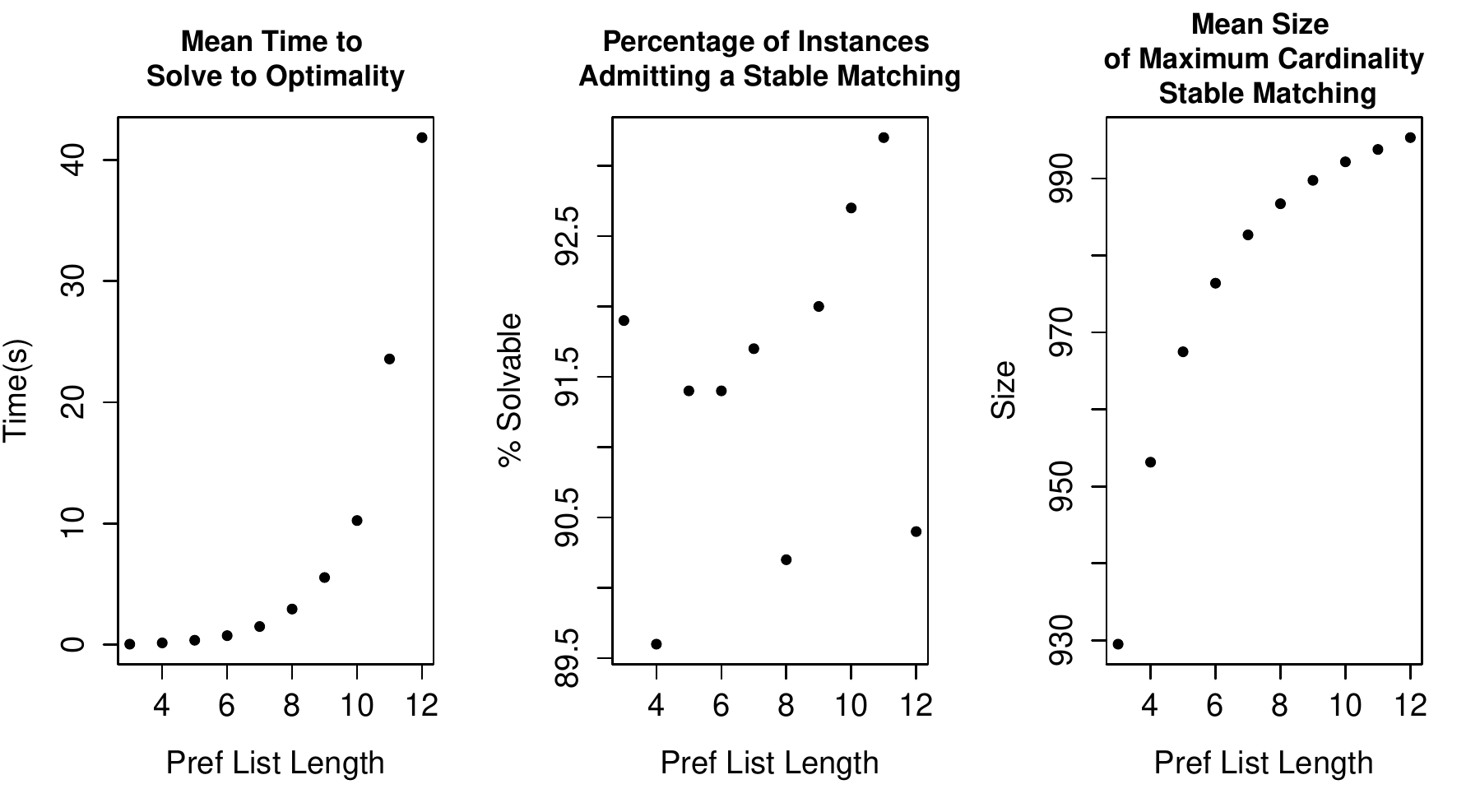}
\]
\caption{Empirical Results in Experiment 4.}
\label{image:varyingpreflistlengthplots}
\end{figure}

Figure \ref{image:varyingpreflistlengthplots} shows that the mean time taken to find a maximum cardinality stable matching increased as we increased the length of the individual residents' preference lists in the instances. The data in Figure {\ref{image:varyingpreflistlengthplots} also shows that the percentage of {\sc hrc} instances admitting a stable matching did not appear to be correlated with the length of the individual residents' preference lists in the instances and further that as the length of the individual residents' preference lists in the instances increased, the mean size of a maximum cardinality stable matching supported by the instances also tended to increase.  The first and third of these phenomena would seem to be explained by the fact that the underlying graph is simply becoming more dense.

%Figure \ref{image:varyingpreflistlengthplots} shows that the mean time taken to find a maximum cardinality stable matching increased as we increased the length of the individual residents' preference lists in the instances. The data in Figure {\ref{image:varyingpreflistlengthplots} also shows that the percentage of {\sc hrc} instances admitting a stable matching did not appear to be correlated with the length of the individual residents' preference lists in the instances and further that as the length of the individual residents' preference lists in the instances increased, the mean size of a maximum cardinality stable matching supported by the instances also tended to increase.

\subsection{Performance of the model with real world data}

In the context of the SFAS matching scheme, hospitals' preferences were derived from the  residents' \emph{scores}, where a junior doctor's score is derived from their previous academic performance. If two residents received the same score, they would be tied in a hospital's preference list. Thus, the underlying SFAS matching problem may be correctly modelled by {\sc hrct}. 

Hence, we further extended our implementation in the fashion described in Section \ref{section:IPModelsHRCT} to find maximum cardinality stable matchings in instances of {\sc hrct} and were able to find optimal solutions admitted by the real data obtained from the SFAS application. The worst case time and space complexity and the number of constraints in the {\sc hrc} model is the same as the {\sc hr} model.  The maximum cardinality stable matchings obtained in the SFAS application for the three years to 2012 are shown in Table \ref{table:SFASresults} alongside the time taken to find these matchings.

\begin{table}[ht]
\centering

\begin{tabulary}{\textwidth } { L  C | C | C | C | C | C } 
%
%\begin{tabular}{ p{0.1\textwidth} | p{0.125\textwidth} | p{0.125\textwidth} | p{0.125\textwidth} | p{0.125\textwidth} | p{0.2\textwidth} | p{0.125\textwidth} } 
% \hline
% \hline
 & Number of Residents & Number of Couples & Number of Hospitals & Number of Posts &Max Cardinality Stable Matching& Time to Solution \\
   
\hline 
2012 & 710 & 17 & 52 & 720 & 681 & 9.62s \\ %[0.5ex] 
2011 & 736 & 12 & 52 & 736 & 688 & 10.41s \\ %[0.5ex] 
2010 & 734 & 20 & 52 & 735 & 681 & 33.92s \\ %[0.5ex] 
%\hline 
\end{tabulary}
\caption{Results obtained from the previous 3 years SFAS data.} 
\label{table:SFASresults} 
\end{table}

\section{MM-stability and BIS-stability}
\label{section:stabilityComparison}

In Section \ref{section:MMnotequalBIS} we demonstrate that BIS-stability and MM-stability are not equivalent by means of a pair of example instances, the first of which admits an MM-stable matching, but no BIS-stable matching and the second of which admits a BIS-stable matching, but no MM-stable matching. Then, in Section \ref{section:cloning} we present a cloning methodology for {\sc hrc} that can be used to construct an instance of one-to-one {\sc hrc} from an instance of many-to-one {\sc hrc} such that the MM-stable matchings in the many-to-one instance are in correspondence to the MM-stable matchings in the one-to-one instance. We prove further that this cloning method is not applicable under BIS-stability.

\subsection{Distinction between MM-stability and BIS-stability}
\label{section:MMnotequalBIS}

MM-stability and the BIS-stability are not equivalent. We demonstrate this by means of the two instances shown in Figure \ref{instance:BISbutnoMM} and Figure \ref{instance:MMbutnostableBIS}. Consider the instance of HRC shown in Figure \ref{instance:BISbutnoMM} where $h$ has capacity 2. The matching $M=\{(r_3,h)\}$ is BIS-stable, but the instance admits no MM-stable matching.

\begin{figure}[h]
\begin{center}
\begin{tabular}[t]{lcrcccl}

\multicolumn{7}{c}{Residents} \\
\hline
\\

 & $(r_1, r_2)$		&$ ~~~	 ~~~ $:&	$(h, h)$ & 	& 	&				 	\\

& $r_{3}$		&	$ ~~~	 ~~~ $:&		$h$ & 	& 		&		 	\\

\\

\multicolumn{7}{c}{Hospitals} \\
\hline
\\

& $h$ 	&	:$ ~~~	2 ~~~ $: &		$r_1$		&		$r_3~~~$	&	 $r_2$ &  \\

\end{tabular}
\end{center}
\caption{An instance of HRC which admits a BIS-stable matching but admits no MM-stable matching.}
\label{instance:BISbutnoMM}
\end{figure}

\begin{figure}[h]
\begin{center}
\begin{tabular}[t]{lcrccccl}

\multicolumn{8}{c}{Residents} \\
\hline
\\

 & $(r_1, r_2)$		&	:		&	$(h_1, h_1)$ & 	& 					& &	\\
 
 & $(r_3, r_4)$		&	:		&	$(h_1, h_1)$ & 	 $(h_1, h_2)$		&			& &	\\

\\

\multicolumn{8}{c}{Hospitals} \\
\hline
\\

& $h_1$ 	&:$ ~~~	2 ~~~ $:& $r_3$	&	$r_1~~~~~$		&	$r_2~~~~~~$		& $r_4~~~~~$ &    \\

& $h_2$ 	&:$ ~~~	1 ~~~ $:& $r_4$				&			&	  &  & \\

\end{tabular}
\end{center}
\caption{An instance of HRC which admits an MM-stable matching that but admits no BIS-stable matching.}
\label{instance:MMbutnostableBIS}
\end{figure}

Consider the instance of {\sc hrc} shown in Figure \ref{instance:MMbutnostableBIS} due to Irving \cite{RI11}, where $h_1$ has capacity 2 and $h_2$ has capacity 1. The instance admits three distinct matchings, namely $M_1=\{(r_1,h_1),$ $ (r_2,h_1)\}$, $M_2=\{(r_3,h_1),$ $ (r_4,h_1)\}$ and $M_3=\{(r_3,h_1), $ $ (r_4,h_2)\}$. $M_2$ is MM-stable. However, $M_1$ is BIS-blocked by $(r_3, r_4)$ with $(h_1, h_2)$, $M_2$ is BIS-blocked by $(r_1, r_2)$ with $(h_1, h_1)$, and $M_3$ is BIS-blocked by $(r_3, r_4)$ with $(h_1, h_1)$.

\subsection{A hospital cloning method for {\sc hrc} under MM-stability}

\label{section:cloning}

\def\myline{
 
    %\vspace{-0.9em}
    \line(0,1){3.5}
    \hspace{-0.71em}
     
     \line(1,0){50}
 			\line(0,1){3.5}

}

\def\shortline{
 
    %\vspace{-0.9em}
    \line(0,1){3.5}
    \hspace{-0.71em}
     
     \line(1,0){30}
 			\line(0,1){3.5}

}

\def\Xline{
 
    %\vspace{-0.9em}
    \line(0,1){3.5}
    \hspace{-0.71em}
     \line(1,0){17.5}
     
     ~X
  
     \line(1,0){17.5}
 			\line(0,1){3.5}
}

\def\X2line{
 
    %\vspace{-0.9em}
    \line(0,1){3.5}
    \hspace{-0.71em}
     \line(1,0){9.5}
     
     \hspace{-0.3em}
     X
     
    \hspace{-0.55em}
     \line(1,0){12.5}

     \hspace{-0.30em}
     X
  
  	\hspace{-0.51em}
     \line(1,0){8.5}
 			\line(0,1){3.5}
}

For an arbitrary instance $I$ of {\sc hr} in which the hospitals may have capacity greater than one, Gusfield and Irving \cite{GI89} describe a method of constructing a corresponding instance, $I^{\prime }$ of {\sc hr} in which all of the hospitals have capacity one, such that a stable matching in $I$ corresponds to a stable matching in $I^{\prime }$ and vice versa. In this section we describe a method for producing an instance $I^{\prime }$ of {\sc hrc}, in which all of the hospitals have capacity one, from an arbitrary instance $I$ of {\sc hrc}, in which the hospitals may have capacity greater than one, such that an MM-stable matching in $I$ corresponds to an MM-stable matching in $I^{\prime }$ and vice versa. We show that this correspondence breaks down in the case of BIS-stability.

Let $I$ be an instance of {\sc hrc} with residents $R = \{r_1, r_2,\dots ,r_{n_1}\}$ and hospitals $H= \{h_1, h_2,\dots ,h_{n_2}\}$. Without loss of generality, suppose residents $r_1, r_2,\ldots ,r_{2c}$ are in couples. Again, without loss of generality, suppose that the couples are $(r_{2i-1}, r_{2i})$  $(1\leq i\leq c)$. Let the single residents be $r_{2c+1}, r_{2c+2}\ldots r_{n_1}$.

Suppose each single resident $r_i\in R$ has a preference list of length $l(r_i)$ consisting of individual hospitals $h_j\in H$. Suppose also that the joint preference list of a couple $c_i = (r_{2i-1}, r_{2i})$ is a list, of length $l(c_i)$, of hospital pairs. Assume each hospital $h_j\in H$ $(1\leq j \leq n_2)$  has a preference list of individual residents $r_i\in R$ of length $l(h_j)$. Let $c_j$ denote the capacity of hospital $h_j\in H$ $(1\leq j \leq n_2)$, the number of available posts it has to match with residents. 

To construct an equivalent instance $I^{\prime }$ of one-to-one {\sc hrc} from $I$ we create $c_j$ clones of each hospital $h_j\in H$, namely $h_{j,1}, h_{j,2}\ldots h_{j,{c_j}}$, each of unitary capacity and each representing one of the individual posts in $h_j$. In the preference list of a single resident $r_i$ we replace each incidence of $h_j$ with the following sequence of hospitals $h_{j,1}, h_{j,2}\ldots h_{j, {c_j}}$.

For each couple $c_i$,  we replace each $(h_{j_1}, h_{j_2})$ (where $j_1 \neq j_2$) in $(r_{2i-1}, r_{2i})$'s joint preference list with the following sequence of hospital pairs:

\noindent \small{
$$L_1 = (h_{j_1,1}, h_{j_2,1}), (h_{j_1, 2}, h_{j_2,1})\ldots (h_{j_1,c_{j_1}}, h_{j_2,1}), $$ $$(h_{j_1,1}, h_{j_2,2}), (h_{j_1,2}, h_{j_2,2})\ldots  (h_{j_1,c_{j_1}}, h_{j_2,2})\ldots (h_{j_1,c_{j_1}}, h_{j_2,c_{j_2}})
$$}
\normalsize
which contains all of the possible pairings of the individual clones of $h_{j_1}$ and $h_{j_2}$. Further in $I^{\prime }$ we replace each $(h_{j}, h_{j})$ in $(r_{2i-1}, r_{2i})$'s joint preference list with the following sequence of hospital pairs: 

\noindent \small{
$$L_2 = (h_{j, 2}, h_{j,1}), (h_{j, 3}, h_{j,1})\ldots (h_{j,c_j}, h_{j,1}), (h_{j,1}, h_{j,2}), (h_{j,3}, h_{j,2})\ldots (h_{j,c_j}, h_{j,2})\ldots (h_{j,c_j -1}, h_{j,c_j})
$$}
\normalsize
where $ \{ (h_{j,x} , h_{j,y} ) : x = y \} \cap L_2 = \emptyset$. Thus $L_2$ contains all possible pairings of distinct individual clones of $h_{j}$. We now show that MM-stable matchings are preserved under this correspondence.

%Since no preferences are expressed by a resident couple in $I^{\prime }$ for a hospital pair $(h_{j,k}, h_{j,k})$, only blocking assignments of the type described in 1, 2 and 3(a) are possible in $I^{\prime }$ under the MM stability definition.

\begin{lemma}
\label{lemma:cloningunderMMworks}

%$I^{\prime }$ admits a stable matching under the GI definition of stability if and only if $I$ admits a stable matching under the MM definition of stability.

$I$ admits an MM-stable matching if and only if $I^{\prime }$ does.

\end{lemma}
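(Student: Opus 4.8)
The plan is to exhibit an explicit correspondence between MM-stable matchings of $I$ and those of $I^{\prime}$, built on a single ``best-first packing'' rule for assigning residents to clones, and then to verify in both directions that the correspondence preserves the absence of blocking pairs. First I would fix the packing rule. Given any matching $N$ in $I$ in which hospital $h_j$ is assigned residents $r_{(1)} \succ_{h_j} r_{(2)} \succ_{h_j} \cdots \succ_{h_j} r_{(k)}$ (listed in $h_j$'s preference order, $k \le c_j$), I assign $r_{(t)}$ to the clone $h_{j,t}$ for each $t$, leaving clones $h_{j,k+1}, \ldots, h_{j,c_j}$ empty; unassigned residents stay unassigned. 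Because the clones $h_{j,1}, \ldots, h_{j,c_j}$ replace $h_j$ in residents' and couples' lists in increasing index order, this rule has two properties I will use repeatedly: (i) a resident strictly prefers every clone of $h_j$ to every clone of any hospital that $h_j$ precedes on the original list; and (ii) whenever a clone $h_{j,t}$ is occupied, every lower-indexed clone $h_{j,t'}$ with $t' < t$ is occupied by a resident that $h_j$ (and hence $h_{j,t'}$) prefers to the occupant of $h_{j,t}$.

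For the forward direction I start from an MM-stable $M$ in $I$ and let $M^{\prime}$ be the matching in $I^{\prime}$ obtained by the packing rule. I first check $M^{\prime}$ is a legal matching: each couple occupies a clone pair that, by construction of $L_1$ (all clone combinations) and $L_2$ (all distinct clone pairs), indeed appears on its cloned list, and the original pair sits at the same block position as before. To show $M^{\prime}$ is MM-stable I take a putative blocking entity and trace the clone (pair) it wants back to the original list. If that clone pair comes from an original entry strictly preferred to the current one, property (ii) converts clone-willingness (a wanted clone is empty or holds a worse occupant) into the corresponding willingness of the original hospital, so the blocking entity would already block $M$ in $I$, a contradiction; here I map the clone configuration onto the precise sub-case of Definition \ref{stability:MM}, using two empty clones, one empty clone, or two worse occupants to hit cases 3(b), 3(c), 3(d) respectively, with single-resident and type-2 blocks handled analogously. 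If instead the wanted clone pair comes from the \emph{same} original entry (an ``internal'' rearrangement within one $L_1$ or $L_2$ block), I use the enumeration order together with property (ii) to show at least one of the two required clones is a strictly-better-occupied clone and hence unwilling, so no such block exists.

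For the converse I start from an MM-stable $M^{\prime}$ in $I^{\prime}$ and define $M$ by projecting each clone $h_{j,t}$ back to $h_j$. This is a matching in $I$ because $h_j$ has only $c_j$ unit-capacity clones, and couple consistency is immediate since every cloned pair projects to the original pair it replaced. To show $M$ is MM-stable I lift an alleged blocking pair of $M$ to one of $M^{\prime}$: a willing original hospital $h_j$ (under-subscribed, or preferring the relevant resident to an assignee) must possess, in $M^{\prime}$, either an empty clone or a clone holding a worse resident, and property (i) guarantees the resident (or couple) prefers that clone (pair) to its $M^{\prime}$-assignment; choosing such clones yields a blocking pair of $M^{\prime}$, contradicting its stability. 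For couples I again split on the sub-cases of Definition \ref{stability:MM}, selecting two distinct willing clones of a common hospital in cases 3(b)--3(d).

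The main obstacle I anticipate is the ``internal'' case of the forward direction: showing that the specific orderings chosen for $L_1$ and $L_2$ (first component cycling fastest, the diagonal excluded from $L_2$) interact correctly with best-first packing so that no couple can profitably shuffle between clones of the same hospital(s) without being refused. Getting this right amounts to a careful bookkeeping argument that every clone pair preceding the couple's current one within an $L$ block forces some member onto a strictly lower-indexed, already-better-occupied clone; this has to be verified against all four type-3 sub-cases and the type-2 case, and is exactly the place where the correspondence is delicate (and, as noted in the excerpt, where it ultimately breaks under BIS-stability).
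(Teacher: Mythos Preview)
Your approach is essentially the same as the paper's. Both directions use the identical best-first packing rule (assign $h_j$'s $t$-th most preferred assignee to clone $h_{j,t}$) for $M \mapsto M'$ and the obvious projection for $M' \mapsto M$, and both then run a case analysis over the blocking-pair types of Definition~\ref{stability:MM}. The paper organises the forward direction by the blocking-pair type arising in $I'$ (only types 1, 2 and 3(a) can occur since every clone has unit capacity) and, inside each type, sub-splits on whether the relevant resident is currently at some clone of the target hospital; your organisation --- splitting on whether the desired clone pair lies in a strictly earlier $L$-block or in the same $L$-block as the couple's current assignment --- is an equivalent way to carve up the same cases, and your property~(ii) is exactly the monotonicity the paper invokes when it argues ``$k<l$, so $h_{j,k}$ already holds someone $h_j$ prefers.'' Your explicit identification of the ``internal'' rearrangement within an $L_1$ or $L_2$ block as the delicate step is apt; the paper handles it via the $k<l$ sub-cases of its Cases 1, 2 and 3(a), though its forward Case 3(a) is stated only for $j_1\neq j_2$ and the $j_1=j_2$ sub-case (two distinct clones of one hospital, both members moving) is left implicit --- your plan covers it.
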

\begin{proof}

Let $M$ be an MM-stable matching in $I$. We construct an MM-stable matching $M^{\prime }$ in $I^{\prime }$ as follows.

Take any hospital $h_j \in H$ and list its assignees as $r_{i,1}, r_{i,2}\ldots r_{i,{t_j}}$ where $t_j\leq c_j$. Assume without loss of generality that $rank (h_j, r_{i,1}) < rank (h_j, r_{i,2})\ldots < rank (h_j, r_{i,{t_j}})$. For each $k ~ (1\leq k\leq t_j)$ add $(r_{i,k}, h_{j,k})$ to $M^{\prime }$.

All single residents $r_{i}$ who are assigned to a hospital $h_j$ in $M$ are assigned to an acceptable hospital clone $h_{j,k}$ in $M^{\prime }$, for some $k ~ (1\leq k\leq c_j)$. All couples $(r_{2i-1}, r_{2i})$ jointly assigned to some $(h_{j_1}, h_{j_2})$ in $M$ are jointly assigned in $M^{\prime }$ to $(h_{j_1, k_1}, h_{j_2, k_2})$ for some $k_1, k_2 ~ (1\leq k_1 \leq c_{j_1}, 1\leq k_2\leq c_{j_2})$. Since $(h_{j_1}, h_{j_2})$ is an acceptable pair of hospitals for $(r_{2i-1}, r_{2i})$ in $M$, $(h_{j_1, k_1}, h_{j_2, k_2})$ must be an acceptable pair of hospital clones for $(r_{2i-1}, r_{2i})$ in $M^{\prime }$ (note that if $j_1 = j_2$ then $k_1\neq k_2$). Therefore $M^{\prime }$ is a matching in $I^{\prime }$.

We now require to prove that $M^{\prime }$ is MM-stable in $I^{\prime }$. Suppose not. Then there is some MM-blocking pair for $M^{\prime }$ in $I^{\prime }$. Since no preferences are expressed by a couple in $I^{\prime }$ for a hospital pair $(h_{j,k}, h_{j,k})$ consisting of two identical clones, only blocking pairs of Types 1, 2 and 3(a) shown in Definition \ref{stability:MM} are possible in $I^{\prime }$.

\noindent\emph{Case (1)}: A single resident $r_i$ and hospital clone $h_{j,k}$ MM-block $M^{\prime }$ in $I^{\prime }$. 
%
%\begin{center}
%\begin{tabular}[t]{lcrcccl}
%
%
%$r_{i}$ 	&	:	&	\myline	&	$h_{j,k}$		&		\myline	&	$M^{\prime }(r_i)$ & \myline \\
%
%\\
%
%$h_{j,k}$		&	:	&	\myline	&	$r_{i}$ &	 	\myline	&	 $M^{\prime }(h_{j,k})$		& \myline	\\
%
%\\
%
%\end{tabular}
%\end{center}

Hence, in $I^{\prime }$ resident $r_i$ is unassigned or prefers $h_{j,k}$ to $M^{\prime }(r_i)$ and also $h_{j,k}$ is under-subscribed or prefers $r_i$ to $M^{\prime }(h_{j,k})$. It is either the case that $M^{\prime }(r_i) = h_{j,l}$ for some $l ~ (1\leq l\leq c_j)$ or $M^{\prime }(r_i) \neq h_{j,l}$ for all $l ~ (1\leq l\leq c_j)$ in $I^{\prime }$. 

\emph{(i)} If $M^{\prime }(r_i) \neq h_{j,l}$ for all $l ~ (1\leq l\leq c_j)$, then, by construction, this would imply that in $M$, $r_i$ is unassigned or prefers $h_{j}$ to $M(r_i)$ and $h_{j}$ is also either under-subscribed or prefers $r_i$ to some member of $M(h_{j})$, and thus $(r_i, h_j)$ forms an MM-blocking pair of $M$ in $I$, a contradiction. 

\emph{(ii)} If $M^{\prime }(r_i) = h_{j,l}$ for some $l ~ (1\leq l\leq c_j)$, then $r_i$ must prefer $h_{j,k}$ to $h_{j,l}$ and therefore $k<l$. So $h_{j,k}$ is assigned in $M^{\prime }$ to some $r_p$ such that $h_{j,k}$ prefers $r_p$ to $r_i$. Hence $(r_i, h_{j,k})$ cannot MM-block $M^{\prime }$ in $I^{\prime }$. 

\noindent\emph{Case (2)}: Resident couple $(r_{2i-1}, r_{2i})$ MM-blocks $M^{\prime }$ with $(h_{j_1, k_1}, h_{j_2, k_2})$ in $I^{\prime }$ for some $k_1, k_2 ~ ( 1\leq k_1 \leq c_{j_1}, 1\leq k_2 \leq c_{j_2}) $ where $M^{\prime }(r_{2i-1}) = h_{j_1, k_1}$ or $M^{\prime }(r_{2i}) = h_{j_2, k_2}$

%
%\begin{center}
%\begin{tabular}[t]{lcrcccl}
%
%
%$(r_{2i-1}, r_{2i})$ 	&	:	&	\myline	&	$(h_{j_1, k_1}, h_{j_2, k_2})$		&		\myline	&	($M^{\prime }(r_{2i-1}), h_{j_2, k_2})$	&	\myline  \\
%
%\\
%
%
%\end{tabular}
%\end{center}
%
%
%\begin{center}
%\begin{tabular}[t]{llcrcccl}
%
%
% & $h_{j_1, k_1}$		&	:	&	\myline	&	$r_{2i-1}$ 	&	 	\myline	&	$M^{\prime }(h_{j_1, k_1})$	& \myline 	\\
%
%
%
%\\
%
%
%\end{tabular}
%\end{center}

In this case either $j_1 \neq j_2$ or $j_1 = j_2$. Consider first the case where $j_1 \neq j_2$ and $(r_{2i-1}, r_{2i})$ prefers $(h_{j_1, k_1}, h_{j_2, k_2})$ to $( h_{j_1, k_1}, M^{\prime }(r_{2i}) )$ and $h_{j_1, k_1}$ is either under-subscribed in $M^{\prime }$ or prefers $r_{2i-1}$ to $M^{\prime }(h_{j_1, k_1})$. It is either the case that $M^{\prime }(r_{2i-1}) = h_{j_1,l}$ for some $l ~ (1\leq l\leq c_{j_1})$ or $M^{\prime }(r_{2i-1}) \neq h_{j_1,l}$ for all $l ~ (1\leq l\leq c_{j_1})$.

\emph{(i)} If $M^{\prime }(r_{2i-1}) \neq h_{j_1,l}$ for all $l ~ (1\leq l\leq c_{j_1})$ then if $h_{j_1, k_1}$ is under-subscribed in $M^{\prime }$, $h_{j_1}$ must be under-subscribed in $M$ and $(r_{2i-1},r_{2i})$ must MM-block $M$ in $I$ with $(h_{j_1}, h_{j_2})$, a contradiction. Also, if $h_{j_1, k_1}$ prefers $r_{2i-1}$ to $M^{\prime }(h_{j_1, k_1})$ then $h_{j_1}$ must prefer $r_{2i-1}$ to some member of $M(h_{j_1})$ in $M$. Hence $(r_{2i-1}, r_{2i})$ must MM-block $M$ with $(h_{j_1}, h_{j_2})$ in $I$ , a contradiction.

\emph{(ii)} If $M^{\prime }(r_{2i-1}) = h_{j_1,l}$ for some $l ~ (1\leq l\leq c_{j_1})$ then $(r_{2i-1}, r_{2i})$ jointly prefers $(h_{j_1, k_1}, h_{j_2, k_2})$ to $(h_{j_1, l}, $ $ h_{j_2, k_2})$ and therefore $k_1 < l$. So $h_{j_1, k_1}$ is matched in $M^{\prime }$ to some $r_p$ such that $h_{j_1, k_1}$ prefers $r_p$ to $r_{2i-1}$ and therefore $(r_{2i-1}, r_{2i})$ cannot MM-block $M^{\prime }$ with $(h_{j_1, k_1}, h_{j_2, k_2})$ in $I^{\prime }$, a contradiction.

Now consider the alternate case where $j_1 = j_2$ and $(r_{2i-1}, r_{2i})$ prefers $(h_{j_1, k_1}, h_{j_1, k_2})$ where $k1 \neq k2$ to $( h_{j_1, k_1}, M^{\prime }(r_{2i}) )$ and $h_{j_1, k_1}$ is either under-subscribed in $M^{\prime }$ or prefers $r_{2i-1}$ to $M^{\prime }(h_{j_1, k_1})$. It is either the case that $M^{\prime }(r_{2i-1}) = h_{j_1,l}$ for some $l ~ (1\leq l\leq c_{j_1})$ or $M^{\prime }(r_{2i-1}) \neq h_{j_1,l}$ for all $l ~ (1\leq l\leq c_{j_1})$. 

\emph{(i)} If $M^{\prime }(r_{2i-1}) \neq h_{j_1,l}$ for all $l ~ (1\leq l\leq c_{j_1})$ then if $h_{j_1, k_1}$ is under-subscribed in $M^{\prime }$, $h_{j_1}$ must be under-subscribed in $M$ and $(r_{2i-1},r_{2i})$ must MM-block $M$ in $I$ with $(h_{j_1}, h_{j_1})$, a contradiction. Further, if $h_{j_1, k_1}$ prefers $r_{2i-1}$ to $M^{\prime }(h_{j_1, k_1})$ then $h_{j_1}$ must prefer $r_{2i-1}$ to some member of $M(h_{j_1})$ other than $r_{2i-1}$ in $M$. Hence $(r_{2i-1}, r_{2i})$ must MM-block $M$ with $(h_{j_1}, h_{j_1})$ in $I$ , a contradiction.

\emph{(ii)} If $M^{\prime }(r_{2i-1}) = h_{j_1,l}$ for some $l ~ (1\leq l\leq c_{j_1})$ then $(r_{2i-1}, r_{2i})$ jointly prefers $(h_{j_1, k_1}, h_{j_1, k_2})$ where $k_1 \neq k_2$ to $(h_{j_1, l}, $ $ h_{j_1, k_2})$ and therefore $k_1 < l$. So $h_{j_1, k_1}$ is assigned in $M^{\prime }$ to some $r_p$ such that $h_{j_1, k_1}$ prefers $r_p$ to $r_{2i-1}$ and therefore $(r_{2i-1}, r_{2i})$ cannot MM-block $M^{\prime }$ with $(h_{j_1, k_1}, h_{j_1, k_2})$ in $I^{\prime }$, a contradiction.

A similar argument may be applied when considering the case that $(r_{2i-1}, r_{2i})$ prefers $(h_{j_1, k_1},$ $ h_{j_2, k_2})$ to $(M^{\prime }(r_{2i-1}),$ $h_{j_2, k_2})$ in $I^{\prime }$.

\noindent\emph{Case (3a)}: Resident couple $(r_{2i-1}, r_{2i})$ MM-blocks $M^{\prime }$ in $I^{\prime }$ in $I^{\prime }$ with $(h_{j_1, k_1}, h_{j_2, k_2})$ (where $j_1 \neq j_2$)  for some $k_1, k_2 ~ ( 1\leq k_1 \leq c_{j_1}, 1\leq k_2 \leq c_{j_2}) $ where $M^{\prime }(r_{2i-1}) \neq h_{j_1 , k_1}$ and $M^{\prime }(r_{2i}) \neq h_{j_2 , k_2}$.
%
%\begin{center}
%\begin{tabular}[t]{lcrcccl}
%
%
%$(r_{2i-1}, r_{2i})$ 	&	:	&	\myline	&	$(h_{j_1, k_1}, h_{j_2, k_2})$		&		\myline	&	$(M^{\prime }(r_{2i-1}), M^{\prime }(r_{2i}))$	&	\myline  \\
%
%\\
%
%\end{tabular}
%\end{center}
%
%
%\begin{center}
%\begin{tabular}[t]{llcrcccl}
%
%
%
%
% & $h_{j_1, k_1}$		&	:	&	\myline	&	$r_{2i-1}$ 	&	 	\myline	&	$M^{\prime }(r_{2i-1})$	&  \myline	\\
%
%
%\textbf{and} & $h_{j_2, k_2}$		&	:	&	\myline	&	$r_{2i}$ 	&	 	\myline	&		$M^{\prime }(r_{2i})$		& \myline	\\
%
%\\
%
%
%\end{tabular}
%\end{center}
Hence, resident couple $(r_{2i-1}, r_{2i})$ is either jointly unassigned in $M^{\prime }$ or jointly assigned in $M^{\prime }$ to a worse hospital pair than $(h_{j_1, k_1}, h_{j_2, k_2})$ (where $j_1 \neq j_2$) for some $k_1, k_2 ~ ( 1\leq k_1 \leq c_{j_1}, 1\leq k_2 \leq c_{j_2}) $ and also each of hospitals $h_{j_1, k_1}$ and $ h_{j_2, k_2}$ is either under-subscribed in $M^{\prime }$ or assigned in $M^{\prime }$ to a worse partner than $r_{2i-1}$ and $r_{2i}$ respectively. 

Assume that $M^{\prime }(r_{2i-1}) \neq h_{j_1,l}$ for all $l ~ (1\leq l \leq c_{j_1})$ and also $M^{\prime }(r_{2i}) \neq h_{j_2,l}$ for all $l  ~ (1\leq l \leq c_{j_2})$. From the construction, this means that in $M$, $(r_{2i-1}, r_{2i})$ is unassigned or jointly prefers $(h_{j_1}, h_{j_2})$ to $M(r_{2i-1}, r_{2i})$ and also each of $h_{j_1}$ and $h_{j_2}$ is either under-subscribed or prefers $r_{2i-1}$ and $r_{2i}$ respectively to some member of $M(h_{j_1})$ and $M(h_{j_2})$ respectively. Hence $(r_{2i-1}, r_{2i})$ MM-blocks $M$ in $I$ with $(h_{j_1}, h_{j_2})$, a contradiction. 

Now, assume that $M^{\prime }(r_{2i-1}) = h_{j_1,l}$ for some $l ~ (1\leq l \leq c_j)$. Then $(r_{2i-1}, r_{2i})$ jointly prefers $(h_{j_1, k_1}, h_{j_2, k_2})$ to $(h_{j_1, l}, M^{\prime }(r_{2i}))$ and therefore $k_1 < l$. So $h_{j_1, k_1}$ is assigned in $M^{\prime }$ to some $r_p$ such that $h_{j_1, k_1}$ prefers $r_p$ to $r_{2i-1}$ and therefore $(r_{2i-1}, r_{2i})$ cannot MM-block $M^{\prime }$ in $I^{\prime }$ with $(h_{j_1, k_1}, h_{j_2, k_2})$. A similar argument may be applied in the case that $M^{\prime }(r_{2i}) = h_{j_2,l}$ for some $l  ~ (1\leq l \leq c_{j_2})$. 

We have therefore shown that an MM-stable matching in $M$ corresponds to an MM-stable matching in $M^{\prime }$. 

\medskip 

Conversely, let $M^{\prime} $ be an MM-stable matching in $I^{\prime} $. We construct an MM-stable matching $M$ in $I$ from $M^{\prime }$ as follows. For all pairs $(r_i, h_{j,k})$ in $M^{\prime } $ such that $h_{j} \in H$ and $1\leq k\leq c_j$, add $(r_{i}, h_{j})$ to $M$. Since there are at most $c_j$ clones of $h_j$ in $I^{\prime }$, $h_j$ cannot be oversubscribed.

All single residents $r_{i}$ who are assigned in $M^{\prime }$ to a hospital clone $h_{j,k}$ are assigned to an acceptable hospital $h_{j}$ in $M$. All couples $(r_{2i-1}, r_{2i})$ jointly assigned to some $(h_{j_1, k_1}, h_{j_2, k_2})$ in $M^{\prime }$ are jointly assigned in $M$ to $(h_{j_1}, h_{j_2})$ (note that possibly $j_1 = j_2$). Since $(h_{j_1, k_1}, h_{j_2, k_2})$ is an acceptable pair of hospital clones for $(r_{2i-1}, r_{2i})$ in $I^{\prime }$, $(h_{j_1}, h_{j_2})$ must be an acceptable pair of hospitals for $(r_{2i-1}, r_{2i})$ in $I$. Therefore $M$ is a matching in $I$.

We now require to prove that $M$ is MM-stable in $I$. Suppose not. Then there is an MM-blocking pair of $M$ in $I$.

\noindent\emph{Case (1)}: A single resident $r_i$ and hospital $h_{j}$ MM-block $M$ in $I$. 
%
%\begin{center}
%\begin{tabular}[t]{lcrcccl}
%
%
%$r_{i}$ 	&	:	&	\myline	&	$h_{j}$		&		\myline	& $M(r_i)$	& \myline \\
%
%
%
%\\
%
%
%$h_{j}$		&	:	&	\myline	&	$r_{i}$ &	 	\myline	&	$r_p \in M(h_j)$		& \myline	\\
%
%
%	
%\\
%
%
%\end{tabular}
%\end{center}
Then resident $r_i$ is unassigned or prefers $h_{j}$ to $M(r_i)$ and also $h_{j}$ is under-subscribed or prefers $r_i$ to some $r_p \in M(h_j)$ in $M$. If $h_j$ is under-subscribed in $M$ then some $h_{j,k}$ is under-subscribed in $M^{\prime }$. Otherwise, if $h_j$ prefers $r_i$ to some $r_p\in M(h_j)$ then let $k ~ (1\leq k\leq c_j)$ be such that $r_p = M^{\prime }(h_{j,k})$  From the construction, this would imply that in $I^{\prime }$, $r_i$ is unassigned or prefers $h_{j,k}$ to $M^{\prime }(r_i)$ and $h_{j,k}$ is also either under-subscribed or prefers $r_i$ to $r_p =M^{\prime }(h_{j,k})$, and thus $(r_i, h_{j,k})$ MM-blocks $M^{\prime }$ in $I^{\prime }$, a contradiction.

\noindent\emph{Case (2)}: Resident couple $(r_{2i-1}, r_{2i})$ MM-blocks $M$ in $I$ with $(h_{j_1}, h_{j_2})$ and either $r_{2i-1} \in M(h_{j_1})$ or $r_{2i} \in M(h_{j_2})$
%
%\begin{center}
%\begin{tabular}[t]{lcrcccl}
%
%
%$(r_{2i-1}, r_{2i})$ 	&	:	&	\myline	&	$(h_{j_1}, h_{j_2})$		&		\myline	&	($M(r_{2i-1}), h_{j_2})$	&	\myline  \\
%
%\\
%
%
%\end{tabular}
%\end{center}
%
%
%\begin{center}
%\begin{tabular}[t]{llcrcccl}
%
%
% & $h_{j_1}$		&	:	&	\myline	&	$r_{2i-1}$ 	&	 	\myline	&	$r_p\in M(h_{j_1})$	& \myline 	\\
%
%
%
%\\
%
%
%\end{tabular}
%\end{center}
Consider the case that $M(r_{2i})=h_{j_2}$. Then $(r_{2i-1},r_{2i})$ prefers $(h_{j_1}, h_{j_2})$ to $(M(r_{2i-1}), h_{j_2})$. Hence, $h_{j_1}$ is either under-subscribed in $M$ or prefers $r_{2i-1}$ to some $r_p\in M(h_{j_1})$ in $M$. If $h_{j_1}$ is under-subscribed in $M$ then there must be a hospital clone $h_{j_1, k_1}$ for some $k_1 ~ (1\leq k_1\leq c_{j_1})$ which is under-subscribed in $M^{\prime }$. Further, since $r_{2i}$ is assigned to $h_{j_2}$ in $M$ there must be a hospital clone $h_{j_2, k_2}$ for some $k_2 ~ (1\leq k_2\leq c_{j_2})$ assigned to $r_{2i}$ in $M^{\prime }$. Therefore, $(r_{2i-1},r_{2i})$ must MM-block $M^{\prime }$ in $I^{\prime }$ with $(h_{j_1, k_1}, h_{j_2, k_2})$, a contradiction. 

Hence $h_{j_1}$ prefers $r_{2i-1}$ to some $r_p\in M(h_{j_1})$ in $M$. Let $k_1 ~ (1\leq k_1\leq c_{j_1})$ be such that $(r_p , h_{j_1 , k_1} \in M^{\prime }$. Thus $h_{j_1, k_1}$ prefers $r_{2i-1}$ to $r_p$. Also, as before there is some $k_2 ~ (1\leq k_1\leq c_{j_2})$ such that $(r_{2i} , h_{j_2, k_2}) \in M^{\prime }$. Hence $(r_{2i-1}, r_{2i})$ MM-blocks $M^{\prime }$ with $(h_{j_1 , k_1} , h_{j_2 , k_2 })$ in $I^{\prime }$, a contradiction. A similar argument may be applied in the case that $M(r_{2i-1}) = h_{j_1})$.

\noindent\emph{Case (3a)}: Resident couple $(r_{2i-1}, r_{2i})$ MM-blocks $M$ with $(h_{j_1}, h_{j_2})$ in $I$ (where $j_1 \neq j_2$).
%
%\begin{center}
%\begin{tabular}[t]{lcrcccl}
%
%
%$(r_{2i-1}, r_{2i})$ 	&	:	&	\myline	&	$(h_{j_1}, h_{j_2})$		&		\myline	&	$(M(r_{2i-1}), M(r_{2i}))$	&	\myline  \\
%
%
%
%
%\end{tabular}
%\end{center}
%
%
%\begin{center}
%\begin{tabular}[t]{llcrcccl}
%
%
%
%
% & $h_{j_1}$		&	:	&	\myline	&	$r_{2i-1}$ 	&	 	\myline	&	$r_p \in M(h_{j_1})$		& \myline \\
%
%
%
%
%\textbf{and} & $h_{j_2}$		&	:	&	\myline	&	$r_{2i}$ 	&	 \myline	&	$r_q \in M(h_{j_2})$		& \myline\\
%
%\\
%
%
%\end{tabular}
%\end{center}
%
Then $(r_{2i-1}, r_{2i})$ is either jointly unassigned or jointly assigned to a worse pair in $M$ than $(h_{j_1}, h_{j_2})$ and also each of $h_{j_1}$ and $ h_{j_2}$ is either under-subscribed or has a worse partner than $r_{2i-1}$ and $r_{2i}$ respectively amongst their assignees in $M$. 

If $h_{j_1}$ (respectively $h_{j_2}$) is under-subscribed in $M$ then there must be a hospital clone $h_{j_1, k_1}$ for some $k_1 ~ (1\leq k_1\leq c_{j_1})$ (respectively $h_{j_2, k_2}$ for some $k_2 ~ (1\leq k_2\leq c_{j_2})$) that is under-subscribed in $M^{\prime }$. Further, if $h_{j_1}$ (respectively $h_{j_2}$) has amongst its assignees in $M$ a worse partner than $r_{2i-1}$ (respectively $r_{2i}$) then there must be some $h_{j_1, k_1} ~ (1\leq k_1\leq c_{j_1})$ (respectively $h_{j_2, k_2} ~ (1\leq k_2\leq c_{j_1})$) that is assigned to a worse partner than $r_{2i-1}$ (respectively $r_{2i}$) in $M^{\prime }$.

From the construction, this means that in $M^{\prime }$, $(r_{2i-1}, r_{2i})$ is either jointly unassigned or jointly prefers $(h_{j_1, k_1}, h_{j_2, k_2})$  to $M^{\prime }(r_{2i-1}, r_{2i})$ and also each of $h_{j_1, k_1}$ and $h_{j_2, k_2}$ is also either under-subscribed in $M^{\prime }$ or prefers $r_{2i-1}$ and $r_{2i}$ to $M^{\prime }(h_{j_1, k_1})$ and $M^{\prime }(h_{j_2, k_2})$ respectively. Hence $(r_{2i-1}, r_{2i})$ MM-blocks $M^{\prime }$ in $I^{\prime }$ with $(h_{j_1, k_1}, h_{j_2, k_2})$, a contradiction.

\noindent\emph{Case (3b)}: Resident couple $(r_{2i-1}, r_{2i})$ MM-blocks $M$ in $I$ with $(h_{j}, h_{j})$ and $h_{j}$ has two free posts.
%
%\begin{center}
%\begin{tabular}[t]{lcrcccl}
%
%
%$(r_{2i-1}, r_{2i})$ 	&	:	&	\myline	&	$(h_{j_1}, h_{j_1})$		&		\myline	&	$(M(r_{2i-1}), M(r_{2i}))$	&	\myline  \\
%
%
%\end{tabular}
%\end{center}
Then $(r_{2i-1}, r_{2i})$ is either jointly unassigned or jointly assigned in $M$ to a worse hospital pair than $(h_{j}, h_{j})$ in $M$. From the construction this means that in $M^{\prime }$ there are two hospital clones $h_{j, k_1}$ and $h_{j, k_2}$ for some $1\leq k_1 \leq c_j, 1\leq k_2 \leq c_j, k_1 \neq k_2$ that are under-subscribed in $M^{\prime }$. Since $(r_{2i-1}, r_{2i})$ is jointly matched in $M^{\prime }$ to a worse partner than $(h_{j, k_1}, h_{j, k_2})$, $(r_{2i-1}, r_{2i})$ MM-blocks $M^{\prime }$ in $I^{\prime }$ with $(h_{j, k_1}, h_{j, k_2})$, a contradiction.

\noindent\emph{Case (3c)}: Resident couple $(r_{2i-1}, r_{2i})$ MM-blocks $M$ in $I$ with $(h_{j}, h_{j})$ and $h_{j}$ has one free post and prefers at least one of $r_{2i-1}$ or $r_{2i}$ to some $r_p\in M(h_{j_1})$.
%
%\begin{center}
%\begin{tabular}[t]{lcrcccl}
%
%
%$(r_{2i-1}, r_{2i})$ 	&	:	&	\myline	&	$(h_{j_1}, h_{j_1})$		&		\myline	&	$(M(r_{2i-1}), M(r_{2i}))$	&	\myline  \\
%
%
%	
%\\
%
%
%\end{tabular}
%\end{center}
%
%\begin{center}
%\begin{tabular}[t]{llcrcccl}
%
%
%
%
% & $h_{j_1}$		&	:	&	\myline	&	$r_{2i-1}$ 	&	 	\myline	&	$r_p \in M(h_{j_1})$		& \myline \\
%
%
%
%
%\textbf{or} & $h_{j_1}$		&	:	&	\myline	&	$r_{2i}$ 	&	 \myline	&	$r_q \in M(h_{j_2})$		& \myline\\
%
%
%
%	
%\\
%
%
%\end{tabular}
%\end{center}
Also $(r_{2i-1}, r_{2i})$ is either unassigned or jointly assigned to a worse hospital pair than $(h_{j}, h_{j})$.

If $h_j$ prefers $r_{2i-1}$ to $r_p$ in $I$ then $(r_{2i-1}, r_{2i})$ MM blocks $M^{\prime }$ with $(h_{j , k_2} , h_{ j, k_1})$ in $I^{\prime }$, a contradiction. Otherwise $h_j$ prefers $r_{2i}$ to $r_p$ in $I$ so that $(r_{2i-1} , r_{2i})$ MM-blocks $M^{\prime }$ with $(h_{j , k_1} , h_{ j, k_2})$ in $I^{\prime }$, a contradiction.

%From the construction this means that in $M^{\prime }$ there is a hospital clone $h_{j, k_1}$ that is unmatched and a further hospital clone $h_{j, k_2}$ that prefers $r_{2i}$ to $r_p = M^{\prime }(h_{j_1, k_2})$ . Hence $(r_{2i-1}, r_{2i})$ MM-blocks $M^{\prime }$ in $I^{\prime }$ with $(h_{j_1, k_1}, h_{j_1, k_2})$, a contradiction. A similar argument may be applied for the case that $h_{j_1}$ prefers $r_{2i}$ to some $r_p\in M(h_{j_1})$. 

\noindent\emph{Case (3d)}: Resident couple $(r_{2i-1}, r_{2i})$ MM-blocks $M$ with $(h_{j}, h_{j})$ in $I$ where $h_{j}$ is full and $h_{j}$ prefers $r_{2i-1}$ to some $r_p \in M(h_{j})$ and also prefers $r_{2i}$ to some $r_q \in M(h_{j}) \setminus \{ r_p \}.$
%
%
%\begin{center}
%\begin{tabular}[t]{lcrcccl}
%
%$(r_{2i-1}, r_{2i})$ 	&	:	&	\myline	&	$(h_{j_1}, h_{j_1})$		&		\myline	&	$(M(r_{2i-1}), M(r_{2i}))$	&	\myline  \\
%
%
%\\
%
%
%
%\end{tabular}
%\end{center}
%
%
%
%\begin{center}
%\begin{tabular}[t]{llll}
%
%
%
%
%& $h_{j_1}$		&	:	&	\shortline		$r_{2i-1}$ 	 	\shortline	 $M(r_{2i-1})$  \shortline 	$r_{2i}$	\shortline		 $M(r_{2i})$  \shortline	\\
%\\
%
%\textbf{or} & $h_{j_1}$		&	:	&	\shortline		$r_{2i-1}$ 	 	\shortline	$r_{2i}$   \shortline 	$M(r_{2i-1})$	\shortline		 $M(r_{2i})$  \shortline	\\
%
%
%\end{tabular}
%\end{center}
Then, in $I$, $(r_{2i-1}, r_{2i})$ is either unassigned or jointly assigned to a worse hospital pair than $(h_{j}, h_{j})$

%and $h_{j_1}$ prefers $r_{2i-1}$ to some member of $M(h_{j_1})$ and $h_{j_1}$ prefers $r_{2i}$ to some member of $M(h_{j_1}) \setminus \{r_{2i-1}\}$ in $M$.

Since $h_{j}$ prefers $r_{2i-1}$ in $I$ to one of its assignees in $M$ there must be a hospital clone $h_{j, k_1}$ for some $k_1 ~ (1\leq k_1\leq c_{j})$ that is assigned to a worse partner, $r_p$ than $r_{2i-1}$ in $M^{\prime }$. Further, since $h_{j}$ prefers $r_{2i}$ to some member of $M(h_{j}) \setminus \{r_{p}\}$ there must be a hospital clone $h_{j, k_2}$ for some $k_2 ~ (1\leq k_2 \leq c_{j})$ assigned to a partner worse than $r_{2i}$.

From the construction this means that in $I^{\prime }$, there are two distinct hospital clones $h_{j, k_1}$ and $h_{j,k_2}$ such that $(r_{2i-1}, r_{2i})$ jointly prefers $(h_{j, k_1}, h_{j,k_2})$ to $(M^{\prime }(h_{j,k_1}), M^{\prime }(h_{j, k_2}))$. Also, $h_{j, k_1}$ prefers $r_{2i-1}$ to $M^{\prime }(h_{j, k_1})$ and $h_{j, k_2}$ prefers $r_{2i}$ to $M^{\prime }(h_{j , k_2})$. Hence $(r_{2i-1}, r_{2i})$ MM-blocks $M^{\prime }$ in $I^{\prime }$ with $(h_{j , k_1}, h_{j , k_2})$, a contradiction.
\end{proof}

\begin{corollary}
\label{corollary:cloningunderBISdoesntwork}

If $I^{\prime }$ admits a BIS-stable matching it need not be the case that $I$ admits a BIS-stable matching.

\end{corollary}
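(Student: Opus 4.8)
The plan is to prove this corollary not by a general argument but by exhibiting a concrete counterexample instance $I$ and tracking it through the cloning construction. Specifically, I would take $I$ to be the instance of {\sc hrc} shown in Figure~\ref{instance:MMbutnostableBIS} (the example due to Irving), in which $h_1$ has capacity $2$ and $h_2$ has capacity $1$. As already established in Section~\ref{section:MMnotequalBIS}, this instance admits exactly three matchings $M_1, M_2, M_3$, of which $M_2 = \{(r_3,h_1),(r_4,h_1)\}$ is MM-stable, while each of $M_1, M_2, M_3$ is BIS-blocked; hence $I$ admits no BIS-stable matching. This is the half of the counterexample that is ready-made.

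Next I would apply the hospital cloning construction of Section~\ref{section:cloning} to $I$, obtaining the one-to-one instance $I^{\prime}$ by replacing $h_1$ with its two clones $h_{1,1}, h_{1,2}$ and $h_2$ with the single clone $h_{2,1}$, and rewriting the couples' lists via the sequences $L_1$ and $L_2$ as prescribed. The crucial point is that $I$ \emph{does} admit an MM-stable matching, namely $M_2$, so Lemma~\ref{lemma:cloningunderMMworks} applies and guarantees that $I^{\prime}$ admits an MM-stable matching. Concretely, this is the image $M_2^{\prime} = \{(r_3, h_{1,1}),(r_4, h_{1,2})\}$ obtained by distributing $h_1$'s assignees $r_3, r_4$ among its clones in increasing order of $h_1$'s rank.

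Finally, I would invoke the fact recorded after Definition~\ref{stability:BIS} that when every hospital has capacity $1$ the notions of MM-stability and BIS-stability coincide, since no couple can then list a pair $(h_k, h_k)$. As all hospitals of $I^{\prime}$ have capacity $1$, the MM-stable matching $M_2^{\prime}$ is therefore also BIS-stable in $I^{\prime}$. Thus $I^{\prime}$ admits a BIS-stable matching while $I$ does not, which is exactly the assertion of the corollary.

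The argument requires essentially no computation, so the main thing to verify carefully is that the two facts about the example instance are genuinely in hand: that $I$ admits an MM-stable matching (so that the lemma can be invoked) and that $I$ admits no BIS-stable matching (so that the correspondence fails). The only mild subtlety, and the step I expect to need the most attention, is confirming that the capacity-$1$ equivalence of MM- and BIS-stability is legitimately available for $I^{\prime}$: this hinges on $I^{\prime}$ containing no couple-preference of the form $(h_{j,k}, h_{j,k})$, which holds because the sequence $L_2$ explicitly excludes pairs of identical clones.
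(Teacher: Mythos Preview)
Your proposal is correct and uses the same counterexample instance as the paper. The paper's proof likewise takes $I$ to be the instance of Figure~\ref{instance:MMbutnostableBIS}, constructs $I'$ by cloning $h_1$ into $h_{1,1},h_{1,2}$, and exhibits $M' = \{(r_3,h_{1,1}),(r_4,h_{1,2})\}$ as a BIS-stable matching in $I'$.

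The one difference worth noting is in how BIS-stability of $M'$ is established. The paper simply writes out $I'$ explicitly and asserts that $M'$ is BIS-stable, leaving the (easy) verification to the reader. You instead argue indirectly: $M_2$ is MM-stable in $I$, so by Lemma~\ref{lemma:cloningunderMMworks} its image $M_2'$ is MM-stable in $I'$, and then the capacity-$1$ equivalence of MM- and BIS-stability (together with your observation that $L_2$ excludes identical-clone pairs, so the equivalence genuinely applies to $I'$) yields BIS-stability. This is a cleaner justification and has the pleasant feature of reusing the lemma you just proved, whereas the paper's version is more self-contained but leaves a small check implicit. Either way the example and the conclusion are identical.
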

\begin{proof}

Let $I$ be an instance of {\sc hrc} as shown in Figure \ref{instance:MMbutnostableBIS}. Clearly, the instance admits no BIS-stable matching. We construct the instance $I^{\prime }$ from $I$ exactly as in Lemma \ref{lemma:cloningunderMMworks} by creating two distinct hospital clones $h_{1 , 1}$ and $h_{1 , 2}$ to represent the two posts in $h_1$ and amending the preference lists of the couples as shown in Figure \ref{instance:cloningbreaksdown}.

The instance $I^{\prime }$ supports the matching $M^{\prime } = \{ (r_3 , h_{1, 1}) , (r_4 , h_{1,2}) \}$ which is BIS-stable. Since $I$ admits no BIS-stable matching, $M^{\prime }$ clearly has no corresponding BIS-stable matching in $I$ and the cloning method described above breaks down under BIS-stability.

\begin{figure}[h]
\begin{center}
\begin{tabular}[t]{lcrccccl}

\multicolumn{8}{c}{Residents} \\
\hline
\\

 & $(r_1, r_2)$		&	:		&	$(h_{1 ,1}, h_{1, 2})$ & $(h_{1, 2}, h_{1 , 1})$		& & &	\\
 
 & $(r_3, r_4)$		&	:		&	$(h_{1 ,1}, h_{1, 2})$ & $(h_{1, 2}, h_{1 , 1})$  &	 $(h_{1, 1}, h_2)$		&			 $(h_{1, 2}, h_2)$ & 	\\

\\

\multicolumn{8}{c}{Hospitals} \\
\hline
\\

& $h_{1, 1}$ 	&:$ ~~~	1 ~~~ $:& $r_3$	&	$r_1~~~~~$		&	$r_2~~~~~~$		& $r_4~~~~~$ &    \\

& $h_{1, 2}$ 	&:$ ~~~	1 ~~~ $:& $r_3$	&	$r_1~~~~~$		&	$r_2~~~~~~$		& $r_4~~~~~$ &    \\

& $h_2$ 	&:$ ~~~	1 ~~~ $:& $r_4$				&			&	  &  & \\

\end{tabular}
\end{center}
\caption{An instance of HRC that shows that the cloning method breaks down under BIS-stability.}
\label{instance:cloningbreaksdown}
\end{figure}
\end{proof}

\section{Conclusions}
\label{section:conclusions}

The new NP-completeness results presented in this work suggest that an efficient algorithm for \small $(2,2)$\normalsize {\sc -hrc} is unlikely. However, we conclude that the IP model presented in this paper performs well when finding a maximum cardinality stable matching in {\sc hrc} instances that are similar to those arising in the SFAS application. 

It remains open to investigate the performance of the model as we increase the size of the instance substantially beyond that of the SFAS application. It would also be of further interest to investigate other modelling frameworks, for example involving pseudoboolean solvers or CP strategies.

The IP model for {\sc hrc} presented here might also be updated to produce maximum cardinality stable matchings under other stability definitions, most obviously BIS-stability. It would be of interest to compare an IP model producing exact maximum cardinality BIS-stable matchings against {\sc hrc} heuristics such as those compared and contrasted by Bir\'o et al \cite{BIS11}.

%It also remains open to establish the frontier between polynomial time solvability and NP-completeness for further restricted variants of {\sc hrc} where the length of the preference list of each single resident, couple and hospital is 1 or 2.

\subsubsection*{Acknowledgement.}
We would like to thank the anonymous reviewers of submitted versions of this work for their valuable comments.

\bibliographystyle{plain}
\bibliography{matching}
\end{document}